\title{Streaming Stochastic Submodular Maximization with On-Demand User Requests}
\author{%
  Honglian Wang\\
  KTH Royal Institute of Technology \\
  Stockholm, Sweden  \\
  \And
  Sijing Tu\\
  KTH Royal Institute of Technology \\
  Stockholm, Sweden  \\
  \And
  Lutz Oettershagen\\
  University of Liverpool \\
  Liverpool, UK  \\  
  \And
  Aristides Gionis \\
  KTH Royal Institute of Technology \\
  Stockholm, Sweden \\
}
  \def\tikz@width{#1}%
\newtheorem{theorem}{Theorem}[section]
\newtheorem{lemma}[theorem]{Lemma}
\newtheorem{problem}[theorem]{Problem}
\newtheorem{observation}[theorem]{Observation}
\newtheorem{definition}{Definition}[section]
\def\m+#1{\ensuremath{\mathbf{#1}}\xspace}
\def\v+#1{\ensuremath{\mathbf{#1}}\xspace}
\def\t+#1{\ensuremath{\tilde{\mathbf{#1}}}\xspace}
\def\e+#1{\ensuremath{#1}\xspace}
\def\c+#1{\ensuremath{\mathcal{#1}}\xspace}
\def\arrow+#1{\ensuremath{\overrightarrow{#1}}\xspace}
\def\set+#1{\ensuremath{\mathbb{#1}}\xspace}
\def\arrow+#1{\ensuremath{\overrightarrow{#1}}\xspace}
\newcommand{\pr}{\ensuremath{\mathrm{Pr}}\xspace}
\newcommand{\Exp}{\ensuremath{\mathbb{E}}\xspace}
\DeclareMathOperator*{\argmin}{arg\,min}
\DeclareMathOperator*{\argmax}{arg\,max}
\newcommand{\bigO}{\ensuremath{\mathcal{O}}\xspace}
\newcommand{\Real}{\ensuremath{\mathbb{R}}\xspace} %
\newcommand{\opt}{\ensuremath{\mathbf{OPT}}\xspace}
\newcommand{\stream}{\ensuremath{\mathbf{S}}\xspace}
\newcommand{\firststream}{\ensuremath{\bar{\mathbf{S}}}\xspace}
\newcommand{\secondstream}{\ensuremath{\bar{\mathbf{S}}}\xspace}
\newcommand{\finalstream}{\ensuremath{\hat{\mathbf{S}}}\xspace}
\newcommand{\finalUniverseSet}{\ensuremath{\hat{\mathcal{V}}}\xspace}
\newcommand{\finaloptSet}{\ensuremath{\hat{\mathcal{O}}}\xspace}
\newcommand{\activeIndexes}{\ensuremath{\mathcal{T}}\xspace}
\newcommand{\edit}[1]{{\color{black}#1}}
\newcommand{\black}[1]{{\color{black}#1}}
\newcommand{\cov}{\ensuremath{f}\xspace}
\newcommand{\scov}{\ensuremath{f}\xspace}
\newcommand{\topicSet}{\ensuremath{{C}}\xspace}
\newcommand{\numtopics}{\ensuremath{d}\xspace}   %
\newcommand{\universeSet}{\ensuremath{\mathcal{V}}\xspace}
\newcommand{\presentSetFix}{\ensuremath{\mathcal{S}}\xspace} %
\newcommand{\presentSetTempFix}{\ensuremath{\mathcal{A}}\xspace} %
\newcommand{\presentSetGreedyFix}{\ensuremath{\mathcal{G}}\xspace} %
\newcommand{\presentSetParallelFix}{\ensuremath{\mathcal{B}}\xspace} %
\newcommand{\collectFix}{\ensuremath{V}\xspace} %
\newcommand{\allpresentSet}{\ensuremath{\mathcal{S}}\xspace}
\newcommand{\optpresentSetFix}{\ensuremath{\mathcal{O}}\xspace}
\newcommand{\allpresentSetOpt}{\ensuremath{\mathcal{O}}\xspace}
\newcommand{\allpresentSetGreedy}{\ensuremath{\mathcal{G}}\xspace}
\newcommand{\step}{\ensuremath{\ensuremath{\delta}}\xspace} %
\newcommand{\request}{\ensuremath{r}\xspace} %
\newcommand{\access}{\ensuremath{\tau}\xspace} %
\newcommand{\topic}{\@ifnextchar\bgroup\topic@i{\ensuremath{c}\xspace}}
\newcommand{\topic@i}[1]{\ensuremath{c_{#1}}\xspace}
\newcommand{\collect}{\@ifnextchar\bgroup\collect@i{\collectFix}}
\newcommand{\collect@i}[1]{\ensuremath{\collectFix_{#1}}\xspace}
\newcommand{\presentSet}{\@ifnextchar\bgroup\presentSet@i{\presentSetFix}}
\newcommand{\presentSet@i}[1]{\ensuremath{\presentSetFix^{#1}}\xspace}
\newcommand{\presentSetTemp}{\@ifnextchar\bgroup\presentSetTemp@i{\presentSetTempFix}} %
\newcommand{\presentSetTemp@i}[1]{\ensuremath{\presentSetTempFix^{#1}}\xspace}
\newcommand{\presentSetGreedy}{\@ifnextchar\bgroup\presentSetGreedy@i{\presentSetGreedyFix}} %
\newcommand{\presentSetGreedy@i}[1]{\ensuremath{\presentSetGreedyFix^{#1}}\xspace}
\newcommand{\presentSetParallel}{\@ifnextchar\bgroup\presentSetParallel@i{\presentSetParallelFix}} %
\newcommand{\presentSetParallel@i}[1]{\ensuremath{\presentSetParallelFix^{#1}}\xspace}
\newcommand{\presentSetOpt}{\@ifnextchar\bgroup\presentSetOpt@i{\optpresentSetFix}} %
\newcommand{\presentSetOpt@i}[1]{\ensuremath{\optpresentSetFix^{#1}}\xspace}
\newcommand{\presentSetBestGuess}{\@ifnextchar\bgroup\presentSetBestGuess@i{\mathcal{E}}} %
\newcommand{\presentSetBestGuess@i}[1]{\ensuremath{\presentSetBestGuess^{#1}}\xspace}
\newcommand{\Nset}{\@ifnextchar\bgroup\Nset@i{\mathcal{N}}} %
\newcommand{\Nset@i}[1]{\ensuremath{\mathcal{N}^{#1}}\xspace}
\newcommand{\Oset}{\@ifnextchar\bgroup \Oset@i{\mathcal{O}}} %
\newcommand{\Oset@i}[1]{\ensuremath{\mathcal{O}^{#1}}\xspace}
\newcommand{\X}{\ensuremath{\mathcal{X}}\xspace} %
\newcommand{\Y}{\ensuremath{\mathcal{Y}}\xspace} %
\newcommand{\xtuple}{\@ifnextchar\bgroup\xtuple@i{\ensuremath{\mathbf{X}}\space}}
\newcommand{\xtuple@i}[1]{\ensuremath{\mathbf{X}^{(#1)}}\xspace}
\newcommand{\ytuple}{\@ifnextchar\bgroup\ytuple@i{\ensuremath{\mathbf{Y}}\space}}
\newcommand{\ytuple@i}[1]{\ensuremath{\mathbf{Y}^{(#1)}}\xspace}
\newcommand{\alg}{\ensuremath{\mathcal{A}}\xspace} %
\newcommand{\budget}{\ensuremath{k}\xspace}
\newcommand{\numvisit}{\ensuremath{T}\xspace} 
\newcommand{\numvisitest}{\ensuremath{\numvisit'}\xspace} 
\newcommand{\numEstimateVisit}{\ensuremath{T'}\xspace} 
\newcommand{\closestT}{\ensuremath{T^*}\xspace} 
\newcommand{\guessT}{\ensuremath{\tilde{T}}\xspace} 
\newcommand{\guessTSet}{\ensuremath{\mathcal{P}}\xspace}
\NewDocumentCommand{\OptExact}{o}{%
	\ensuremath{O^{*}%
		\IfValueT{#1}{_{#1}}%
	}\xspace%
}
\let\oldnl\nl%
\newcommand{\nonl}{\renewcommand{\nl}{\let\nl\oldnl}}%
\newcommand{\pp}{\texttt{++}\xspace}
\newcommand{\stralg}{\textsc{Storm}\xspace}
\newcommand{\stralgpp}{\textsc{Storm}{\pp}\xspace}
\newcommand{\KuaiRec}{KuaiRec\xspace}
\newcommand{\Yahoo}{Yahoo\xspace}
\newcommand{\Beer}{Beer\xspace}
\newcommand{\Anime}{Anime\xspace}
\newcommand{\rcv}{RCV1\xspace}
\newcommand{\amazon}{Amazon\xspace}
\newcommand{\greedyalg}{\textsc{LMGreedy}\xspace}
\newcommand{\sieve}{\textsc{Sieve}{\pp}\xspace}
\newcommand{\preemption}{\textsc{Preemption}\xspace}
\newcommand{\chekuriExact}{\textsc{Storm$(T)$}\xspace}
\newcommand{\chekuriUp}{\textsc{Storm$(T')$}\xspace}
\newcommand{\parallelalgo}{\textsc{Storm}{\pp}\xspace}
\newcommand{\ssmor}{\emph{S3MOR}\xspace}
\newcommand{\ssmm}{\emph{S2MM}\xspace}
\begin{document}
\maketitle

\begin{abstract}

We explore a novel problem in streaming submodular maximization, 
inspired by the dynamics of news-recommendation platforms.
We consider a setting where users can visit a news web\-site at any time, and upon each visit, 
the web\-site must display up to $k$ news items. 
User interactions are inherently stochastic: each news item presented to the user 
is consumed with a certain acceptance probability by the user, 
and each news item covers certain topics. 
Our goal is to design a streaming algorithm that maximizes the expected total topic coverage.
To address this problem, we establish a connection to submodular maximization subject to a matroid constraint.
We show that we can effectively adapt previous methods to address our problem when the number of user visits is known in advance 
or linear-size memory in the stream length is available.
However, in more realistic scenarios where only an upper bound on the visits and sublinear memory is available, the algorithms fail to guarantee any bounded performance.
To overcome these limitations, we introduce a new online streaming algorithm that 
achieves a competitive ratio of $1/(8\delta)$,  where~$\delta$ controls the approximation quality. Moreover, it requires only a single pass over the 
stream, and uses memory independent of the stream length. 
Empirically, our algorithms consistently outperform the baselines.

\end{abstract}
	
\section{Introduction}
\label{sec:introduction}

Streaming submodular maximization has been extensively studied under various feasibility constraints, including \emph{matchoid}~\cite{chakrabarti2015submodular,chekuri2015streaming, feldman2018less}, \emph{matroid}~\cite{el2023fairness}, \emph{knapsack}~\cite{huang2020streaming, huang2021improved} and $k$-\emph{set}~\cite{haba2020streaming} constraints. In the classical setting, the goal is to select a single subset of items from a data stream and output it when the stream ends. 
However, this classical framework falls short in many real-world scenarios, for example, when seeking to maximize \emph{content coverage} in online news recommendation. In such a setting, users visit a news website multiple times and expect to receive a relevant set of articles on each visit. Meanwhile, new articles continuously arrive at the content-provider's server in a stream fashion. 
The long-term objective is to maximize the total content coverage
across user visits. 
Existing streaming submodular maximization algorithms are ill-suited for this use case, 
as they are designed to produce a single 
solution and do not support multiple or on-demand user interactions 
during the~stream.\looseness=-1

Motivated by this limitation, we introduce a novel problem called \emph{Streaming Stochastic Submodular Maximization with On-demand Requests} (\ssmor). While the setting is broadly applicable to streaming submodular maximization tasks with multiple output requirements, we focus on the news recommendation task for illustration.
Therefore, in \ssmor, the data stream consists of incoming news items that the system aggregates into a personalized news website.
Each item $\collect{i}$ covers a set of topics and is associated with a probability $p_i$,
representing the probability that a user will click on the item when it is presented.
A user may visit the system arbitrarily often and at any time during the stream.
At each visit, the system must select and present a subset of at most $k$ items
from those that have arrived in the stream up to that point.
If a user clicks on item $\collect{i}$, they are exposed to all topics associated with that item.
The objective is to maximize the expected total number of unique topics covered by all user clicks across all visits.
This setting is challenging due to its online character:\looseness=-1

\textbf{\textbullet~~On-demand}: At any point, the system must be ready to output a subset of news items.\looseness=-1

\textbf{\textbullet~~Irrevocability}: Once a subset is presented to the user, it cannot be modified.

To the best of our knowledge, no existing algorithm directly addresses the \ssmor problem. 

We evaluate the performance of online streaming algorithms that make irrevocable decisions using the notion of \emph{competitive ratio}. 
An algorithm is called \emph{$c$-competitive} if, for every input stream $\sigma$, the value $\alg(\sigma)$ achieved by algorithm $\alg$ satisfies
$
\frac{\alg(\sigma)}{\opt(\sigma)} \geq c,
$
where $\opt(\sigma)$ denotes the value of an optimal offline solution with full access to the input stream.

\edit{

To evaluate the performance of our algorithms, we clarify the notions of \emph{response time complexity}, \emph{time complexity}, and \emph{space complexity}.

\textbf{Response-time complexity vs.\ time complexity.}
For the \ssmor problem, we assume there exists an oracle that can evaluate the objective function $f(S)$ for any feasible candidate set $S$.
We define the \emph{response time complexity} of an algorithm as the worst-case 
number of oracle calls needed to produce a recommendation result of size 
$k$ after the user submits a request. This metric captures the latency users 
experience. In contrast, \emph{time complexity} is defined as the worst-case number of oracle calls required by the algorithm over the entire stream.

}

\edit{

\textbf{Space complexity.}
In practice, all users share common storage for original news items. We distinguish between two storage types: \emph{shared storage} containing original items accessible to all users (not counted toward space complexity), and \emph{per-user storage} for additional memory each user needs to compute personalized recommendations, including document identifiers and user-item association probabilities. Throughout this paper, "space complexity" refers exclusively to per-user storage.

}

\begin{table}
\centering
\small
\caption{Comparison of our algorithms.
$N\in\mathbb{N}$ denotes the stream length, $k\in\mathbb{N}$ number of news items to present per user access, $T\in\mathbb{N}$ and $T'\in\mathbb{N}$ are the exact number and upper bound of user accesses, and $\delta\in\mathbb{N}$ an approximation parameter.}
\label{tab:algorithm_comparison}
\resizebox{0.9\linewidth}{!}{ \renewcommand{\arraystretch}{1} \setlength{\tabcolsep}{8pt}
\begin{tabular}{lccccc}
\toprule
\textbf{Algorithm} & \textbf{Competitive ratio} & \textbf{Time} & \textbf{Space} & \textbf{\edit{Response time}}  \\
\midrule
\greedyalg & $\nicefrac{1}{2}$ & $\bigO(N\numvisitest k)$ & $\Theta(N+k\numvisit)$  & \edit{$\bigO(Nk)$} \\
\stralg & $\nicefrac{1}{4(\numvisitest - \numvisit + 1)}$ & $\bigO(N\numvisitest k)$ & $\bigO(\numvisitest k)$  & \edit{$\bigO(\numvisitest)$}  \\
\stralgpp & $\nicefrac{1}{(8\delta)}$ & $\bigO(N\numvisit^{'2} k/ \delta )$ & $\bigO(\numvisit^{'2} k / \delta)$  & \edit{$\bigO(\numvisitest)$} \\
\bottomrule
\end{tabular}}
\end{table}

\textbf{Our contributions.}
\edit{We reduce \ssmor to the problem of \emph{streaming submodular maximization under a partition 
matroid constraint}, which allows us to leverage existing online streaming 
algorithms. Based on this reduction, we propose three algorithms and evaluate 
them across multiple dimensions: competitive ratio, space complexity, time 
complexity, and response time complexity. We state our main results as follows:}
\begin{itemize}
    \item We prove that provided sufficient memory to store the whole stream, a greedy algorithm, \greedyalg, can achieve the best possible $1/2$ competitive ratio for the \ssmor~problem. 
    \item %
    If the exact number $\numvisit$ of user visits is known, the reduction preserves the competitive ratio of the underlying online streaming submodular algorithm. 
    \item %
    However, in realistic scenarios, the number of user visits $\numvisit$ is unknown and we need to guess the number of visits as $\numvisitest>\numvisit$.
    For this case, we introduce an algorithm \stralg, which has a competitive ratio of $\frac{1}{4(\numEstimateVisit-\numvisit + 1)}$. 
    \item %
    To achieve a better competitive ratio, 
    we propose a second algorithm \parallelalgo that makes multiple concurrent guesses of $\numvisit$, maintaining one solution per guess. It then greedily aggregates outcomes across guesses. This algorithm achieves a competitive ratio of $\nicefrac{1}{(8\delta)}$, where $\delta$ is a tuna\-ble parameter that balances efficiency and solution quality. The space and time complexity increase by a factor of $\delta \numvisitest$ relative to the underlying 
    streaming algorithm. 
    \item \edit{\stralg and \stralgpp achieve worse competitive ratios than \greedyalg, but they utilize much smaller space and have significantly better response time.}
    
    \end{itemize}

We summarize the results for all algorithms mentioned above in \Cref{tab:algorithm_comparison}. 
We validate our approach through empirical experiments on both large-scale and small-scale real-world datasets. Our results show the effectiveness of our algorithms in terms of coverage quality and memory usage, in comparison to relevant baselines.

\textbf{Notation and background.}
We assume familiarity with constrained submodular maximization and competitive analysis. 
We provide formal definitions in~\Cref{sec:preliminaries}.

\subsection*{Related work}\label{sec:related_work}

Submodular optimization plays a fundamental role in machine learning, combinatorial optimization, and data analysis~\cite{fujishige2005submodular,krause2014submodular,liu2020submodular}, capturing problems with diminishing returns,  and finding applications in data summarization~\cite{mirzasoleiman2016fast,wei2015submodularity}, non-parametric learning~\cite{gomes2010budgeted,zoubin2013scaling}, recommendation systems~\cite{alaei2021maximizing,mehrotra2023maximizing,nassif2018diversifying,yue2011linear}, influence maximization~\cite{kempe2003maximizing}, and network monitoring~\cite{fujishige2005submodular}. 
The classical greedy algorithm achieves a \((1 - 1/e)\)-approximation for monotone submodular maximization under cardinality constraints~\cite{nemhauser1978analysis}, 
and extends to matroid constraints~\cite{calinescu2011maximizing,fisher1978analysis}. 

We study stochastic submodular coverage functions, where the objective depends on random realizations but selections must be made non-adaptive\-ly. This contrasts with adaptive submodularity~\cite{golovin2011adaptive,gotovos2015non,mitrovic2019adaptive,tang2020adaptive}, which models sequential decision-making with feedback and supports strong guarantees for greedy strategies. Related work includes stochastic variants of set and submodular cover, where the objective itself is random; notable examples analyze adaptivity gaps and approximation bounds under oracle access~\cite{goemans2006stochastic,grandoni2008set}.

Classical greedy algorithms for (constrained) submodular optimization~\cite{leskovec2007cost,mirzasoleiman2015lazier,nemhauser1978analysis} assume random access to the full dataset, which is infeasible in large-scale or streaming settings. This challenge has motivated streaming algorithms that make irrevocable decisions under memory constraints~\cite{badanidiyuru2014streaming,buchbinder2019online,el2023fairness,huang2021improved,kazemi2019submodular,mirzasoleiman2018streaming}. 
\citet{buchbinder2019online} apply the preemption paradigm, i.e.,~replacing elements with better ones, to achieve a tight \((1/2 - \varepsilon)\)-approximation for monotone functions under cardinality constraints. Sieve{\pp}~\cite{kazemi2019submodular} uses threshold\-ing to achieve a \((1/2 - \varepsilon)\)-approximation under cardinality constraints using only \(\mathcal{O}(k)\) memory.
Extensions to matroid and partition constraints include~\cite{chakrabarti2015submodular,chekuri2015streaming,feldman2018less};
for instance, \citet{chekuri2015streaming} gives a $\nicefrac{1}{4p}$-competitive online streaming algorithm for monotone $p$-matchoids.
These approaches, however, do not handle stochastic item utilities or repeated, unpredictable user access. We address this gap by studying stochastic submodular coverage under partition constraints in a streaming model with limited memory and unknown user access requests.\looseness=-1

To extend submodularity to sequences, \citet{alaei2021maximizing} introduce sequence-submodularity and sequence-monotonicity, showing that a greedy algorithm achieves a \((1 - 1/e)\)-approximation, with applications in online ad allocation. Similarly, \citet{ohsaka2015monotone} propose \(k\)-submodular functions to model optimization over \(k\) disjoint subsets, providing constant-factor approximations for monotone cases. 
In contrast to their settings, our problem involves making irrevocable decisions as an incoming stream of items progresses, requiring each of the 
$k$ subsets to be constructed incrementally rather than all at once after seeing the entire~input.

Finally, diversity-aware recommendation and coverage problems are closely related, as many systems must recommend item sets rather than single items, motivating submodular approaches to maximize coverage and user satisfaction~\cite{alhijawi2022survey,wu2019recent}.
\citet{ashkan2015optimal} propose a greedy offline method for maximizing utility under a submodular diversity constraint. 
\citet{yue2011linear} use submodular bandits for diversified retrieval, learning user interaction probabilities adaptively rather than relying on a known oracle, as in our setting.

\section{Streaming submodular maximization with on-demand requests}
\label{sec:problem_formulation}

\paragraph{Problem setting.}
\edit{Let $\universeSet = \{\collect{1}, \collect{2}, \ldots, \collect{N}\}$ 
be an ordered set of $N$ news items appearing in the stream, where 
item $\collect{i}$ arrives at time step $i$.}
Each news item $\collect{i} \in \universeSet$ covers a subset of topics from a predefined set of topics $\topicSet = \{\topic{1}, \topic{2}, \ldots, \topic{d}\}$, where $d$ denotes the total number of topics. 
Formally, $\collect{i} \subseteq \topicSet$ for each $i \in [N]$. 
Additionally, each news item $\collect{i}$ is associated with a probability\footnote{\edit{These 
probabilities can be interpreted as measures of user\text{-}item preference or engagement, and are 
typically estimated from historical click\text{-}through data using machine learning models such as 
logistic regression, factorization machines, or neural networks trained on user behavior 
patterns and item features. We assume they are static and remain constant over time.}} $p_i \in [0, 1]$, 
modeling the probability that a user will click the item when presented to them. 
When a user clicks a news item, we say that the user is exposed to all topics in the news item.

Moreover, we assume that we observe a binary variable $\access_i\in\{0,1\}$ 
indicating whether the user accesses the system at time step~$i$. 
Specifically, $\access_i = 1$ denotes an access at time step $i$, and $\access_i = 0$ otherwise.
Whenever $\access_i = 1$, the system should present at most $k$ news items from the available news items
$\{\collect{1},\ldots,\collect{i}\}$.
Now, suppose a user visits the system $\numvisit$ times. 
The system outputs $\numvisit$ sets $\presentSet{1}, \cdots, \presentSet{\numvisit}$, 
where $\presentSet{t}$ represents the items presented to the user at their $t$-th visit, for $t\in[\numvisit]$.
We allow the same item to be presented at multiple visits; however, we treat each appearance as a distinct copy.
We make this choice as the user has multiple chances to click on an item
if the item is presented multiple times.
Therefore, for any $t \neq s$, we have $\presentSet{t} \cap \presentSet{s} = \emptyset$. 
We define the set of all items presented to the user over the entire sequence of visits as 
$\allpresentSet = \bigcup_{t=1}^{\numvisit} \presentSet{t}$.

\textbf{Objective function.}
We define $f(\allpresentSet)$ as the expected number of topics the user covers after $\numvisit$ visits, 
where the expectation is taken over click probabilities. 
Let $\sigma_j\in\{0,1\}$ indicate whether topic $\topic{j}$ is covered, i.e., the user clicked at least one item that covers $\topic{j}$ from the shown sets $\presentSet{1}, \cdots, \presentSet{\numvisit}$. The probability $\pr(\sigma_j = 0)$ that topic $\topic{j}$ remains uncovered equals the probability the user clicks no items covering $\topic{j}$.

Since the user clicks on item $\collect{i}$ with probability $p_i$, the probability that they do not click on it 
is $1 - p_i$ and, thus,
    $\pr(\sigma_j = 0) = \prod_{t \in [T]} \prod_{\collect{i} \in \presentSet{t}, \collect{i}\ni\topic{j}} (1 - p_i)$,
leading to the expectation
    $\Exp[\sigma_j] = 1 - \prod_{t \in [T]} \prod_{\collect{i} \in \presentSet{t}, \collect{i} \ni \topic{j}} (1 - p_i)$.
By linearity of expectation, the expected number of topics covered by a user, 
which we denote by $f(\allpresentSet)$, is given by
\begin{equation}
\label{eq:fs_definition}
	f(\allpresentSet) =
	\Exp\left[\sum_{j=1}^{d} \sigma_j\right] 
	=
	\sum_{j=1}^{d} \Exp[\sigma_j]
	=
	\sum_{j=1}^d \left(1 - \prod_{t \in [T]} \prod_{\collect{i} \in \presentSet{t}, \collect{i} \ni \topic{j}} (1 - p_i)\right).
\end{equation}

\begin{restatable}{lemma}{fsubmodular}
\label{lemma:f_submodularity} 
    $\cov(\allpresentSet)$ is a non-decreasing submodular set function.
\end{restatable}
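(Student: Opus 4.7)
The plan is to decompose $f$ into a sum of per-topic contributions and prove the two properties for each summand separately, using the fact that finite non-negative linear combinations of non-decreasing submodular functions are again non-decreasing and submodular. Throughout, I would regard $\allpresentSet$ as a subset of a ground set whose elements are (item, visit) pairs, in accordance with the paper's convention of counting repeated presentations as distinct copies; this only affects notation.

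Concretely, for each topic $\topic{j}\in\topicSet$, define
\begin{equation*}
g_j(\allpresentSet) \;=\; 1 - \prod_{\collect{i}\in\allpresentSet,\, \collect{i}\ni\topic{j}} (1-p_i),
\end{equation*}
so that $f(\allpresentSet)=\sum_{j=1}^{d} g_j(\allpresentSet)$ by equation~\eqref{eq:fs_definition}. The main step is to show each $g_j$ is monotone submodular. For an element $\collect{\ell}\notin\allpresentSet$, a direct calculation of the marginal gain yields
\begin{equation*}
g_j(\allpresentSet\cup\{\collect{\ell}\}) - g_j(\allpresentSet) \;=\;
\begin{cases}
p_\ell \displaystyle\prod_{\collect{i}\in\allpresentSet,\, \collect{i}\ni\topic{j}} (1-p_i), & \text{if } \collect{\ell}\ni\topic{j},\\[4pt]
0, & \text{otherwise.}
\end{cases}
\end{equation*}
Since $p_\ell\in[0,1]$ and each factor $(1-p_i)\in[0,1]$, this marginal is non-negative, giving monotonicity. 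For submodularity, take any $\allpresentSet\subseteq \allpresentSet'$ and any $\collect{\ell}$; the case $\collect{\ell}\not\ni\topic{j}$ is trivial, and otherwise the marginal gain is a product of $p_\ell$ with a product of $(1-p_i)$ factors, which can only shrink (or stay the same) when we enlarge the set from $\allpresentSet$ to $\allpresentSet'$. Hence $g_j(\allpresentSet\cup\{\collect{\ell}\})-g_j(\allpresentSet)\geq g_j(\allpresentSet'\cup\{\collect{\ell}\})-g_j(\allpresentSet')$, establishing submodularity of $g_j$.

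Finally, closure of monotone submodular functions under non-negative linear combinations gives that $f=\sum_{j=1}^{d} g_j$ is monotone submodular, completing the proof. The only real subtlety is bookkeeping with repeated presentations, but because the product in~\eqref{eq:fs_definition} already iterates over all visits and all items in each $\presentSet{t}$, viewing $\allpresentSet$ as a set of distinct (item, visit) pairs makes the argument go through verbatim; I do not anticipate any nontrivial obstacle.
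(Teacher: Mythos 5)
Your proposal is correct and follows essentially the same route as the paper: the same per-topic decomposition $f=\sum_j g_j$, a direct computation showing each $g_j$ is monotone and submodular (your explicit marginal-gain formula $p_\ell\prod_{\collect{i}\in\allpresentSet,\,\collect{i}\ni\topic{j}}(1-p_i)$ is just a slightly more streamlined phrasing of the paper's difference-of-differences calculation), and closure under non-negative sums. No gaps; the multiset/copy bookkeeping is handled the same way as in the paper.
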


\paragraph{Problem definition.}
We formally define our new problem \emph{streaming stochastic submodular maximization with on-demand requests} (\ssmor) below.

\begin{problem}[\textbf{\ssmor}]
\label{problem:streaming_coverage}
We define a news item stream $\stream$ as a sequence of triples: $\stream = ((\collect{1}, p_1, \access_1),  \ldots, (\collect{N}, p_N,\access_N))$ where $N$ is the (unknown) length of the stream.
At each time step $i$, the system receives a triple $(\collect{i}, p_i,\access_i)$, where 
$\collect{i} \in \universeSet$, $p_i \in [0,1]$, and $\access_i \in \{0,1\}$.
Whenever $\access_i = 1$, the system selects up to $k$ news items 
from the set of items $\{\collect{1},\ldots,\collect{i}\}$ received so far to present to the user. 
Let $\allpresentSet = \{\presentSet{1}, \ldots, \presentSet{\numvisit}\}$ denote the sets of news items presented to the user over $\numvisit$ accesses.
The objective is to maximize the expected number of distinct topics the user is exposed to by the end of the stream, measured by the function~$f(\allpresentSet)$.
\end{problem}

There are two key challenges in our streaming setting: (i) the system does not know in advance when or how many times a user will access it; and (ii) the system has to make decisions based on the information available so far, and these decisions are irrevocable.
In the following, we show how the problem can be reduced to a submodular maximization problem subject to a partition matroid constraint. 
We then leverage this connection to develop solutions under various memory constraints. %

\textbf{Reduction to submodular maximization under a partition matroid constraint.} 
We first reduce the \ssmor problem (Problem~\ref{problem:streaming_coverage}) to submodular maximization under a partition matroid constraint, which enables us to apply efficient streaming algorithms with provable guarantees. 
Let $\request \colon t \mapsto \request_t$ map a user's $t$-th access to its access time, and $(\request_t)_{t=1}^{\numvisit}$ be the sequence of times at which the user accesses the system over an unknown time period. 
Whenever the user accesses the system, 
it presents up to $k$ news items selected from all the news items available since the beginning of the stream, denoted by $\{\collect{i}\}_{i=1}^{\request_t}$.

Let $\collect{i}^{t}$ be a \emph{copy} of the item $\collect{i}$ associated with the $t$-th visit. 
Let $\universeSet^{t}$ be a collection of these copies until the $t$-th visit, namely, $\universeSet^{t} = \{\collect{i}^{t}\}_{i=1}^{\request_t} = \{\collect{1}^{t}, \ldots, \allowbreak \collect{\request_t}^{t}\}$. 
Without loss of generality, we treat each copy as a distinct item, 
i.e., $\collect{i}^{t} \neq \collect{i}^{s}$ for all $t \neq s$. 
Clearly, $\universeSet^{t}\cap\universeSet^{s}=\emptyset$ for $t\neq s$.\looseness=-1

Moreover, let $\presentSet{t} \subseteq \universeSet^{t}$ be the set of news items that the system presents to the user at their $t$-th visit.
Let $\allpresentSet = \bigcup_{q=1}^{\numvisit}\presentSet{q}$ be the union of all sets of presented items.
Analogously to \Cref{eq:fs_definition}, we adapt the function $f \colon 2^{\cup_{t=1}^{\numvisit} \universeSet^{t}} \rightarrow \Real$ such that $f(\allpresentSet)$ is the expected number of topics covered by the (copies) of news items that the user clicked on after $\presentSet{1}, \ldots, \presentSet{\numvisit}$ have been presented to the user.

\begin{observation} 
\label{prob:equavalent_formulation_2}
Let $(\cup_{t=1}^T\universeSet^{t},\mathcal{F})$ be a set system, where $\mathcal{F}$ is a collection of subsets of $\cup_{t=1}^T\universeSet^{t}$,
and $\mathcal{F}$ is constructed in the following way:
for any set $X \subseteq \cup_{t=1}^T\universeSet^{t}$, if $|X \cap \universeSet^{t}| \leq k$ for any $1 \leq t \leq T$, then $X \in \mathcal{F}$.
Observe that $(\cup_{t=1}^T\universeSet^{t},\mathcal{F})$ is a partition matroid.  
Problem~\ref{problem:streaming_coverage} can be equivalently formulated as 
\(\max_{X \in \mathcal{F}} f(X)\).
\end{observation}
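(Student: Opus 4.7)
The plan is to verify the two assertions directly from the definitions. First, I would argue that $(\cup_{t=1}^{T}\universeSet^{t},\mathcal{F})$ is a partition matroid by exhibiting the partition and checking the independence axioms. By construction in the reduction, the copies $\collect{i}^{t}$ are treated as distinct across visits, so the blocks $\universeSet^{1},\ldots,\universeSet^{T}$ are pairwise disjoint and partition the ground set. The family $\mathcal{F}$ defined by the constraints $|X \cap \universeSet^{t}|\leq k$ for all $t$ is then the standard uniform partition matroid with capacity $k$ on each block; closure under subsets and the exchange property are immediate because removing or swapping elements cannot increase any block intersection.

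Second, I would establish the equivalence of the two optimization problems via the obvious bijection between feasible tuples $(\presentSet{1},\ldots,\presentSet{T})$ and sets $X\in\mathcal{F}$. Given a feasible output $(\presentSet{1},\ldots,\presentSet{T})$ of \Cref{problem:streaming_coverage} with $\presentSet{t}\subseteq \universeSet^{t}$ and $|\presentSet{t}|\leq k$, define $X = \bigcup_{t=1}^{T}\presentSet{t}$; disjointness of the $\universeSet^{t}$ gives $X \cap \universeSet^{t} = \presentSet{t}$, so $|X \cap \universeSet^{t}|\leq k$ and thus $X\in\mathcal{F}$. Conversely, given $X\in\mathcal{F}$, setting $\presentSet{t} := X \cap \universeSet^{t}$ produces a tuple satisfying the per-visit cardinality constraint. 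Because the extension of $f$ to $2^{\cup_{t}\universeSet^{t}}$ is defined exactly through the disjoint union $\allpresentSet = \bigcup_{q}\presentSet{q}$, the two objective values coincide along this bijection, so $\max f(\allpresentSet) = \max_{X\in\mathcal{F}} f(X)$.

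No step poses a genuine obstacle; the only thing worth flagging is that this observation concerns the \emph{offline} optimization problem — it equates the static feasible region and objective value — and does not by itself address the streaming/irrevocability aspects, which will instead be handled by the subsequent algorithmic development on partition-matroid streaming. Consequently, the proof reduces to disjointness of the blocks plus a one-line bijection argument.
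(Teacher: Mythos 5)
Your proposal is correct and matches the paper's (implicit) justification: the paper states this as an Observation without a formal proof, and the expanded reduction in the appendix relies on exactly the same two facts you verify — the copies $\collect{i}^{t}$ make the blocks $\universeSet^{1},\ldots,\universeSet^{T}$ pairwise disjoint, so the cardinality constraints define a partition matroid, and the bijection $X \leftrightarrow (\presentSet{1},\ldots,\presentSet{T})$ via $\presentSet{t} = X\cap\universeSet^{t}$ preserves both feasibility and the objective value. Your remark that this equates only the offline feasible regions, leaving the streaming and irrevocability issues to the subsequent algorithms, is also consistent with how the paper uses the observation.
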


Based on \Cref{prob:equavalent_formulation_2}, we develop a baseline algorithm for the Problem~\ref{problem:streaming_coverage}.
Suppose that the system stores all the news items up to when the user accesses the system at the $t$-th time, and also has stored all the previous news item sets presented to the user, i.e., $\cup_{q=1}^{t-1}\presentSet{q}$. 
In this case, the stored news items up to time $\request_{t}$ constitute one partition of the matroid, i.e., 
$\universeSet^{t} = \{\collect{i}^{t}\}_{i=1}^{\request_{t}}$.
We then apply the local greedy algorithm by Fisher et al.~\cite{fisher1978analysis}, i.e., the system greedily selects $k$ news items from $\universeSet^{t}$ that maximize the marginal gain of the objective function with respect to $f(\cup_{q=1}^{t-1}\presentSet{q})$.
This algorithm does not require any knowledge of $\numvisit$, and has a competitive ratio of $\nicefrac{1}{2}$.
Since all incoming news items need to be stored, the memory usage is $\Omega(N + k T)$, and the running time is $\sum_{q=1}^{T} \bigO(|\universeSet^{q}|k)$. \edit{The response time is $\bigO(Nk)$ since it selects $k$ items iteratively from $\bigO(N)$ items.}
We present the algorithm in \Cref{alg:localgreedy} (\greedyalg) in \Cref{appendix:formulationAndReduction}.

\begin{restatable}{theorem}{greedyratio}
\label{theorem:greedy_ratio}
        \Cref{alg:localgreedy} for \Cref{problem:streaming_coverage} \edit{has a competitive ratio of $\nicefrac{1}{2}$,  
space complexity $\Theta(N+kT)$,
time complexity $\mathcal{O}(N \numvisit k)$, and response time complexity $\mathcal{O}(N k )$.} Moreover, the competitive ratio of $\nicefrac{1}{2}$ is tight, i.e., for \Cref{problem:streaming_coverage}, no streaming algorithm can achieve a competitive ratio better than $\nicefrac{1}{2}$ without violating the irrevocability constraint.
\end{restatable}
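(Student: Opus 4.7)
The plan is to establish the $\nicefrac{1}{2}$ competitive ratio, the three complexity bounds, and the matching hardness as five distinct claims, with the hardness requiring the most care.

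For the $\nicefrac{1}{2}$ competitive ratio, I would invoke the partition-matroid reduction of \Cref{prob:equavalent_formulation_2}, whose blocks are $\universeSet^{1},\dots,\universeSet^{T}$ with capacity $k$ each. At visit $t$, \greedyalg augments $\cup_{q<t}\presentSet{q}$ by $k$ items of largest marginal gain drawn from $\universeSet^{t}$, which is exactly the Fisher--Nemhauser--Wolsey local greedy step \cite{fisher1978analysis} for block $t$ of a partition matroid. Combined with \Cref{lemma:f_submodularity}, their classical analysis then yields $f(\cup_{t}\presentSet{t}) \geq \tfrac{1}{2}\,f(\optpresentSetFix)$ for every feasible $\optpresentSetFix$, hence in particular for the offline optimum $\opt(\sigma)$.

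For the resource bounds, I would argue directly from the algorithm: it must retain every arrived item (any of them may still be selected at a later visit) together with the $T$ presented sets of size at most $k$, giving $\Theta(N+kT)$ memory; at each visit, $k$ rounds of marginal-gain maximization over the at most $N$ currently available items require $\mathcal{O}(Nk)$ oracle calls (the response-time bound), and summing over the $T$ visits gives $\mathcal{O}(NTk)$ total time.

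For the tightness claim, the plan is to exhibit an adversarial family of streams on which every irrevocable streaming algorithm attains at most $(\tfrac{1}{2}+o(1))\,\opt(\sigma)$. The construction will use $k=1$ with multiple visits and build per-visit blocks containing two locally indistinguishable items; any online rule must irrevocably commit to one of them, but items arriving after the visit reveal that the non-chosen item would have been strictly complementary to a later item under the coverage function $f$. The offline optimum, seeing the entire stream, can always pair items in the complementary way across visits, roughly doubling the extracted value, and randomized algorithms are handled by Yao's principle applied to symmetric variants of the construction. The hard part will be calibrating the topic overlaps across blocks so that the known $\tfrac{1}{2}$-tight example for local greedy on partition matroids can be faithfully re-encoded as a valid \ssmor stream while preserving monotonicity and submodularity of $f$ and making the ratio tend to exactly $\tfrac{1}{2}$ rather than a weaker constant.
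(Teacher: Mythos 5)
Your treatment of the competitive ratio and the resource bounds matches the paper's proof in substance: the paper carries out the Fisher--Nemhauser--Wolfe local-greedy analysis explicitly for the partition matroid of \Cref{prob:equavalent_formulation_2} (the chain of inequalities in \Cref{eq:greedyapprx} and \Cref{eq:boundterm} is exactly the classical argument instantiated for $f$), and the complexity accounting is the same direct count you give. Citing \cite{fisher1978analysis} as a black box versus re-deriving it is a cosmetic difference.

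The tightness part is where your proposal has a gap. You correctly identify that the claim is a lower bound against \emph{every} irrevocable streaming algorithm, and your core idea --- two locally indistinguishable items at a visit, with a later item revealed adversarially so that the unchosen one was the complementary choice --- is precisely the mechanism the paper uses. But your plan then defers the actual construction ("the hard part will be calibrating the topic overlaps...") and proposes to re-encode the known tight instance for local greedy on partition matroids; that instance certifies tightness of the \emph{algorithm}, not a universal lower bound, and does not by itself yield what the theorem asserts. What is actually needed, and what the paper supplies, is a tiny adaptive adversary: with $k=1$, four topics, items $\collect{1}=\{\topic{1},\topic{2}\}$ and $\collect{2}=\{\topic{3},\topic{4}\}$ with click probability $1$, the algorithm must irrevocably present one of them at the first visit, and the adversary sets the final item $\collect{3}$ to duplicate the topics the algorithm committed to, so the algorithm ends with $2$ covered topics against an offline optimum of $4$. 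This gives the ratio exactly $\nicefrac{1}{2}$ (no $o(1)$ slack, no limit over block families), and since the two candidate items are symmetric, no calibration of overlaps or appeal to Yao's principle is required for the deterministic statement the theorem makes. In short: your high-level adversarial idea is right, but the proposal overcomplicates the construction while leaving its decisive step unexecuted, whereas a four-topic example closes the argument.
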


The drawback of \greedyalg is that its memory usage depends linearly on the stream length, making it infeasible for large streams. Thus, we consider the setting where the system has limited memory and can store only a user-specific subset of news items.

\section{Algorithms with limited memory} %
\label{sec:small_memory}

\edit{We now consider a more realistic setting where the system has limited memory. 
Our objective is to substantially reduce the memory usage and the response time complexity while achieving a bounded approximation ratio.
We first establish that any algorithm achieving reasonably good performance must use at least $\Theta(Tk)$ memory.

To illustrate this lower bound, consider the following example:
Suppose that a user makes $\numvisit$ requests; the input stream contains more than $\numvisit k$ items, each containing one distinct topic and associated with acceptance probability of $1$. 
At the end of the stream, the user sequentially submits $\numvisit$ requests. If an algorithm is allocated with only $\Theta(\numvisitest k)$ memory for some $\numvisitest \leq \numvisit$, it can store at most $\Theta(\numvisitest k)$ items, thus, at most $\Theta(\numvisitest k)$ distinct topics. 
However, optimally, the user may access $\Theta(\numvisit k)$ distinct topics. 
We can conclude that for this example, the competitive ratio is at most $\Theta(\numvisitest / \numvisit)$, which is arbitrarily poor when $\numvisitest \ll \numvisit$.

Since the exact number of requests $\numvisit$ is typically unknown in advance, we assume the system has access to an upper bound $\numvisitest \geq \numvisit$ and is allocated $\Omega(k \numvisitest)$ memory. 
This allocation is sufficient to store all candidate items throughout the user's interaction period. 
In practice, $\numvisitest$ can be estimated
from historical interaction data, allowing the system to set $\numvisitest$ close to $\numvisit$ within reasonable accuracy, ensuring both efficient memory utilization and meaningful performance guarantees.
}

\begin{figure}[t]
	\centering
	\includegraphics[width=0.75\linewidth]{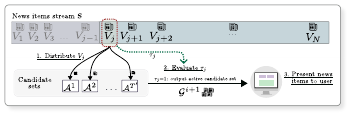}
	\caption{Schematic view of our algorithm \stralg. We first initialize $\numvisitest$ empty active candidate sets. For each incoming news item in the stream \stream, we decide whether it can be added to/swapped into each of the active candidate sets. When a user submits a request, i.e., $\access_j = 1$, we select the best active candidate set, present it to the user, and deactivate it.
    }
	\label{fig:example1}
\end{figure}

\subsection{A first memory-efficient streaming algorithm}
\label{sec:new_stream}
Our first memory-efficient algorithm processes the incoming stream $\stream$ as follows. First, it initializes $\numvisitest$ empty candidate sets $\presentSetTemp{i}$ for $i\in[T']$, all of which are initially \emph{active}. 
Upon the arrival of a news item from the stream $\stream$, 
for each active candidate set, the algorithm either adds the item to the set 
or replaces an existing item if the set is full.
When a user accesses the system, the algorithm selects one of the active candidate sets to present to the user. The chosen candidate set is then marked as \emph{inactivate} and will not be updated further.
\Cref{fig:example1} shows an overview.

Based on our reduction to submodular maximization under a partition matroid constraint, we can adapt any streaming algorithm designed for matroid constraints, provided it produces irrevocable outputs. Notable examples include the algorithms proposed by \citet{chakrabarti2015submodular}, \citet{chekuri2015streaming}, and \citet{feldman2018less}, all of which achieve a $1/4$ approximation ratio. In our work, we specifically adapt the algorithm introduced by \citet{chekuri2015streaming}.

\begin{figure}[t]
\input{algo/StreamAlgWhole}
\end{figure}

\begin{figure}[t]
\input{algo/streamingparallel.tex}
\end{figure}

Our algorithm, named \stralg, is shown in \Cref{alg:streaming_greedy_atmostT_complete}. It keeps track of the number of visits using a counter~$i$. Assume the user has visited $i$ times when news item $\collect{j}$ arrives. This implies that, prior to the arrival of $\collect{j}$, $i$ active sets from $i$ partitions
have already been presented to the user, and thus, neither $\collect{j}$ nor any subsequent item will be added to these partitions.
The algorithm iterates over the remaining $\numvisitest - i$ active candidate sets, 
and determines whether $\collect{j}$ should be added to any of them, with replacement if necessary. 

The decision to replace an existing item $\collect$ with the current item $\collect{j}$ is based on the comparison of their incremental value with respect to set $\cup_{t \in [\numvisitest]} \presentSetTemp{t}$. 
Specifically,
we adopt the notation $\nu(f, \cup_{t \in [\numvisitest]} \presentSetTemp{t}, \collect{})$ from \citet{chekuri2015streaming} 
and define
\begin{equation*}
    \nu(f, \cup_{t \in [\numvisitest]} \presentSetTemp{t}, \collect{}) = f_{\widehat{\cup_{t \in [\numvisitest]} \presentSetTemp{t}}}(\collect), \text{ with } \widehat{\cup_{t \in [\numvisitest]} \presentSetTemp{t}} = \{\collect' \in \cup_{t \in [\numvisitest]} \presentSetTemp{t} \colon \collect' < \collect \}.
\end{equation*}
Here, $\collect' < \collect$ indicates item $\collect'$ 
is added to the set $\cup_{t \in [\numvisitest]} \presentSetTemp{t}$ before item $\collect$. 
For a specific 
set $\presentSetTemp{t'}$, $\collect{j}$ replaces an item in $\presentSetTemp{t'}$ if $f_{\cup_{t \in [\numvisitest]} \presentSetTemp{t}}(\collect{j}) \geq 2 \cdot\min_{\collect{} \in \presentSetTemp{t'}}\nu(f, \cup_{t \in [\numvisitest]} \presentSetTemp{t}, \collect{}).$

Upon the $t$-th user visit, the system selects which active candidate set to output in a greedy manner. To be specific, let $\presentSetGreedy{1}, \cdots, \presentSetGreedy{t-1}$ denote the candidate sets that have already been presented to the user, and $\mathcal{T}$ represent the indices of the remaining active sets. The system selects the active set that offers the highest incremental gain with respect to the union of previously presented sets. 
Formally, let
\(
j^* = \argmax_{j\in\mathcal{T}} \scov(\presentSetTemp{j}|\cup_{j =1}^{t-1}\presentSetGreedy{j}),
\)
and the system outputs \(\presentSetGreedy{t} = \presentSetTemp{j^*}\).

\begin{restatable}{theorem}{atMostT}
\label{theorem:atMostT}
Let \numvisit be the number of user accesses and \numvisitest a given upper bound. \Cref{alg:streaming_greedy_atmostT_complete} has competitive ratio $\frac{1}{4(\numvisitest-\numvisit + 1)}$, space complexity $\mathcal{O}(\numvisitest k)$, time complexity $\mathcal{O}(Nk\numvisitest)$, \edit{and response time complexity \bigO(\numvisitest)}.

\end{restatable}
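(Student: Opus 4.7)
The complexity claims follow directly from inspection: the algorithm keeps $\numvisitest$ candidate sets of size at most $k$, so the space is $\mathcal{O}(\numvisitest k)$; for each of the $N$ arrivals we loop over at most $\numvisitest$ active sets, and within each find the minimizer of $\nu$ among at most $k$ items, giving total time $\mathcal{O}(N\numvisitest k)$ plus a lower-order $\mathcal{O}(\numvisit\numvisitest)$ for the greedy selections; the response time is dominated by \Cref{alg:linegreedy}, which compares the $|\mathcal{T}|\leq \numvisitest$ marginal gains, i.e., $\mathcal{O}(\numvisitest)$ oracle calls.

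\textbf{Competitive ratio.} The analysis splits at the \emph{last} visit. Write $V$ for the snapshot of $\bigcup_{j=1}^{\numvisitest}\presentSetTemp{j}$ at the moment the user's final request arrives. At that moment the already-deactivated candidate sets are exactly $\presentSetGreedy{1},\dots,\presentSetGreedy{\numvisit-1}$, and the remaining $|\mathcal{T}|=\numvisitest-\numvisit+1$ sets are still active, so $V = \bigcup_{i<\numvisit}\presentSetGreedy{i}\cup\bigcup_{j\in\mathcal{T}}\presentSetTemp{j}$. Restricted to the sub-stream ending at this visit, the inner \texttt{for}-loop of \Cref{alg:streaming_greedy_atmostT_complete} is exactly the streaming matroid algorithm of \citet{chekuri2015streaming} applied to the partition matroid of \Cref{prob:equavalent_formulation_2} with $\numvisitest$ parts; since items arriving after the last visit can never be shown to any user, the offline optimum on the full stream equals that on the truncated stream and is at most the matroid optimum on $\numvisitest$ parts, so Chekuri's guarantee gives $f(V)\geq \tfrac{1}{4}f(\opt)$. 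Next, the greedy rule on \Cref{alg:linegreedy} picks $j^*\in\mathcal{T}$ maximising $f(\presentSetTemp{j}\mid\bigcup_{i<\numvisit}\presentSetGreedy{i})$, and a standard averaging argument (submodularity plus non-negative marginals) gives
\begin{equation*}
f\Bigl(\presentSetGreedy{\numvisit}\,\Big|\,\textstyle\bigcup_{i<\numvisit}\presentSetGreedy{i}\Bigr)
\;\geq\; \frac{1}{|\mathcal{T}|}\,f\Bigl(\textstyle\bigcup_{j\in\mathcal{T}}\presentSetTemp{j}\,\Big|\,\bigcup_{i<\numvisit}\presentSetGreedy{i}\Bigr)
\;=\; \frac{f(V)-f\bigl(\bigcup_{i<\numvisit}\presentSetGreedy{i}\bigr)}{\numvisitest-\numvisit+1}.
\end{equation*}
Adding $f\bigl(\bigcup_{i<\numvisit}\presentSetGreedy{i}\bigr)$ to both sides and dropping its non-negative residual yields $f\bigl(\bigcup_{i\leq\numvisit}\presentSetGreedy{i}\bigr)\geq \tfrac{1}{\numvisitest-\numvisit+1}f(V)$. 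Combining the two bounds produces the claimed competitive ratio $\tfrac{1}{4(\numvisitest-\numvisit+1)}$.

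\textbf{Main obstacle.} The subtle point is that items keep arriving after the last visit and enlarge the still-active candidate sets, so the end-of-stream union strictly contains $V$ in general; one must not invoke Chekuri's guarantee on the wrong snapshot. The fix is to truncate the stream at the last visit---this changes neither the algorithm's output nor $\opt$ (later items cannot reach any user), and on that truncated stream $V$ is precisely the output of Chekuri's rule. A secondary subtlety is that the partition matroid has $\numvisitest\geq\numvisit$ parts, which only enlarges the matroid optimum relative to $\opt$, so the final inequality remains in our favour.
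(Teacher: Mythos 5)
Your proposal is correct and takes essentially the same route as the paper's proof: the paper likewise truncates the constructed \ssmm stream at the last visit, invokes the $\nicefrac{1}{4}$ guarantee of \citet{chekuri2015streaming} on the partition matroid with $\numvisitest$ parts (whose optimum dominates the \ssmor optimum), and then applies the same max-over-active-sets averaging plus submodularity at the final visit to lose only a further factor $\numvisitest-\numvisit+1$. The subtlety you flag about using the correct snapshot is exactly what the paper's reduction in its appendix handles by defining the remaining $\numvisitest-\numvisit$ partitions to contain only items arriving up to the final visit.
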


If the exact number of visits \numvisit is known, we can set the upper bound $\numEstimateVisit = \numvisit$ and obtain a $\nicefrac{1}{4}$-competitive ratio using \Cref{alg:streaming_greedy_atmostT_complete}.
However, the competitive ratio of \Cref{alg:streaming_greedy_atmostT_complete} can be arbitrarily bad if 
$\numvisitest$ is significantly larger than $\numvisit$.
Note that our analysis is tight. We can demonstrate that 
for any fixed $\numvisit$, there exists an example such that the competitive ratio of \Cref{alg:streaming_greedy_atmostT_complete} is at most $\frac{1}{\numvisitest -\numvisit + 1}$, which is only a constant factor away from our analysis. 
We provide the proof in \Cref{appendix:smallmemo}.

\subsection{A memory-efficient streaming algorithm with bounded competitive ratio}

As shown in the previous section, to achieve a large enough competitive ratio with 
\stralg,
the system must have a good enough estimate of the number of user visits $\numvisit$.
However, in practice, this information is typically unavailable in advance.
To address this issue, we discretize the range $[\numvisitest]$ to generate multiple guesses for the value of $\numvisit$, and run a separate instance of \Cref{alg:streaming_greedy_atmostT_complete} for each guess.
When a user visits the system, each copy of \Cref{alg:streaming_greedy_atmostT_complete} outputs a solution, and we select the best solution over all the outputs.\looseness=-1

Specifically, \Cref{alg:streaming_greedy_parallel} (named \parallelalgo) 
first initializes the set of guesses of $\numvisit$ as
\(
\guessTSet = \{\delta i \mid i \in \left[\left\lceil \nicefrac{\numvisitest}{\delta} \right\rceil\right]\},
\)
where $\delta\in\mathbb{N}$ is a parameter controlling the trade-off between competitive ratio as well as the space and time complexity.
In this construction, there always exists a guess $\guessT \in \guessTSet$ that is between $\numvisit$ and $\numvisit + \delta$.
Therefore, if we would run \Cref{alg:streaming_greedy_atmostT_complete} with $\guessT$ as the input parameter, and present the results to the user, 
we are guaranteed to have a competitive ratio of at least $\nicefrac{1}{4\delta}$ as shown in \Cref{theorem:atMostT}; however, we do not know the value of $\guessT$ in advance. 
Thus, we adopt a greedy strategy to aggregate the results from all the copies of \Cref{alg:streaming_greedy_atmostT_complete} to determine an output for each user visit.
At the $i$-th user visit, we collect the $i$-th output from each running copy of \stralg with a different guess $\guessT\in \guessTSet$ and select the one that maximizes the incremental gain. 
This selection rule leads to the following result. \looseness=-1

\begin{restatable}{theorem}{TheoremParallel}
\label{theorem:parallel}
\Cref{alg:streaming_greedy_parallel} has a competitive ratio of $\nicefrac{1}{(8\delta)}$,  
space complexity $\mathcal{O}({\numvisitest}^2k / \delta)$, 
time complexity $\mathcal{O}(N k {\numvisitest}^2 / \delta)$, \edit{and response time complexity $\bigO(\numvisitest)$}.  
\end{restatable}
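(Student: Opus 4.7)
The plan is to combine two ingredients: the existence of a ``good'' guess in $\guessTSet$ whose corresponding \stralg instance achieves a $1/(4\delta)$-competitive ratio in isolation, and a greedy-aggregation argument showing that selecting the best candidate across all instances at each visit costs at most another factor of two.

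First I would identify the guess $\guessT^{*} := \lceil \numvisit / \delta \rceil\,\delta \in \guessTSet$, which lies in $\guessTSet$ since $\numvisit \le \numvisitest$ and which satisfies $\numvisit \le \guessT^{*} \le \numvisit + \delta - 1$, hence $\guessT^{*} - \numvisit + 1 \le \delta$. Applying \Cref{theorem:atMostT} to the copy of \stralg running with this guess, the sequence $G_t := \presentSetGreedy{t,(\guessT^{*})}$ for $t \in [\numvisit]$ satisfies $\scov\bigl(\bigcup_{t} G_t\bigr) \ge \opt / \bigl(4(\guessT^{*}-\numvisit+1)\bigr) \ge \opt/(4\delta)$.

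Next, let $B_t := \presentSetParallel{t}$ denote \parallelalgo's output at the $t$-th visit and $U_t := \bigcup_{j \le t} B_j$. By the argmax rule on line~\ref{alg:linegreedyoutput}, $\scov(B_{t+1}\mid U_t) \ge \scov(G_{t+1}\mid U_t)$ at every visit, so $\scov(U_\numvisit) = \sum_t \scov(B_t \mid U_{t-1}) \ge \sum_t \scov(G_t \mid U_{t-1})$. By submodularity, $\scov(G_t \mid U_{t-1}) \ge \scov(G_t \mid U_\numvisit)$, and the subadditivity of marginals (applied $\numvisit$ times) gives $\sum_t \scov(G_t \mid U_\numvisit) \ge \scov\bigl(\bigcup_t G_t \mid U_\numvisit\bigr) = \scov\bigl(U_\numvisit \cup \bigcup_t G_t\bigr) - \scov(U_\numvisit) \ge \scov\bigl(\bigcup_t G_t\bigr) - \scov(U_\numvisit)$, where the last step uses monotonicity of $\scov$ from \Cref{lemma:f_submodularity}. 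Chaining these inequalities yields $2\,\scov(U_\numvisit) \ge \scov\bigl(\bigcup_t G_t\bigr) \ge \opt/(4\delta)$, establishing the competitive ratio $1/(8\delta)$.

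For the resource bounds, since $|\guessTSet| = \lceil \numvisitest/\delta \rceil$ and $\sum_{\guessT \in \guessTSet} \guessT = \bigO({\numvisitest}^{2}/\delta)$, summing the per-instance space bound $\bigO(\guessT k)$ and per-instance time bound $\bigO(N \guessT k)$ from \Cref{theorem:atMostT} yields the claimed $\bigO({\numvisitest}^{2} k/\delta)$ space and $\bigO(N k\, {\numvisitest}^{2}/\delta)$ time. For response time, the most expensive per-instance response costs $\bigO(\numvisitest)$ oracle calls, and the subsequent scan over the $\bigO(\numvisitest/\delta)$ candidate outputs on line~\ref{alg:linegreedyoutput} contributes another $\bigO(\numvisitest/\delta)$, totalling $\bigO(\numvisitest)$.

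The main technical obstacle is the greedy-aggregation bound above: we must relate \parallelalgo's solution $U_\numvisit$ to the output of a \emph{single} fixed \stralg instance $\bigcup_t G_t$, even though \parallelalgo never commits to any one guess and may switch guesses between visits. The key is that the per-visit argmax already dominates, on each step, the marginal gain that the $\guessT^{*}$-copy would have contributed, and submodularity converts this per-step domination into a bounded global loss of at most a factor of two.
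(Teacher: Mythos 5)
Your proof is correct and arrives at the same $\nicefrac{1}{(8\delta)}$ bound, but the key aggregation step is argued by a genuinely different (and shorter) route than the paper's. Both proofs begin identically: you pick the same good guess $\guessT^{*}=\lceil \numvisit/\delta\rceil\delta=\closestT$ with $\closestT-\numvisit+1\le\delta$ and invoke \Cref{theorem:atMostT} to get $\scov(\bigcup_t \presentSetGreedy{t})\ge \opt/(4\delta)$. For the remaining factor of two, the paper runs an induction over visits (\Cref{lemma:induction}) that carries an explicit correction term $\sum_{t}\bigl[f(\bigcup_{s\le t}\presentSetParallel{s}\mid\bigcup_{s\le t+1}\presentSetGreedy{s})-f(\bigcup_{s\le t}\presentSetParallel{s}\mid\bigcup_{s\le t}\presentSetGreedy{s})\bigr]$, and then a separate telescoping argument (\Cref{lemma:paralle_half_best_guess}) to show this term is at least $-f(\bigcup_t\presentSetParallel{t})$. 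You instead use the standard greedy-versus-fixed-comparator chain: telescope $f(U_T)=\sum_t f(B_t\mid U_{t-1})$, dominate each term by $f(G_t\mid U_{t-1})$ via the argmax on line~\ref{alg:linegreedyoutput}, pass to marginals conditioned on $U_T$ by submodularity, and finish with subadditivity of marginals plus monotonicity to obtain $f(U_T)\ge f(\bigcup_t G_t)-f(U_T)$. Every step is valid (in particular $\sum_t f(G_t\mid U_T)\ge f(\bigcup_t G_t\mid U_T)$ is the usual subadditivity of a submodular marginal), and your argument avoids the induction entirely; the paper's version buys an explicit per-visit loss decomposition that is not needed for the factor two. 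Your complexity accounting matches the paper's, including the $\bigO(\numvisitest)$ response-time claim, which you justify at the same level of granularity as the paper (both count the per-copy selection loosely; a fully literal count of all copies' per-visit argmax calls would give $\bigO(\numvisitest^2/\delta)$, but this is a shared convention, not a gap in your proposal).
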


\section{Empirical evaluation}
\label{sec:experiments}
We empirically evaluate the performance of our proposed algorithms \stralg and \parallelalgo.
We denote $\chekuriExact$ and \chekuriUp as the \stralg algorithm with the input parameter set to \numvisit (i.e., exact number of user accesses) and \numvisitest (i.e., upper bound of user accesses), resp.\looseness=-1

{We design our experiments to answer the following research questions:}
\textbf{(Q1)}~How effective are the solutions provided by \parallelalgo and \chekuriUp compared to other baselines and to each other?
\textbf{(Q2)}~What is the impact of parameters $\delta$ and $\budget$ on the performance of \parallelalgo and \chekuriUp?
\textbf{(Q3)}~How sensitive are \parallelalgo and \chekuriUp to the upper bound $\numvisitest$ and do our algorithms require highly-accurate estimation of $\numvisitest$ to perform well?

{We use the following \textbf{baselines}:}
\begin{itemize}
    \item \greedyalg: Our upper-bound baseline, which assumes linear memory size in the stream length, i.e., potentially unbounded memory (\Cref{alg:localgreedy}). 
    \item {\sieve}: Based on \citet{kazemi2019submodular}, we implement a heuristic version of the {\sieve} algorithm to solve the \ssmor problem. The adaptation proceeds as follows: (1) Before the first user visit, we run the exact {\sieve} algorithm and obtain output set $\presentSet{1}$. (2) For each subsequent $i$-th output, we run {\sieve} on the data stream between the $(i-1)$-th and $i$-th user visits, using the marginal gain function $f(\cdot|\bigcup_{j=1}^{i-1}\presentSet{i})$ as the objective. 
    \item {\preemption}: We adapt the algorithm proposed by \citet{buchbinder2019online} for solving the \ssmor problem in the same way as described above for \sieve.

\end{itemize}

\textbf{Datasets:} We use six real-world datasets, including four rating datasets \KuaiRec, \Anime, \Beer and \Yahoo, and two multi-label dataset \rcv and \amazon. The number of items these dataset contains vary from thousands to one million, and the topic sets size vary from around 40 to around 4000. 
We provide the details in \Cref{appendix:exp_setting}.

\textbf{Experimental setting:}
\label{section:expsetting}
To simulate the news item stream, we randomly shuffle the items to form a stream of length~$N$. 
To simulate the user visit sequence, we start with an all-zeros sequence and then randomly select $\numvisit$ indices, and set those positions to one. We establish an upper bound on the number of visits by setting $T' = T + \Delta T$ for some positive~$\Delta T$.
We generate synthetic click probabilities, and provide the details in \Cref{appendix:exp_setting}.

In all experiments, we randomly selected 50 users and report the average expected coverage for the \Yahoo, \rcv, and \amazon datasets. Standard deviations and expected coverage results for the other three datasets are provided in \Cref{appendix:experiments}. We also report runtime and memory usage on the largest dataset, \amazon.

All experiments ran on a MacBook Air (Model Mac15,12) equipped with an Apple M3 chip (8-core: 4 performance and 4 efficiency cores), 16 GB of memory, and a solid-state drive (SSD).
The source code and datasets are available.\footnote{\tiny\url{https://github.com/HongLWang/Streaming-Submodular-maximization-with-on-demand-user-request}}

\begin{figure*}[t]
\centering
\begin{minipage}[t]{\textwidth}
	\includegraphics[width=0.98\textwidth]{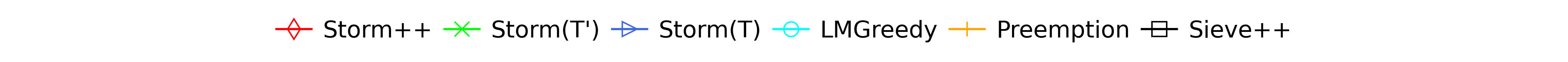}
	\begin{subfigure}[b]{0.31\textwidth}
		\centering
		\includegraphics[width=\textwidth]{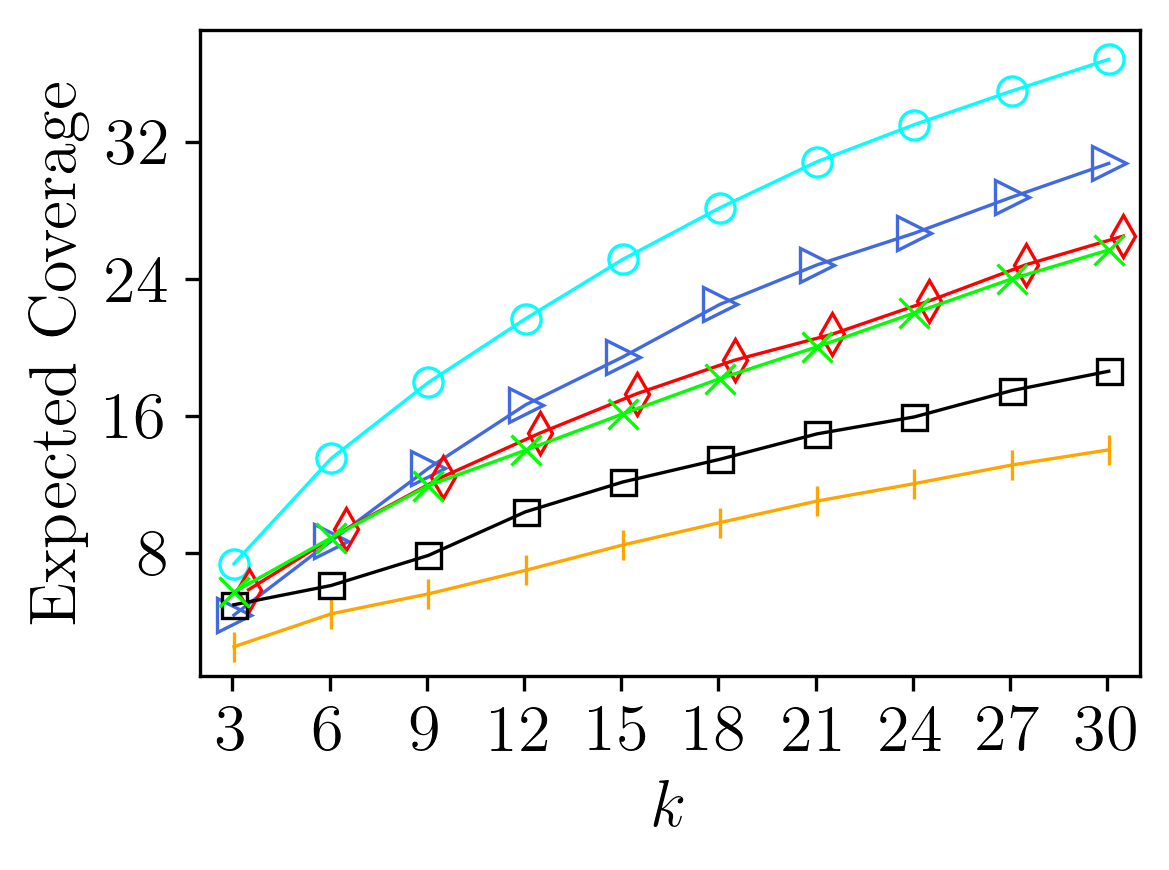}
		\caption{\small \Yahoo}
		\label{fig:beer_budget}
	\end{subfigure}
	\hfill
	\begin{subfigure}[b]{0.31\textwidth}
		\centering
		\includegraphics[width=\textwidth]{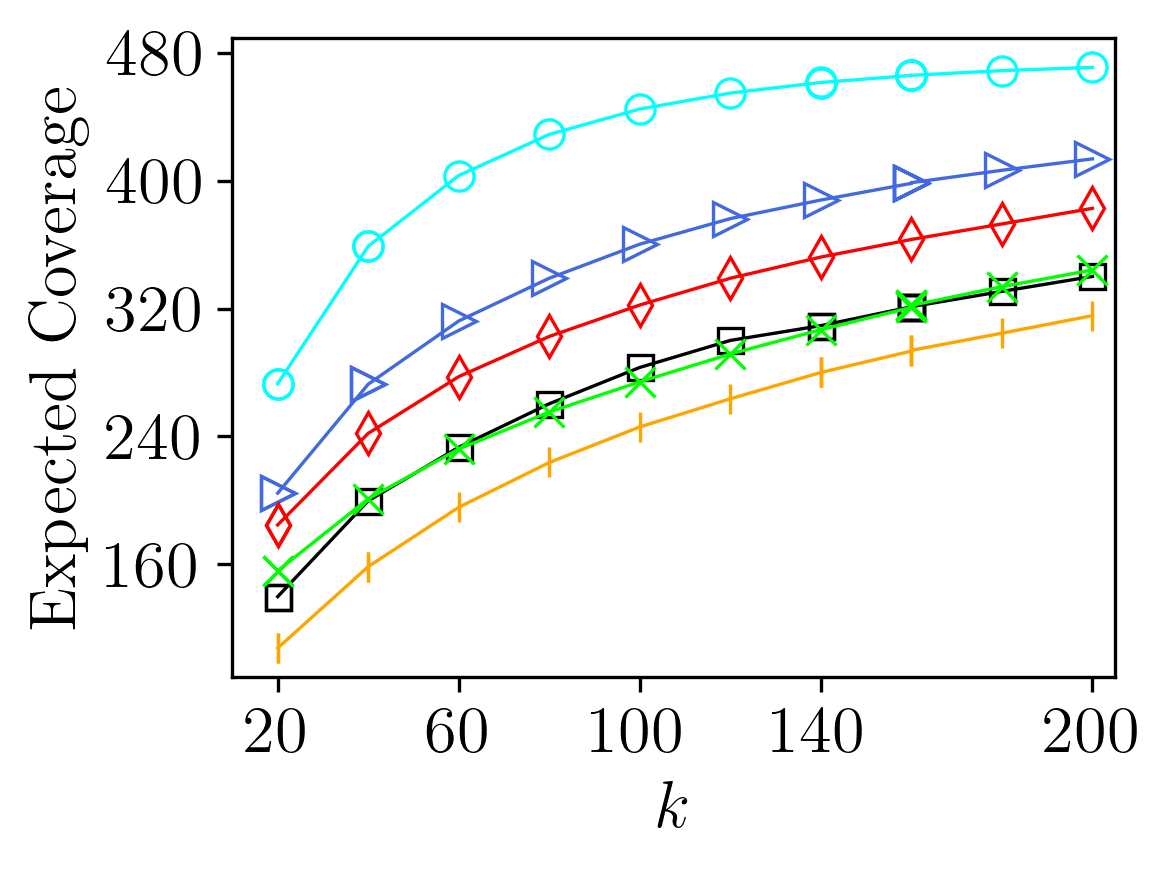}
		\caption{\small \rcv}
		\label{fig:rcv1_budget}
	\end{subfigure}
	\hfill
	\begin{subfigure}[b]{0.31\textwidth}
		\centering
		\includegraphics[width=\textwidth]{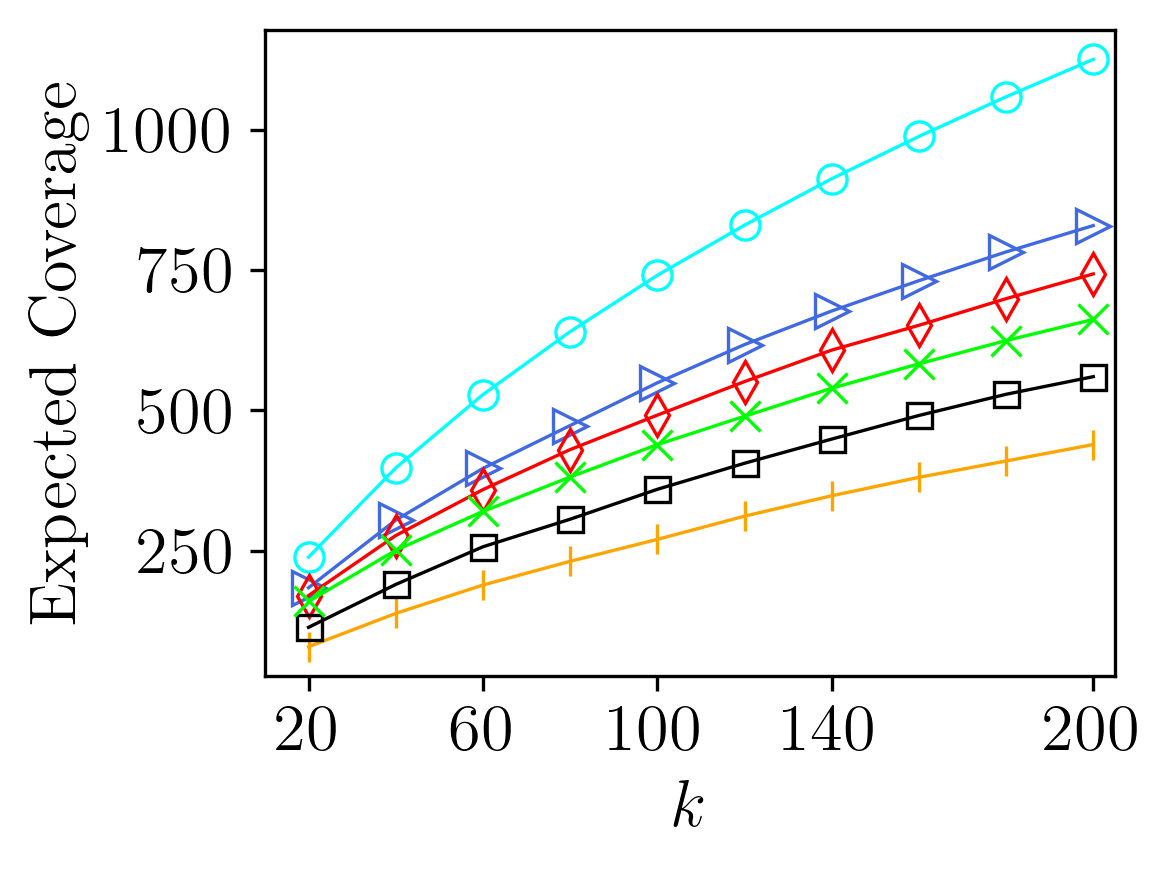}
		\caption{\small \amazon}
		\label{fig:amazon_budget}
	\end{subfigure}
	\caption{\small Empirical variation in expected coverage as a function of budget $k$ on all datasets. Parameter $T = 5$, $\delta = 25$ and $\Delta \numvisit = 45$ are fixed
    } %
	\label{fig:vary_budget}
\end{minipage}

\begin{minipage}[t]{\textwidth}
\centering
	\begin{subfigure}[b]{0.31\textwidth}
		\centering
		\includegraphics[width=\textwidth]{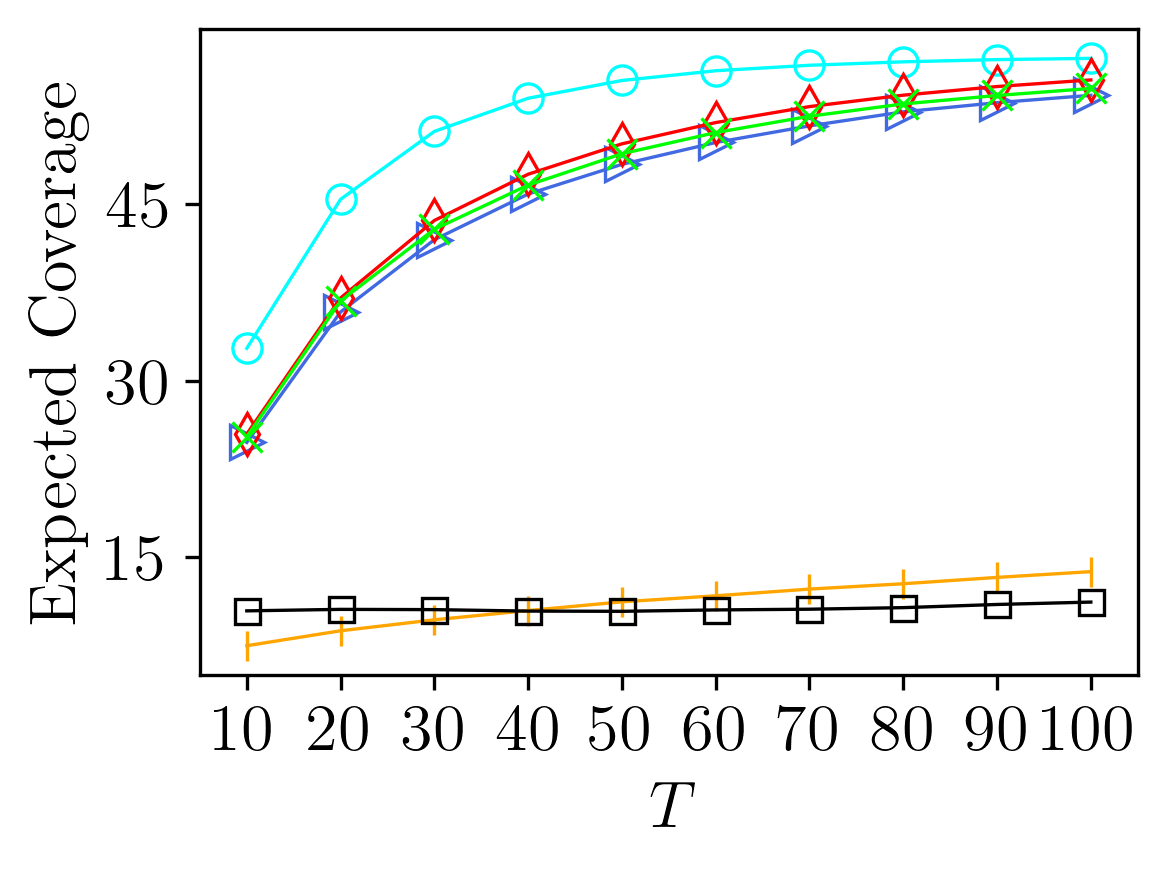}
		\caption{\small \Yahoo}
		\label{fig:yahoo_T}
	\end{subfigure}
	\hfill
	\begin{subfigure}[b]{0.31\textwidth}
		\centering
		\includegraphics[width=\textwidth]{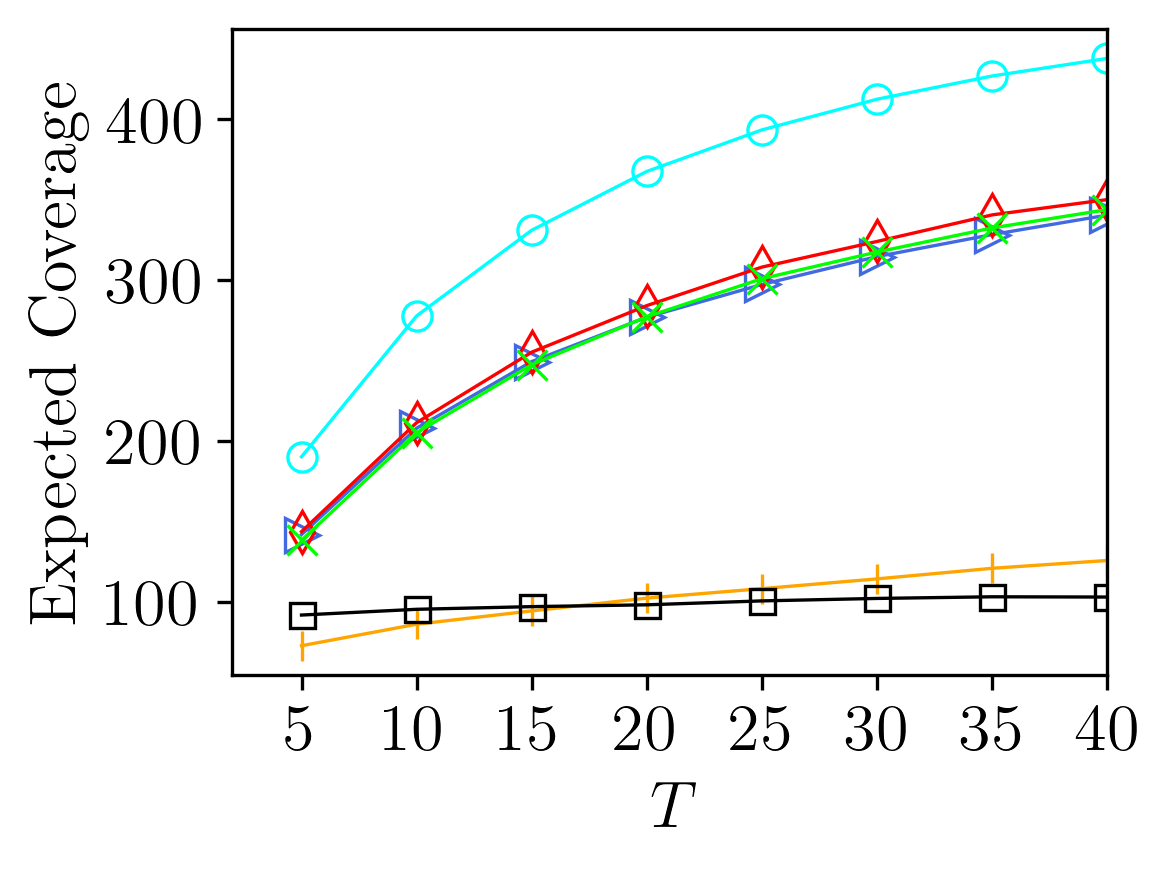}
		\caption{\small \rcv}
		\label{fig:rcv_T}
	\end{subfigure}
	\hfill
	\begin{subfigure}[b]{0.31\textwidth}
		\centering
		\includegraphics[width=\textwidth]{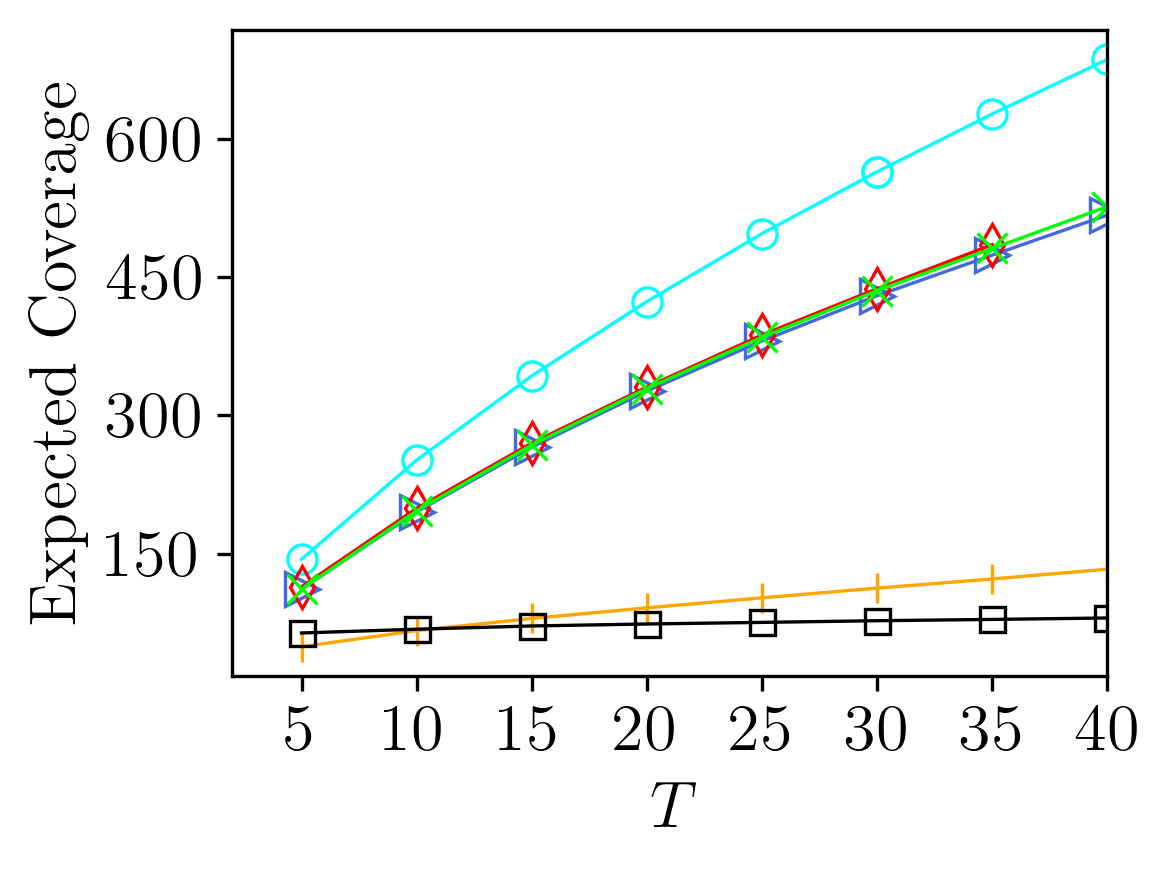}
		\caption{\small \amazon}
		\label{fig:amazon_T}
	\end{subfigure}
	\caption{\small Empirical variation in expected coverage as a function of number of visits $\numvisit$ on all datasets. Parameter $k = 10$, $\delta = 10$ and $\Delta \numvisit = 10$ are fixed. 
    }
	\label{fig:vary_T}
\end{minipage}

\begin{minipage}[t]{\textwidth}
    \includegraphics[width=\textwidth]{fig/legend_only_whole.png}
    \begin{minipage}[t]{0.24\textwidth}
        \centering
        \includegraphics[width=\textwidth]{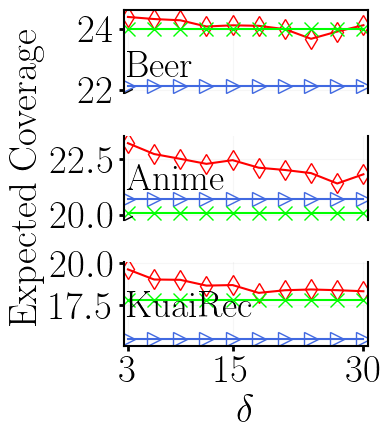}
        \subcaption{}
        \label{fig:delta}
    \end{minipage}
    \hfill
    \begin{minipage}[t]{0.24\textwidth}
        \centering
        \includegraphics[width=\textwidth]{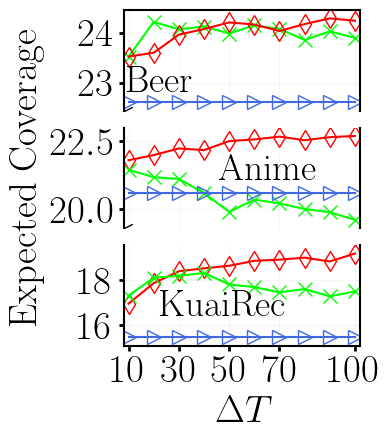}
        \subcaption{}
        \label{fig:DeltaT}
    \end{minipage}
    \hfill
    \begin{minipage}[t]{0.24\textwidth}
        \centering
        \includegraphics[width=\textwidth]{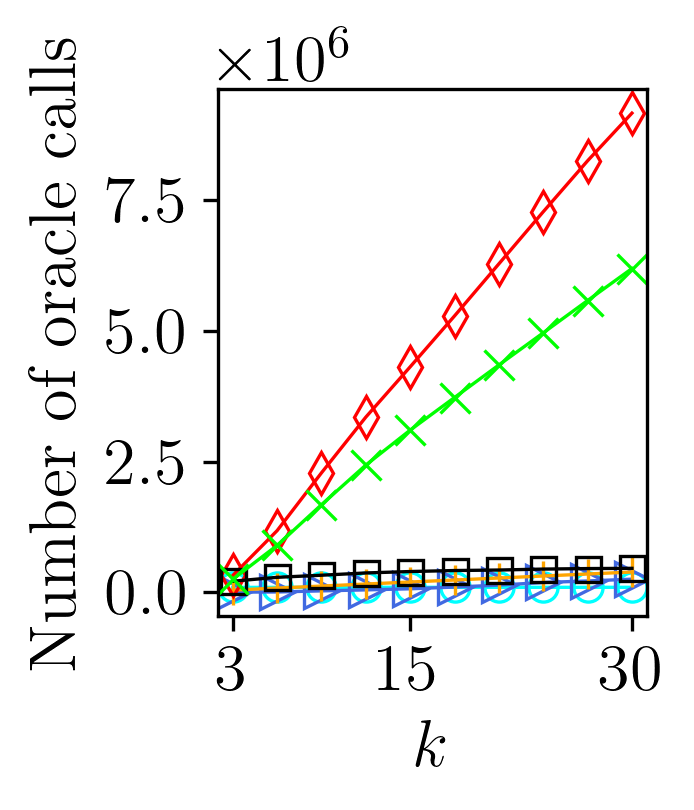}
        \subcaption{}
        \label{fig:oracle_k}
    \end{minipage}
    \hfill
    \begin{minipage}[t]{0.24\textwidth}
        \centering
        \includegraphics[width=\textwidth]{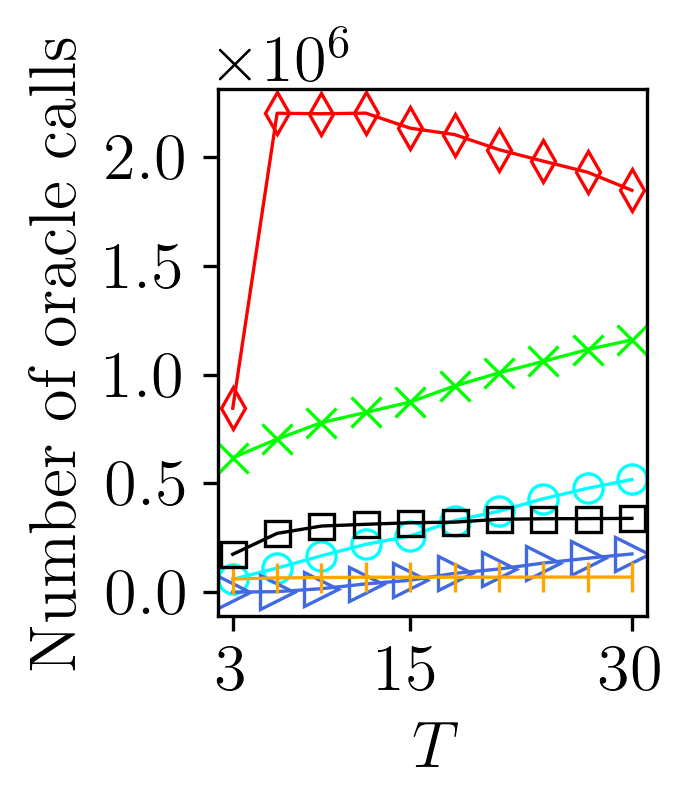}
        \subcaption{}
        \label{fig:oracle_T}
    \end{minipage}
    \caption{\small
        Left two: Impact of $\Delta T$ and $\delta$ on expected coverage. In (a) and (b), we fix $k = 5$ and $\numvisit = 10$. When varying $\delta$, we fix $\Delta T = 60$; when varying $\Delta T$, we fix $\delta = 10$.  
        Right two: Oracle call counts on the \Beer dataset. We fix $\delta = 25$ and $\Delta T = 45$. When varying $k$, we fix $T = 5$; when varying $T$, we fix $k = 5$.
    }
    \label{fig:combined_figures}
\end{minipage}
\end{figure*}

\subsection*{Results and discussion}

\textbf{Impact of $k$ and $\numvisit$ on the expected coverage.}
As shown in \Cref{fig:vary_budget}, \greedyalg achieves the best performance across all datasets and values of $k$. Among our proposed methods, \chekuriExact consistently performs the best, followed by \parallelalgo, then \chekuriUp.
Compared to baselines \sieve and \preemption, our algorithms generally perform better across datasets and values of $k$, with exceptions on \rcv, where \preemption performs comparably to \chekuriUp.
In \Cref{fig:vary_T}, we observe a trend shift: \parallelalgo slightly outperforms \chekuriUp and \chekuriExact across datasets and $\numvisit$ values. \chekuriUp and \chekuriExact perform comparably.
This performance shift is due to smaller $\delta$ and $\Delta \numvisit$ in this experiment, which lead to improved competitive ratios for \chekuriUp and \parallelalgo. Although \chekuriExact offers the strongest theoretical guarantee of the three algorithms, \chekuriUp and \parallelalgo can empirically outperform it (we give an example of this effect in \Cref{appendix:exp_results}).\looseness=-1

\textbf{Performance regarding parameters $\delta$ and $\Delta \numvisit$.}
\Cref{fig:delta} shows that \parallelalgo's expected coverage decreases as $\delta$ increases, consistent with theory: as $\delta$ grows from 3 to 30, the performance guarantee drops by a factor of 10. This is reflected empirically with coverage declines across datasets, for example, from 130 to 120 on \amazon.

\Cref{fig:DeltaT} shows that \chekuriUp's performance declines with increasing $\Delta \numvisit$, matching the theoretical drop in competitive ratio (e.g., $10$-fold decrease from $\Delta \numvisit = 10$ to 100). In contrast, \parallelalgo is theoretically unaffected.

\begin{table}[t]
\centering
\caption{Runtime and mem.~usage for the \amazon dataset. We set $k$,  $\numvisit$, and $\delta$ to 10, $\numvisitest = 20$.\looseness=-1}
\label{tab:runtimeMemory}
{\small
\resizebox{0.95\linewidth}{!}{ \renewcommand{\arraystretch}{0.9} \setlength{\tabcolsep}{8pt}
\begin{tabular}{lcccccc}
\toprule
& \makecell{\parallelalgo} 
& \makecell{\chekuriUp} 
& \makecell{\chekuriExact} 
& \makecell{\greedyalg} 
& \makecell{\sieve} 
& \makecell{\preemption} \\
\midrule
Runtime (s) & 35.13 & 7.94 & 4.38 & 88.13 & 307.43 & 136.04 \\
Memory (MB) & 2.00 & 1.28 & 0.72 & 28.99 & 0.91 & 0.03 \\
\bottomrule
\end{tabular}}
}
\end{table}

\textbf{Number of oracle calls, runtime and memory.}
\Cref{fig:oracle_k} shows that the number of oracle calls increases rapidly and linearly with budget $k$ for both \chekuriUp and \parallelalgo, while remaining relatively stable for other algorithms. Similarly, \Cref{fig:oracle_T} indicates a linear increase in oracle calls with user visits $\numvisit$ for \chekuriUp, \chekuriExact, and \greedyalg, but a much slower growth for \sieve and \preemption. Notably, \Cref{fig:oracle_T} also reveals a non-monotonic trend: oracle calls first rise, then decline with increasing $\numvisit$. This increase is due to the guessing strategy for $\numvisit$ under $\delta = 25$ and $\Delta \numvisitest = 45$, where the number of guesses $\mathcal{T}$ increases ($\mathcal{T} = \{25, 50\}$ for $T=3$ and $\mathcal{T} = \{25, 50, 75\}$ for $T \in [6, 30]$). The decrease is due to \parallelalgo's irrevocability---as more subsets are presented to the user, fewer items remain in $\allpresentSet$ for potential swap, thereby reducing the number of subsequent oracle calls.

\Cref{tab:runtimeMemory} shows that our proposed algorithms require much less memory than \greedyalg, and use comparable memory to the other two baselines. \chekuriUp and \chekuriExact are also significantly faster than the remaining algorithms.

\section{Conclusion}
\label{sec:conclusion}
We investigate a novel online setting for streaming submodular maximization where users submit on-demand requests throughout the stream. 
In this context, we introduce the \emph{Streaming Stochastic Submodular Maximization with On-demand Requests} (\ssmor) problem with the goal of maximizing the expected coverage of the selected items. %
We first propose a memory-efficient approximation algorithm that highlights the trade-off between memory usage and solution quality, but its competitive ratio can be poor. 
To solve this, we propose a parameter-dependent memory-efficient approximation algorithm
to trade off between the competitive ratio and the memory and time usage. 
Our empirical evaluations show that our proposed algorithms consistently outperform the baselines.
We believe our approach will have a positive social impact.
For instance, it can be applied to enhance the diversity of news content recommended to the users, thereby helping to expose them to multiple viewpoints and broaden their knowledge.
While our approach optimizes topic coverage and can be used to promote content diversity, its misuse, such as deliberately optimizing a submodular utility function that favors polarized or biased content, could risk amplifying existing biases. However, such outcomes would require intentional manipulation of the objective function. When applied responsibly, our approach does not pose such risks.

\begin{ack}
This research is supported by 
the ERC Advanced Grant REBOUND (834862),
the Swedish Research Council project ExCLUS (2024-05603),
and the Wallenberg AI, Autonomous Systems and Software Program (WASP) funded by the Knut and Alice Wallenberg Foundation.
\end{ack}

\bibliographystyle{ACM-Reference-Format}
\bibliography{reference}

\clearpage

\appendix
\section*{Appendix}
\section{Preliminaries}
\label{sec:preliminaries}

We use $[j]$, with $j \in \mathbb{N}$, to denote the set $\{1, 2, \ldots, j\}$.
A \emph{set system} is a pair $(U, \mathcal{I})$, 
where~$U$ is a ground set and $\mathcal{I}$ is a family of subsets of $U$.

\begin{definition}[Matroid]
A set system $(U, \mathcal{I})$ is a \emph{matroid} if 
\begin{itemize}
    \item [\emph{(i)}] $\emptyset \in \mathcal{I}$,
    \item [\emph{(ii)}] for $B \in \mathcal{I}$ and any $A \subseteq B$ we have $A \in \mathcal{I}$, and 
    \item [\emph{(iii)}] if $A \in \mathcal{I}$, $B \in \mathcal{I}$ and $|A| < |B|$, 
then there exists $e \in B \setminus A$ such that $A \cup \{e\} \in \mathcal{I}$.
\end{itemize}
\end{definition}

\smallskip
\begin{definition}[Partition matroid]
\label{def:partition_matroid}
A matroid $\mathcal{M} = (U, \mathcal{I})$ is a \emph{partition matroid} if there exists a partition 
$\{E_i\}_{i=1}^{r}$~of $U$, i.e., $U = \cup_{i=1}^{r} E_i$ and $E_i \cap E_j = \emptyset$ for all $i \neq j$, and non-negative integers $k_1, \ldots, k_r$, 
such that $\mathcal{I} = \{I \subseteq U \colon |I \cap E_i| \leq k_i \text{ for all } i \in [r]\}$.
\end{definition}

A non-negative set function $f \colon 2 ^U \rightarrow \mathbb{R}_{\geq 0}$ is called \emph{submodular} 
if for all sets $A \subseteq B \subseteq U$ and all elements $e \in U \setminus B$, 
it is $f(A \cup \{e\}) - f(A) \geq f(B \cup \{e\}) - f(B)$. 
The function $f$ is \emph{monotone} if for all $A \subseteq B$ it is $f(A) \leq f(B)$.
Let $U$ be a finite ground set, and let $f \colon 2^{U} \rightarrow \mathbb{R}_{\geq 0}$ be a \emph{monotone submodular} set function. The \emph{constrained submodular maximization problem} is to find $\max_{S \in \mathcal{I}} f(S)$,
where $\mathcal{I} \subseteq 2^U$ represents the feasibility constraints.

For ease of notation, for any two sets \(A \subseteq U\) and \(B \subseteq U\), 
and any item \(e \in U\), we write \(A + B\) for \(A \cup B\), 
\(A + e\) for \(A \cup \{e\}\),  
\(A - B\) for \(A \setminus B\), 
and \(A - e\) for \(A \setminus \{e\}\). 
We write \(f(A \mid B)\) and  \(f(e \mid B)\) for the incremental gain respectively
defined as \(f(A \cup B) - f(B)\) and \(f(A \cup \{e\}) - f(A)\).

\section{Omitted proofs of Section~\ref{sec:problem_formulation}}
\label{appendix:formulationAndReduction}

\fsubmodular*
\begin{proof}

For any set $\presentSet$, we can rewrite \Cref{eq:fs_definition}. 
Denote $\allpresentSet = \cup_{t=1}^{T} \presentSet{t}$ and we allow multiple copies of $\collect{i}$ to appear in $\allpresentSet$ (the copies come from each $\presentSet{t}$).
Note that $\allpresentSet$ is a multi-set, and

\begin{equation*}
    f(\presentSet) = \sum_{j=1}^d \left(1 - \prod_{t \in [T]} \prod_{\collect{i} \in \presentSet{t}, \collect{i} \ni \topic{j}} (1 - p_i)\right) = \sum_{j=1}^{\numtopics} \left(1 - \prod_{\collect{i} \in \presentSet , \collect{i} \ni \topic{j}} (1 - p_i)\right).
\end{equation*}

Let $g_j(\presentSet) = 1 - \prod_{\collect{i} \in \presentSet , \collect{i} \ni \topic{j}} (1 - p_i)$.
Then $f(\presentSet) = \sum_{j=1}^{\numtopics} g_j(\presentSet)$. 
To prove the lemma, it suffices to show that $g_j$ is non-decreasing and submodular for all $j \in [\numtopics]$. 

\textbf{Non-decreasing property}: For any two sets $\X \subseteq \Y$ and for all $j \in [\numtopics]$, it is
\[
g_j(\Y) - g_j(\X) =  \left(1 - \prod_{\collect{i} \in {\Y \setminus \X} , \collect{i} \ni \topic{j}} (1 - p_i) \right)\prod_{\collect{i} \in \X , \collect{i} \ni \topic{j}} (1 - p_i) \geq 0.
\]

\textbf{Submodularity}: Let \(\X \subseteq \Y\), and let \(Z \in \overline{\Y}\) be an element not in \(\Y\). Then, for all \(j \in [\numtopics]\), the following holds:
\begin{align*}
        & g_j(\Y \cup \{Z\})  - g_j(\Y) - \left[ g_j(\X \cup \{Z\}) - g_j(\X) \right]  \\
        & = \left(1 - \prod_{\collect{i} \in \{Z\} , \collect{i} \ni \topic{j}} (1 - p_i) \right)\prod_{\collect{i} \in \Y , \collect{i} \ni \topic{j}} (1 - p_i)  - \left(1 - \prod_{\collect{i} \in \{Z\} , \collect{i} \ni \topic{j}} (1 - p_i) \right)\prod_{\collect{i} \in \X , \collect{i} \ni \topic{j}} (1 - p_i)  \\
        & = \left(1 - \prod_{\collect{i} \in \{Z\} , \collect{i} \ni \topic{j}} (1 - p_i) \right)\left(\prod_{\collect{i} \in {\Y \setminus \X} , \collect{i} \ni \topic{j}} (1 - p_i) - 1\right ) \prod_{\collect{i} \in \X , \collect{i} \ni \topic{j}} (1 - p_i) \leq 0.
\end{align*}

\end{proof}

\begin{algorithm}[t]
    \label[algorithm]{alg:localgreedy}
    \caption{\greedyalg}
    \Input{Budget $k\in\mathbb{N}$.}
    $\presentSetGreedy{0}, \cdots, \presentSetGreedy{\numvisit} \gets \emptyset$, $\presentSetGreedy \gets \emptyset$, $\universeSet^{1} \gets \emptyset$, $t=1$
    
    \For {$(\collect, \access, p)$ in the stream}{
        $\universeSet^{t} \gets \universeSet^{t} \cup \{\collect \}$ \;
        \If{$\access = 1$}{
            $\Nset{t} \gets \universeSet^{t}$ \;
            \For{$j = 1$ to $k$}{
            $\collect^* \gets \argmax_{\collect \in \Nset{t}} \cov(\presentSetGreedy \cup \presentSetGreedy{t} \cup \{\collect\})$ \; \label{line:greedydesign}
            $\presentSetGreedy{t} \gets \presentSetGreedy{t} \cup \{\collect^*\}$ \;
            $\Nset{t} \gets \Nset{t} \setminus \collect^*$ \;
            }
            \KwOutput $\presentSetGreedy{t}$ \;
            $\allpresentSetGreedy = \allpresentSetGreedy \cup \presentSetGreedy{t}$\;
            $\universeSet^{t+1} \gets \universeSet^{t}$\;
            $t \gets t+1$\;
        }
        }  

  \end{algorithm}

\greedyratio*
\begin{proof}

We first prove that \Cref{alg:localgreedy} is $1/2$ competitive for \Cref{problem:streaming_coverage}.

    Let $\Oset{1}, \cdots, \Oset{T}$ be the optimal output sets, and let $\presentSetGreedy{1}, \cdots, \presentSetGreedy{T}$ be the output sets obtained by \Cref{{alg:localgreedy}}. Since we treat each copy of the same item as a distinct item, it follows that for any $i \neq j$, $\Oset{i} \cap \Oset{j} = \emptyset$ and $\presentSetGreedy{i} \cap \presentSetGreedy{j} = \emptyset.$ We let $\presentSetGreedy = \bigcup_{t=1}^{T} \presentSetGreedy{t}$ and $\Oset = \bigcup_{t=1}^{T} \Oset{t}$.
    
    We prove that the following holds:
    \begin{align}
    \label{eq:greedyapprx}
            \cov(\allpresentSetOpt) & \stackrel{(a)}{\leq} \cov(\allpresentSetGreedy) + \sum_{\collect \in \allpresentSetOpt \setminus \allpresentSetGreedy} f(V \mid \allpresentSetGreedy) \stackrel{(b)}{=} 
            \cov(\allpresentSetGreedy) + \sum_{i=1}^{T} \sum_{\collect \in \presentSetOpt{i} \setminus \presentSetGreedy }f(V \mid \allpresentSetGreedy)\\ \nonumber
            & \stackrel{(c)}{\leq} \cov(\allpresentSetGreedy) + \sum_{t=1}^{T}\sum_{V \in \presentSetOpt{i}} f(V \mid \allpresentSetGreedy) \stackrel{(d)}{\leq} \cov(\allpresentSetGreedy) + \sum_{t=1}^{T}\sum_{\collect \in \presentSetOpt{i}} f(V \mid \bigcup_{j=1}^{i}\presentSetGreedy{j}).
    \end{align}
Inequalities $(a)$ and $(d)$ holds by submodularity, $(c)$ holds by monotonicity, and 
equality $(b)$ holds because $\{\presentSetOpt{1}, \cdots, \presentSetOpt{T}\}$ 
form a partition of $\presentSetOpt$.

Next, we bound the final term of \Cref{eq:greedyapprx}. 
We denote $\presentSetOpt{i} = \{O^{i}_1, \cdots, O^{i}_k\}$, and $\presentSetGreedy{i} = \{G^{i}_1, \cdots, G^{i}_k\}$, with $G_j^i$ representing the $j$-th item that is added to $\presentSetGreedy{i}$ during the greedy selection steps. Note that, we can assume $|\presentSetGreedy| = |\presentSetOpt| =k$, because $f$ is monotone. 
The following~holds 
\begin{align}
\label{eq:boundterm}
    \sum_{\collect \in \presentSetOpt{i}} f(\collect \mid \bigcup_{j=1}^{i} \presentSetGreedy{j}) &= \sum_{l = 1}^{k} f(O_l^i \mid \bigcup_{j=1}^{i} \presentSetGreedy{j}) \stackrel{(a)}{\leq } \sum_{l = 1}^{k} f(O_l^i \mid \bigcup_{j=1}^{i-1} \presentSetGreedy{j} + \bigcup_{s=1}^{l-1} G_s^i) \\ \nonumber
    & \stackrel{(b)}{\leq } \sum_{l = 1}^{k} f(G_l^i \mid \bigcup_{j=1}^{i-1} \presentSetGreedy{j} + \bigcup_{s=1}^{l-1} G_s^i) = f(\bigcup_{j=1}^{i} \presentSetGreedy{j} - \bigcup_{j=1}^{i-1} \presentSetGreedy{j}) ,
\end{align}
where inequality $(a)$ holds by submodularity, and inequality $(b)$ holds by design of the greedy algorithm (Line~\ref{line:greedydesign} of~\Cref{alg:localgreedy}).

Combining \Cref{eq:greedyapprx} and \Cref{eq:boundterm} finishes the proof:
    \begin{align}
            \cov(\presentSetOpt) &\leq \cov(\allpresentSetGreedy) + \sum_{t=1}^{T} f(\bigcup_{j=1}^{i} \presentSetGreedy{j} - \bigcup_{j=1}^{i-1} \presentSetGreedy{j}) = 2f(\presentSetGreedy) - f(\presentSetGreedy{0}) \leq 2f(\presentSetGreedy).
    \end{align}

Furthermore, we prove that the competitive ratio of $1/2$ is tight for \Cref{problem:streaming_coverage}.

Consider a simple adversarial example with budget $k = 1$ and a stream given by 
\[
\stream = \left((\collect{1}, 0, 1),\, (\collect{2}, 1, 1),\, (\collect{3}, 1, 1)\right),
\]
where $\collect{1} = \{\topic{1}, \topic{2}\}$ and $\collect{2} = \{\topic{3}, \topic{4}\}$. For any algorithm, if it selects $\presentSet{1} = \{\collect{1}\}$ as the first set to present, the adversary sets $\collect{3} = \{\topic{1}, \topic{2}\}$. Conversely, if the algorithm selects $\presentSet{1} = \{\collect{2}\}$, then the adversary sets $\collect{3} = \{\topic{3}, \topic{4}\}$.

In either case, the algorithm can only cover two topics, while the optimal solution covers all four. Thus, the competitive ratio is at most $\frac{2}{4} = \frac{1}{2}$, which completes the proof.

\end{proof}

\section{Omitted proofs of Section~\ref{sec:small_memory}}
\label{appendix:smallmemo}

Before presenting the omitted proofs in \Cref{sec:small_memory}, we provide a detailed description of the reductions used in our analysis.

We begin by formally defining the streaming submodular maximization problem subject to a partition matroid constraint (\ssmm). Our definition adapts the streaming submodular maximization framework under a $p$-matchoid constraint, as introduced by \citet{chekuri2015streaming}, and is stated as follows:

\begin{problem}[\textbf{\ssmm}]
\label{problem:StreamSubMaxMatroid}
Let $\mathcal{M} = (\firststream, \mathcal{I})$ be a partition matroid
and $\hat{f}$ a submodular function. The elements of \firststream are presented in a stream, and we order \firststream by order of appearance.
The goal of the \ssmm problem is to select a subset of items $\presentSetTemp \in \mathcal{I}$ that maximizes $\hat{f}(\presentSetTemp)$.
\end{problem}

We demonstrate two types of reductions from an instance of \ssmor to an instance of \ssmm, 
depending on whether we know the exact value of $\numvisit$. 
\Cref{alg:streaming_greedy_atmostT_complete}, as presented in the main content, is based on the second reduction, where we only know an upper bound $\numvisitest$ of $\numvisit$.

\subsection{The case when the exact number of visits \numvisit is known}
\label{section:reduction}

First, we show that if the exact number of user visits $\numvisit$ is known, the \ssmor problem reduces to the \ssmm problem. 
Specifically, any streaming algorithm for \ssmm that satisfies the \emph{irrevocable output requirement} can solve \ssmor while maintaining the same competitive ratio. 
The \emph{irrevocable output requirement} requires that the output subset for any partition becomes fixed irrevocably once all its items have appeared in the stream.

\paragraph{The reduction.} 

Given an instance of the \ssmor{} problem, where the stream is given by
\[
\stream = \left( (\collect{1}, p_1, \access_1), \ldots, (\collect{N}, p_N, \access_N) \right),
\]
with a budget $k$, and the function $f$ as defined in~\Cref{eq:fs_definition}, we can construct a corresponding instance of the \ssmm problem.

Recall that $\request_t$ denotes the time step at which the user submits the $t$-th visit (with $\request_0 = 0$), and $\collect{i}^{t}$ denotes a copy of item $\collect{i}$ with copy identifier $t$. We use $\bigsqcup$ to denote the concatenation of streams.
An instance of the \ssmm problem with input stream $\bar{\stream}$ can be constructed as~follows:
\begin{enumerate}
    \item 
    Let $ \firststream_{t} = \bigsqcup_{i = \request_{t-1} + 1}^{\request_t} (\collect{i}^{t}, \cdots, \collect{i}^{\numvisit})$ for all $t\in[\numvisit]$, to be specific,
        \begin{equation*}
            \firststream_{t} = \left(\underbrace{\collect{\request_{t-1}+1}^t, \cdots, \collect{\request_{t-1}+1}^{\numvisit}}_{i = r_{t-1} + 1}, \underbrace{\collect{\request_{t-1}+2}^t, \cdots, \collect{\request_{t-1}+2}^T}_{i = r_{t-1} + 2}, \cdots, \underbrace{\collect{\request_{t}}^t, \cdots,  \collect{\request_{t}}^\numvisit}_{i = r_t} \right).
    \end{equation*}   
    
    \item Let $\firststream = \bigsqcup_{t=1}^{T} \firststream_{t}$ be the concatenation of sub-streams $\firststream_{1},\cdots, \firststream_{\numvisit}$, i.e., $\firststream = (\firststream_{1}, \cdots, \firststream_{\numvisit})$. 
        
    \item Let $\universeSet^{t} = \{\collect{i}^{t}\}_{i=1}^{\request_t} = \{\collect{1}^{t}, \ldots, \allowbreak \collect{\request_t}^{t}\}$. 
    
    \item Construct ${\mathcal{I}}$ in the following way: for any set $X \subseteq \bigcup_{t=1}^T\universeSet^{t}$, if $|X \cap \universeSet^{t}| \leq k$ for all $1 \leq t \leq T$, then $X \in \mathcal{I}$.

    \item Let $p_i^t = p_i$ denote a copy of $p_i$ with copy identifier $t \in [\numvisit]$. 
Define $\hat{f}: 2^{\bigcup_{t=1}^{\numvisit} \universeSet^t} \rightarrow \mathbb{R}_{\geq 0}$. 
Specifically, for any $\presentSetTemp \subseteq \bigcup_{t=1}^{T} \universeSet^t$, let $\presentSetTemp^t = \presentSetTemp \cap \universeSet^t$. 
Then, $\hat{f}$ is defined as follows:

    \begin{equation}
    \label{eq:ssmm_f_definition}
	\hat{f}(\presentSetTemp) =
	\sum_{j=1}^d \left(1 - \prod_{t \in [\numvisit]}\prod_{\collect{i}^{t} \in \presentSetTemp^t, \collect{i}^{t} \ni \topic{j}} (1 - p_i^t)\right),
\end{equation}
and the goal of \ssmm is to find $\presentSetTemp \in \mathcal{I}$ that maximizes $\hat{f}(\presentSetTemp)$.
\end{enumerate}

\begin{figure}[t]
    \centering
    \includegraphics[width=0.75\textwidth]{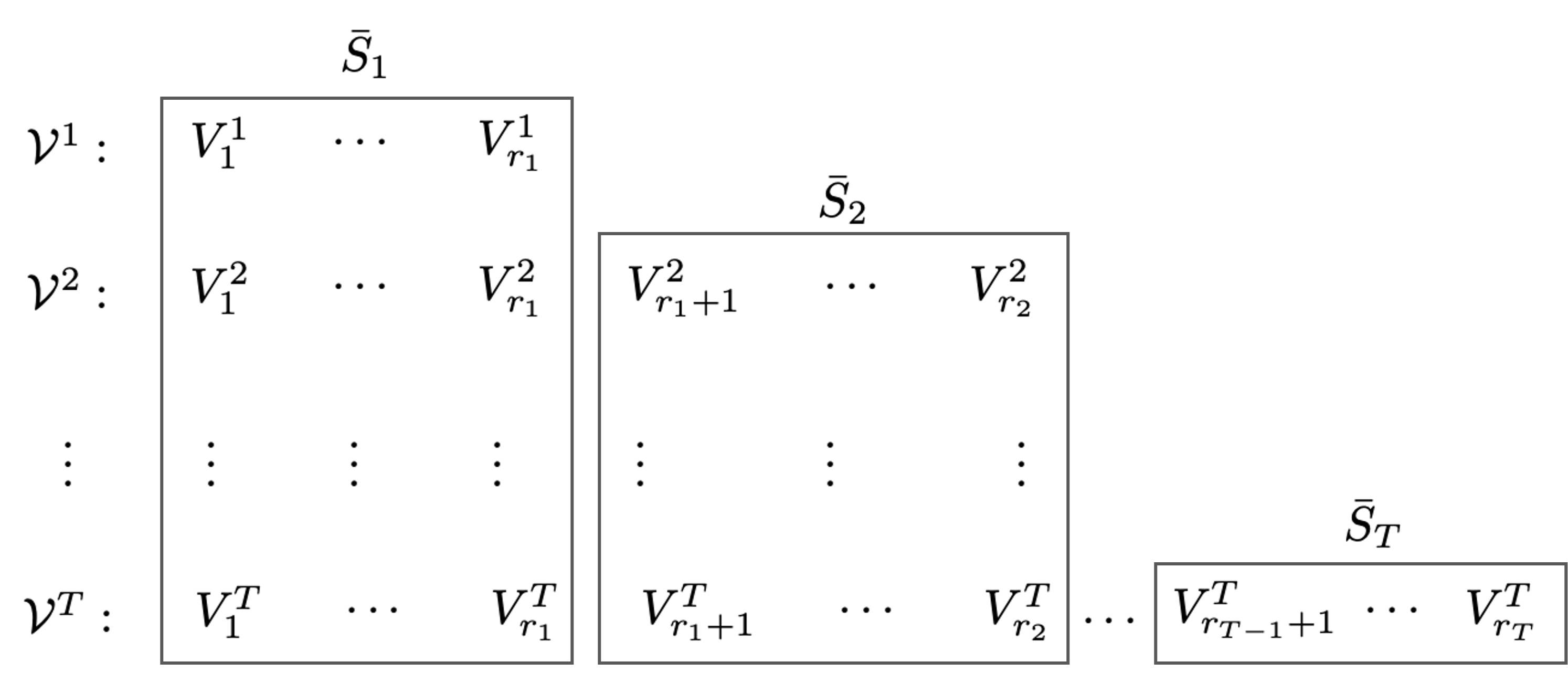}
    \caption{Illustration of the reduction when the exact number of visits $\numvisit$ is known. The original item stream for the \ssmor problem is $(V_1, \ldots, V_{r_T})$, with user accesses occurring at time steps $r_1, \ldots, r_T$. The constructed stream for the \ssmm problem is $\bar{\mathbf{S}} = \bigsqcup_{t=1}^{T} \bar{\mathbf{S}}_t$. }
    \label{fig:reduction_case_1}
\end{figure}

We observe that $\mathcal{M} = (\firststream, \mathcal{I})$, as defined above, forms a partition matroid on the constructed stream $\secondstream$, with $\secondstream = \bigcup_{t=1}^{T} \universeSet^{t}$ and $\universeSet^i \cap_{i \neq j} \universeSet^j = \emptyset$.
By construction, for each visit $t \in [\numvisit]$, the partition $\universeSet^{t}$ contains exactly the set of items that arrived in the system before the $t$-th user visit in the \ssmor~problem.
Since the \ssmm~problem and the \ssmor~problem share equivalent objective functions and equivalent constraints, we conclude the following: for any optimal solution $\bar{\presentSetOpt} \in \mathcal{I}$ of the \ssmm~problem, the sequence of subsets 
\[
\bar{\presentSetOpt} \cap \universeSet^{1}, \;\ldots,\; \bar{\presentSetOpt} \cap \universeSet^{\numvisit}
\]
yields optimal output for the \ssmor~problem at each corresponding $t$-th user visit.

We can further conclude that any streaming algorithm achieving an $\alpha$-competitive solution for the \ssmm~problem also yields an $\alpha$-competitive solution for the \ssmor~problem.
For this equivalence to hold, the streaming algorithm for \ssmm~must satisfy the \emph{irrevocable output requirement}: for each partition $\universeSet^t $ of the stream, once the final item from this partition has arrived, the solution that belongs to this partition, $\presentSetTemp \cap \universeSet^t$, can no longer be changed.
This requirement excludes algorithms like that of \citet{feldman2021streaming}, whose solution set only becomes available after the stream terminates. In contrast, the algorithms proposed by \citet{chakrabarti2015submodular}, \citet{chekuri2015streaming}, and \citet{feldman2018less} are suitable. These three algorithms all meet the requirement while providing $1/4$-competitive solutions for the \ssmm problem.

\subsection{The case when only an upper bound \numvisitest of \numvisit is available}
\label{section:goodsolution}

Consider the case where only an upper bound $\numvisitest$ of the true number of user visits $\numvisit$ is known. We prove the following:
(1)~An instance of the \ssmm~problem can be constructed such that any feasible solution to the \ssmm~problem contains a valid solution for the \ssmor~problem.
(2)~Any streaming algorithm that solves the \ssmm~problem and satisfies the \emph{irrevocable output requirement} can be adapted to solve the \ssmor~problem, albeit with a reduced competitive ratio. Specifically, the competitive ratio decays by a factor of $1/(\numvisitest - \numvisit + 1)$.

In contrast to the reduction we present in \Cref{section:reduction}, the construction of the stream for the \ssmm~problem depends on both the original stream $\stream$, and the algorithmic procedure described in \Cref{alg:streaming_greedy_atmostT_complete}.
We will prove that:
\begin{enumerate}
    \item \Cref{alg:streaming_greedy_atmostT_complete} generates a corresponding stream \finalstream for the \ssmm~problem
    \item The candidate sets $\{\presentSetTemp{1}, \cdots, \presentSetTemp{\numvisitest}\}$ maintained by \Cref{alg:streaming_greedy_atmostT_complete}
    forms a solution to the \ssmm~problem; the output $\{\presentSetGreedy{1}, \cdots, \presentSetGreedy{\numvisit}\}$ of \Cref{alg:streaming_greedy_atmostT_complete} forms a solution to the \ssmor~problem. 
\end{enumerate}

By \Cref{alg:streaming_greedy_atmostT_complete} line \ref{alg:mapping}, we can construct a bijective mapping $\pi$
between the index sets of $\{\presentSetGreedy{1}, \ldots, \presentSetGreedy{\numvisit}\}$ and $\{\presentSetTemp{1}, \ldots, \presentSetTemp{\numvisitest}\}$. This mapping satisfies the equality:
\[
\presentSetGreedy{t} = \presentSetTemp{\pi(t)}, ~\text{ for all } t \in [\numvisit].
\]

\begin{figure}
    \centering
    \includegraphics[width=0.8\linewidth]{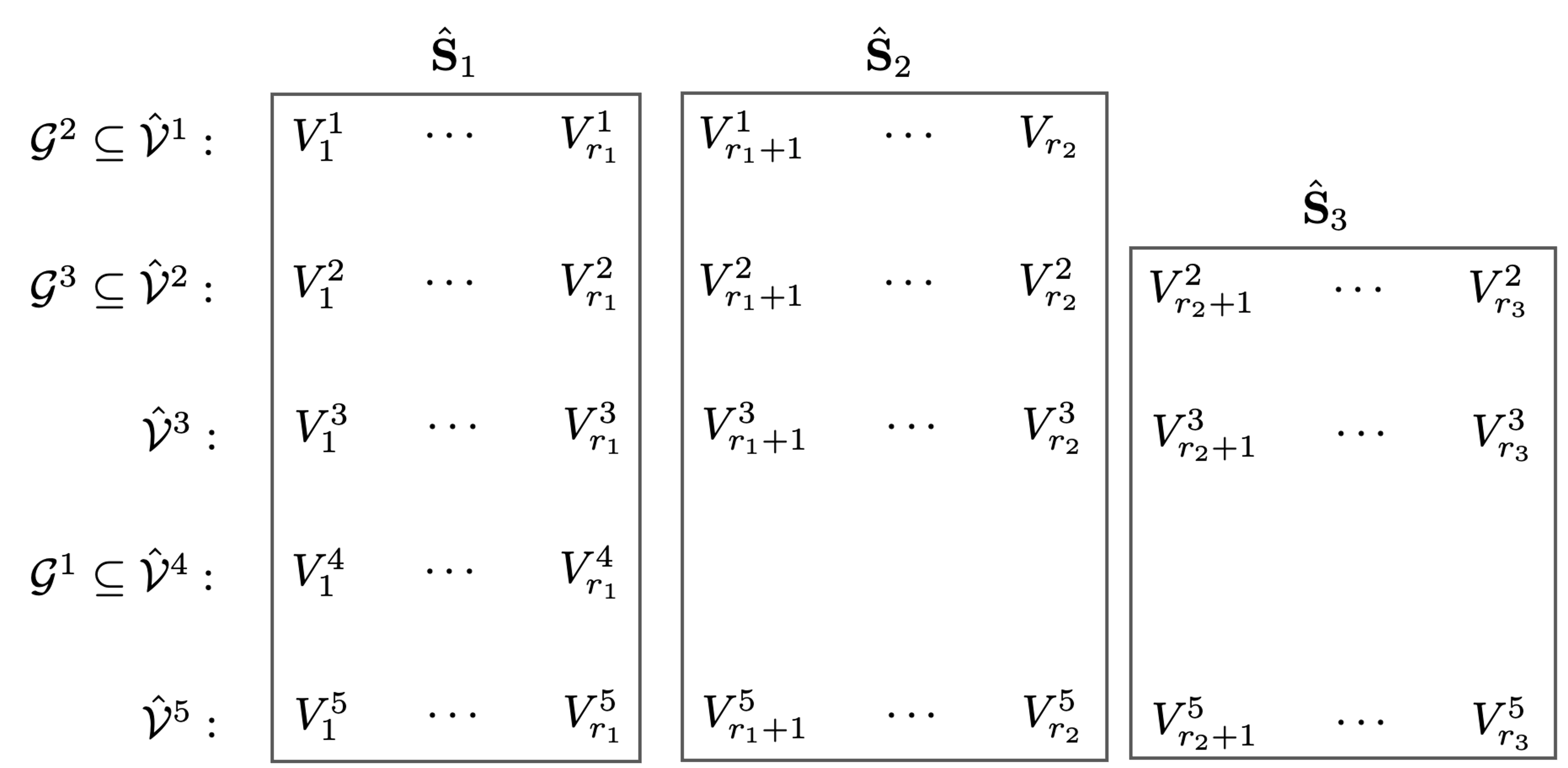}
    \caption{
        An example illustrating the reduction when only an upper bound $T'$ of the true number of visits $T$ is available. 
        In this example, $T = 3$ and $T' = 5$. The original item stream for the \ssmor problem is given by $(V_1, V_2, \ldots, V_{r_3})$, with user visits occurring at time steps $r_1$, $r_2$, and $r_3$.
        The constructed stream for the \ssmm problem is $\finalstream = \bigsqcup_{t=1}^{3} \hat{\mathbf{S}}_t$, where
        $\hat{\mathbf{S}}_1 = \bigsqcup_{i=1}^{r_1} \bigsqcup_{a \in [5]} (V_i^a)$, 
        $\hat{\mathbf{S}}_2 = \bigsqcup_{i=r_1+1}^{r_2} \bigsqcup_{a \in \{1,2,3,5\}} (V_i^a)$, and
        $\hat{\mathbf{S}}_3 = \bigsqcup_{i=r_2+1}^{r_3} \bigsqcup_{a \in \{2,3,5\}} (V_i^a)$.
        \Cref{alg:streaming_greedy_atmostT_complete} maintains and updates $\mathcal{A}^t \in \hat{\mathcal{V}}^t$ for all $t \in [5]$ upon the arrival of each item in $\finalstream$. At the visit time steps $r_1$, $r_2$, and $r_3$, it outputs 
        $\mathcal{G}^1 = \mathcal{A}^4$, 
        $\mathcal{G}^2 = \mathcal{A}^1$, and 
        $\mathcal{G}^3 = \mathcal{A}^2$, respectively.
        By the end of the stream, the union $\bigcup_{t=1}^{5} \mathcal{A}^t$ constitutes a $1/4$-competitive solution for the \ssmm problem on input $\finalstream$. The combined solution $\bigcup_{t=1}^{3} \mathcal{G}^t$ is a $\frac{1}{4 \times (5 - 3 + 1)}$-competitive solution for the original \ssmor problem.
}
    \label{fig:reduction_case2}
\end{figure}

Let $\activeIndexes^0 = \{1, 2, \ldots, \numvisitest\}$ denote the initial index set of all active candidate sets. For each $t \geq 1$, let $\activeIndexes^{t}$ represent the index set of active candidate sets before the $t$-th visit and after the $(t-1)$-th visit. The active index sets evolve as follows:

    (1)~For all $t \in [\numvisit]$: 
    \(
    \activeIndexes^{t} = \activeIndexes^{t-1} \setminus \{\pi(t)\},
    \) and

    (2)~for all $t \in \{\numvisit+1, \ldots, \numvisitest\}$:
    \(
    \activeIndexes^{t} = \activeIndexes^{\numvisit}.
    \)

\paragraph{The reduction.} 
\begin{enumerate}
\item We construct the stream \finalstream for \ssmm as follows
\begin{equation*}
    \finalstream_t = \left\{\begin{matrix}
\bigsqcup_{i=\request_{t-1}+1}^{\request_t} \bigsqcup_{a \in \activeIndexes^t}\left(\collect_i^a\right), &
\text{ for } t \in [\numvisit], \\
\bigsqcup_{i=\request_{\numvisit-1}+1}^{\request_\numvisit} \bigsqcup_{a \in \activeIndexes^\numvisit} \left(\collect_i^a\right),  & 
\text{ for } t \in \{\numvisit + 1, \cdots, \numvisitest\}. \\
\end{matrix}\right.
\end{equation*}

    \item Let $\finalstream = \bigsqcup_{t=1}^{\numvisit} \finalstream_{t}$ be the concatnation of $\finalstream_{1},\cdots, \finalstream_{\numvisit}$.
    
    \item Let $\rho(t) = \pi^{-1}(t)$, and 
    \begin{equation*}
        \finalUniverseSet^{t} = \left\{\begin{matrix}
\{\collect{i}^{t}\}_{i=1}^{\request_{\rho(t)}} = 
\{\collect{1}^{t}, \ldots, \allowbreak \collect{\request_{\rho(t)}}^{t}\}, & 
    \text{ for } t \in \{\pi(1), \cdots, \pi(T)\}, \\
\{\collect{1}^{t}, \cdots, \collect{\request_{T}}^{t}\}, &
    \text{ for } t \in \activeIndexes^0 \setminus \{\pi(1), \cdots, \pi(T)\}. \\
\end{matrix}\right.
    \end{equation*}

\item Construct $\hat{\mathcal{I}}$ in the following way: for any set $X \subseteq \bigcup_{t=1}^{\numvisitest} \finalUniverseSet^{t}$, if $|X \cap \finalUniverseSet^{t}| \leq k$ for all $1 \leq t \leq \numvisitest$, then $X \in \hat{\mathcal{I}}$.

\item Let $p_i^t = p_i$ denote a copy of $p_i$ with copy identifier $t \in [\numvisitest]$. The goal is to select $\presentSetTemp = \bigcup_{t=1}^{\numvisitest}\presentSetTemp{t} \in \hat{\mathcal{I}} $ that maximizes $f(\presentSetTemp)$. The submodular function $\hat{f}$ is given as follows:
    \begin{equation}  
    \label{eq:ssmm_f_definition}
	\hat{f}({\presentSetTemp}) =
	\sum_{j=1}^d \left(1 - \prod_{t \in [\numvisitest]}\prod_{\collect{i}^{t} \in {\presentSetTemp^t}, \collect{i}^{t} \ni \topic{j}} (1 - p_i^t)\right).
\end{equation}
\end{enumerate}

Given the above construction, we can verify that 
$\mathcal{M} = (\finalstream, \hat{\mathcal{I}})$ is an 
instance of a 
partition matroid
defined on the constructed stream $\finalstream$, with $\finalstream = \bigcup_{t=1}^{\numvisitest} \finalUniverseSet^{t}$ and $\finalUniverseSet^i \cap_{i \neq j} \finalUniverseSet^j = \emptyset$.
By construction, the partitions $\finalUniverseSet^{\pi(t)}$ of \finalstream satisfy:
(1)~for each $t \in [\numvisit]$, $\finalUniverseSet^{\pi(t)}$ contains exactly the set of items that arrived before the $t$-th user visit in the \ssmor~problem
(2)~The remaining $\numvisitest - \numvisit$ partitions each contain the same set of items that arrived before the final user visit.
The design ensures that each output $\presentSetGreedy{t}$ satisfies 
$\presentSetGreedy{t} \subseteq \finalUniverseSet^{\pi(t)}$ and $|\presentSetGreedy{t}| \leq k$, 
for all $t \in [\numvisit]$. 
Consequently, the sequence $\presentSetGreedy{1}, \ldots, \presentSetGreedy{\numvisit}$ constitutes feasible output for the \ssmor~problem.

Following \Cref{alg:streaming_greedy_atmostT_complete}, we conclude that for all $t \in [\numvisitest]$, we have $\presentSetTemp{t} \subseteq \finalUniverseSet^{t}$ with $|\presentSetTemp{t}| \leq k$. 
Consequently, the union $\bigcup_{t=1}^{\numvisitest} \presentSetTemp{t}$ forms a feasible solution to the \ssmm~problem.

Since Algorithm~\ref{alg:streaming_greedy_atmostT_complete} essentially applies the streaming algorithm of \citet{chekuri2015streaming} to the constructed stream \finalstream, the solution $\bigcup_{t=1}^{\numvisitest} \presentSetTemp{t}$ achieves a $1/4$-approximation guarantee for the \ssmm problem.

\subsection{Omitted proofs}

\atMostT*
\begin{proof}

Following the notations and conclusions we have in \Cref{section:goodsolution}, we are ready to prove \Cref{theorem:atMostT}.
Let $\presentSetOpt^1, \cdots, \presentSetOpt^\numvisit$ be the optimal solution for the \ssmor problem, and let $\finaloptSet = \bigcup_{t=1}^{\numvisitest} \finaloptSet^t$, where $\finaloptSet^t = \finaloptSet \cap \finalUniverseSet^t$, denotes the optimal solution for the \ssmm. 
We have

\begin{align}
\label{ineq:storm_competitive_ratio}
    f(\bigcup_{t=1}^{T} \presentSetGreedy{t}) & = f(\bigcup_{t=1}^{T-1} \presentSetTemp^{\pi(t)} + \presentSetTemp^{\pi(\numvisit)}) = f(\presentSetTemp^{\pi(T)} \mid \bigcup_{t=1}^{T-1}\presentSetTemp^{\pi(\black{t})}) + f(\bigcup_{t=1}^{T-1}\presentSetTemp^{\pi(t)}) \\ \nonumber
    &\stackrel{(a)}{\geq} \frac{1}{T'-T+1} \sum_{\black{s} \in \activeIndexes \setminus \cup_{t=1}^{T-1} \{\pi(t)\}} f(\presentSetTemp^{\black{s}} \mid \bigcup_{t=1}^{T-1}\presentSetTemp^{\pi(t)}) + f(\bigcup_{t=1}^{T-1}\presentSetTemp^{\pi(t)}) \\ \nonumber
    &{\geq} \frac{1}{T'-T+1} \sum_{\black{s} \in \activeIndexes \setminus \cup_{t=1}^{T-1} \{\pi(t)\}} f(\presentSetTemp^{\black{s}} \mid \bigcup_{t=1}^{T-1}\presentSetTemp^{\pi(t)}) +  \frac{1}{T'-T+1} f(\bigcup_{t=1}^{T-1}\presentSetTemp^{\pi(t)}) \\ \nonumber
    & \stackrel{(b)}{}\frac{1}{T'-T+1} f(\bigcup_{t=1}^{\numvisitest} \presentSetTemp^t)
    \stackrel{(c)}{\geq} \frac{1}{4(T'-T+1)} f(\bigcup_{t=1}^{\numvisitest}\finaloptSet^t) 
     \stackrel{(d)}{\geq} \frac{1}{4(T'-T+1)} f(\bigcup_{t=1}^{\numvisit}\presentSetOpt^t) ,
\end{align}
where inequality $(a)$ holds because by line \ref{alg:linegreedy} of \Cref{alg:streaming_greedy_atmostT_complete}, \black{that $\presentSetTemp^{\pi(T)}$ is selected with the largest marginal gain in terms of $f(\bigcup_{t=1}^{T-1}\presentSetTemp^{\pi(t)})$}.
Inequality $(b)$ holds by submodularity. Inequality $(c)$ holds because, as concluded in \Cref{section:goodsolution}, $\bigcup_{t=1}^{\numvisitest} \presentSetTemp{t}$ achieves a $1/4$-approximation guarantee for the \ssmm problem. 
Finally, inequality $(d)$ holds by \black{contradiction}: if  $f(\bigcup_{t=1}^{\numvisit}\presentSetOpt^t) > f(\bigcup_{t=1}^{\numvisitest}\finaloptSet^t)$, then there exists a feasible solution for the \ssmm problem where for all $t \in [T]$, $\presentSetTemp^{t} = \presentSetOpt{t}$, and regardless of the choice of $\presentSetTemp^{T+1}, \cdots, \presentSetTemp^{\numvisit}$, it holds that $f(\bigcup_{t=1}^{\numvisitest} \presentSetTemp^{t}) \geq f(\bigcup_{t=1}^{\numvisit}\presentSetOpt^t) > f(\bigcup_{t=1}^{\numvisitest}\finaloptSet^t) $, which contradicts that $\finaloptSet$ is an optimal~solution.

\paragraph{Complexity analysis.}
The space complexity is in $\bigO(\numvisitest k)$ because \stralg only needs to maintain $\numvisitest$ candidate sets of size $k$. 
Upon the arrival of each item, \stralg makes $\numvisitest k$ oracle calls, thus the time complexity is in $\bigO(N\numvisitest k)$. \edit{When a user submit a visit request, \stralg selects one partition out of at most \numvisitest partitions and output it, thus the response time complexity is $\mathcal{O}(\numvisitest)$.}

\end{proof}

\paragraph{Tightness of our results.} Note that the analysis in \Cref{ineq:storm_competitive_ratio} is tight. In the following, we show that for any value of $T$, we can construct a stream such that $\stralg$ cannot do better than $\frac{1}{T'-T+1}$, which is only a constant factor away from our \black{analysis in \Cref{theorem:atMostT}}. 

To see this, consider an instance of the \ssmor problem with budget $k=1$, the user visits one time at the end of the stream, and we have an estimated number of user visits $\numvisitest >1$, the stream is constructed as below:
\begin{equation*}
    \stream_1 = \left((\{1\}, 1,0),(\{2\}, 1,0), \cdots, (\{\numvisitest\}, 1,0), (\{1, \cdots, \numvisitest\}, 1, 1)\right)
\end{equation*}
The $\stralg(\numvisitest)$ algorithm outputs one of the first $\numvisitest$ items and achieves an expected topic coverage of value $1$, while the optimal solution is to select the last item and obtain an expected coverage of value $\numvisitest$. 
The competitive ratio in this instance is $\frac{1}{\numvisitest}$, 
\black{while 
Alg.~\ref{alg:streaming_greedy_atmostT_complete} has competitive ratio $\frac{1}{4(\numvisitest-\numvisit + 1)} = \frac{1}{4 \numvisitest}$. }

Note that for all possible numbers of user visits $T$ and $\numvisitest > \numvisit$, we can construct a stream such that \stralg cannot do better than $\frac{1}{\numvisitest -\numvisit + 1}$. For example, when $T = 2$, we can construct a sub-stream $\stream_2$ as follows
\begin{equation*}
    \stream_2 = \left((\{T' +1, T'+2\}, 1,0), \cdots, (\{2T'-1, 2T'\}, 1,0), (\{T'+1, \cdots, 3\numvisitest-3\}, 1, 1)\right)
\end{equation*}
By concatenating $\stream_1$ and $\stream_2$, we obtain a new stream where the user visits for $2$ times. With budget $k=1$, $\stralg(T')$ achieves an expected coverage $3$, 
while the optimal expected coverage is $3(\numvisitest-1)$, leading to a competitive ratio of $\frac{1}{T'-1}$, with is again a $1/4$ factor away from $\frac{1}{4(T'-T+1)} = \frac{1}{\numvisitest-1}$. We can repeat this process such that for any $\numvisit$, we can construct a stream such that the competitive ratio is equal to $\frac{1}{\numvisitest-\numvisit+1}$.

\begin{lemma}
\label{lemma:induction}
Let $\numvisit$ be the number of user accesses and $\numvisitest$ be a given upper bound. \Cref{alg:streaming_greedy_parallel} generates a set of guesses of $\numvisit$ as  
\(
\guessTSet = \{\delta i \mid i \in \left[\left\lceil \nicefrac{\numvisitest}{\delta} \right\rceil\right]\}.
\)
Let $\closestT$ be the smallest integer in $\guessTSet$ that is larger than or equal to $\numvisit$. Let $\presentSetGreedy^{1}, \cdots, \presentSetGreedy^{T}$ be the output of \textit{$\textsc{STORM}(\closestT)$}, and let $\presentSetParallel{1}, \cdots, \presentSetParallel{T}$ be the output of \Cref{alg:streaming_greedy_parallel},
then it holds that 
\begin{equation}
    f(\bigcup_{t=1}^{T} \presentSetParallel{t}) \geq \scov(\bigcup_{t=1}^{T} \presentSetGreedy^{t}) 
    + \sum_{t=1}^{\numvisit-1} \Big[ f\big(\bigcup_{s=1}^{t} \presentSetParallel{s} \mid \bigcup_{s=1}^{t+1} \presentSetGreedy^{s} \big) 
    - f\big(\bigcup_{s=1}^{t} \presentSetParallel{s} \mid \bigcup_{s=1}^{t} \presentSetGreedy^{s} \big)  \Big].
    \label{ineq:induction}
\end{equation}
\end{lemma}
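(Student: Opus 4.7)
I plan to prove the inequality by induction on $T$, leveraging two core ingredients: the greedy aggregation rule on Line~\ref{alg:linegreedyoutput} of Algorithm~\ref{alg:streaming_greedy_parallel} (which ensures $\presentSetParallel{t}$ is at least as good as any single-guess output for the marginal gain), and the submodularity of $f$ (Lemma~\ref{lemma:f_submodularity}). Let me abbreviate $A_t = \bigcup_{s=1}^{t}\presentSetParallel{s}$ and $B_t = \bigcup_{s=1}^{t}\presentSetGreedy^{s}$.

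The base case $T=1$ reduces to $f(A_1) \geq f(B_1)$, which follows directly from the fact that $\closestT \in \guessTSet$ is one of the guesses considered on Line~\ref{alg:linegreedyoutput} when conditioning on the empty set, so the argmax must produce at least as large a value as $\presentSetGreedy^{1}$.

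For the inductive step, assume the inequality holds for $T-1$. Decomposing $f(A_T) = f(A_{T-1}) + f(\presentSetParallel{T} \mid A_{T-1})$, and using the greedy rule at visit $T$ to get $f(\presentSetParallel{T}\mid A_{T-1}) \geq f(\presentSetGreedy^{T} \mid A_{T-1})$, I will apply the inductive hypothesis to the $f(A_{T-1})$ term. After cancellation, the remaining task reduces to showing
\begin{equation*}
f(\presentSetGreedy^{T}\mid A_{T-1}) \;\geq\; \bigl[f(B_T) - f(B_{T-1})\bigr] + \bigl[f(A_{T-1}\mid B_T) - f(A_{T-1}\mid B_{T-1})\bigr].
\end{equation*}
Expanding the marginal gains and using $B_T = B_{T-1}\cup \presentSetGreedy^{T}$, the right-hand side simplifies (telescoping the $B_T$ and $B_{T-1}$ terms) to $f(\presentSetGreedy^{T}\mid A_{T-1}\cup B_{T-1})$. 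Hence it suffices to verify
\begin{equation*}
f\bigl(\presentSetGreedy^{T}\mid A_{T-1}\bigr) \;\geq\; f\bigl(\presentSetGreedy^{T}\mid A_{T-1}\cup B_{T-1}\bigr),
\end{equation*}
which is immediate from submodularity because $A_{T-1}\subseteq A_{T-1}\cup B_{T-1}$.

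The main obstacle I anticipate is purely notational: keeping track of the telescoping between the $f(B_T)-f(B_{T-1})$ difference coming from the inductive step and the $f(A_{t}\mid B_{t+1})-f(A_{t}\mid B_{t})$ summands, and writing the marginal-gain identities in a form where the cancellation is transparent. Everything boils down to a single application of submodularity per induction layer, so there are no conceptual difficulties beyond the bookkeeping of marginal-gain decompositions.
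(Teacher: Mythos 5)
Your proposal is correct and follows essentially the same route as the paper's proof: induction on the number of visits, peeling off the last output, invoking the greedy aggregation rule of Line~\ref{alg:linegreedyoutput} to replace the marginal gain of $\presentSetParallel{T}$ by that of $\presentSetGreedy^{T}$, and closing the gap with a single application of submodularity (the paper phrases this as subadditivity of marginal gains over $\bigcup_{s=1}^{T-1}\presentSetParallel{s}$, which is the same inequality your telescoping isolates). The only difference is bookkeeping, and your rearrangement arguably makes the one essential submodularity step more transparent.
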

\begin{proof}
We prove \Cref{ineq:induction} via induction. 

When $\numvisit =1$, according to line \ref{alg:linegreedyoutput} of \Cref{alg:streaming_greedy_parallel} \black{that $\presentSetParallel{1}$ is chosen with the largest marginal gain}, it holds that $f(\presentSetParallel{1}) \geq f(\presentSetGreedy^{1})$.

The base case for the induction starts from $\numvisit = 2$. It is
\begin{align*}
    f(\presentSetParallel{1}, \presentSetParallel{2}) 
    & = \scov(\presentSetParallel{1}) + \scov(\presentSetParallel{2} \mid \presentSetParallel{1}) 
    \stackrel{(a)}{\geq} \scov(\presentSetGreedy^{1}) + \scov(\presentSetGreedy^{2} \mid \presentSetParallel{1}) \nonumber \\
     & \black{=  \scov(\presentSetGreedy^{1} \cup \presentSetParallel{1}) - \scov(\presentSetParallel{1} \mid \presentSetGreedy^{1}) \nonumber + \scov(\presentSetGreedy^{2} \mid \presentSetParallel{1}) } \\
     & = \scov(\presentSetParallel{1}) + \scov(\presentSetGreedy^{1} \mid \presentSetParallel{1}) 
    - \scov(\presentSetParallel{1} \mid \presentSetGreedy^{1})
    + \scov(\presentSetGreedy^{2} \mid \presentSetParallel{1})  \nonumber \\
    &\stackrel{(b)}{\geq} \scov(\presentSetGreedy^{1} \cup \presentSetGreedy^{2} \mid \presentSetParallel{1}) + \scov(\presentSetParallel{1}) 
    - \scov(\presentSetParallel{1} \mid \presentSetGreedy^{1}) \nonumber 
     = \scov(\presentSetParallel{1}, \presentSetGreedy^{1}, \presentSetGreedy^{2}) - \scov(\presentSetParallel{1} \mid \presentSetGreedy^{1}) \nonumber \\
    & = \scov(\presentSetGreedy^{1}, \presentSetGreedy^{2}) + \scov(\presentSetParallel{1} \mid \presentSetGreedy^{1} \cup \presentSetGreedy^{2}) 
    - \scov(\presentSetParallel{1} \mid \presentSetGreedy^{1}), 
\end{align*}
where inequality (a) holds because $f(\presentSetParallel{1}) \geq f(\presentSetGreedy^{1})$ and $\scov(\presentSetParallel^{2} \mid \presentSetParallel{1}) > \scov(\presentSetGreedy^{2} \mid \presentSetParallel{1})$. Inequality~(b) holds by submodularity.

We then prove the induction step. Assume for $\numvisit = j$, it holds
\begin{equation}
\label{eq:induction_step_j}
    \scov(\bigcup_{t=1}^{j}\presentSetParallel{t}) \geq \scov(\bigcup_{t=1}^{j}\presentSetGreedy{t}) 
        + \sum_{t=1}^{j-1} \Big[ f\big(\bigcup_{s=1}^{t} \presentSetParallel{s} \mid \bigcup_{s=1}^{t+1} \presentSetGreedy^{s} \big) 
    - f\big(\bigcup_{s=1}^{t} \presentSetParallel{s} \mid \bigcup_{s=1}^{t} \presentSetGreedy^{s} \big)  \Big].
\end{equation}
We can use $\phi_j$ to represent the second term in \Cref{eq:induction_step_j}, i.e., $\scov(\bigcup_{t=1}^{j}\presentSetParallel{t}) \geq \scov(\bigcup_{t=1}^{j}\presentSetGreedy{t}) + \phi_j$, 
we can then show that for $\numvisit = j+1$ it holds
\begin{align*}
    \scov(\bigcup_{t=1}^{j+1}\presentSetParallel^{t}) &= \scov( \presentSetParallel^{j+1} \mid \bigcup_{t=1}^{j}\presentSetParallel^{t}) + \scov(\bigcup_{t=1}^{j}\presentSetParallel^{t}) \nonumber 
    \stackrel{(a)}{\geq} \scov( \presentSetGreedy^{j+1} \mid \bigcup_{t=1}^{j}\presentSetParallel^{t})  + \scov(\bigcup_{t=1}^{j}\presentSetGreedy{t}) + \phi_j \nonumber \\
    &= \scov(\bigcup_{t=1}^{j} \presentSetGreedy^j \mid \bigcup_{t=1}^{j}\presentSetParallel^t) + \scov(\bigcup_{t=1}^{j}\presentSetParallel^t) 
     - \scov(\bigcup_{t=1}^{j}\presentSetParallel^t \mid \bigcup_{t=1}^{j}\presentSetGreedy^t) + \scov(\presentSetGreedy^{j+1} \mid \bigcup_{t=1}^{j}\presentSetParallel^t) + \phi_j \nonumber \\
    &\black{\stackrel{(b)}{\geq}} \scov(\bigcup_{t=1}^{j+1}\presentSetGreedy{t} \mid \bigcup_{t=1}^{j}\presentSetParallel{t}) + \scov(\bigcup_{t=1}^{j}\presentSetParallel^t) 
     - \scov(\bigcup_{t=1}^{j}\presentSetParallel^t \mid \bigcup_{t=1}^{j}\presentSetGreedy^t) + \phi_j \nonumber \\
    &= \scov(\bigcup_{t=1}^{j+1}\presentSetGreedy{t}) + \scov(\bigcup_{t=1}^{j}\presentSetParallel^t \mid \bigcup_{t=1}^{j+1}\presentSetGreedy{t}) 
     - \scov(\bigcup_{t=1}^{j}\presentSetParallel^t \mid \bigcup_{t=1}^{j}\presentSetGreedy^t) + \phi_j \nonumber \\
    &= \scov(\bigcup_{t=1}^{j+1}\presentSetGreedy^{j})  + \sum_{t=1}^{j} \Big[ f\big(\bigcup_{s=1}^{t} \presentSetParallel{s} \mid \bigcup_{s=1}^{t+1} \presentSetGreedy^{s} \big) 
    - f\big(\bigcup_{s=1}^{t} \presentSetParallel{s} \mid \bigcup_{s=1}^{t} \presentSetGreedy^{s} \big)  \Big]. \nonumber
\end{align*}
\black{Here, inequality $(a)$ holds by the induction assumption, and $(b)$ holds by submodularity}. This completes the induction.
\end{proof}

\begin{lemma}
\label{lemma:paralle_half_best_guess}
Let $\numvisit$ be the number of user accesses and $\numvisitest$ be a given upper bound. \Cref{alg:streaming_greedy_parallel} generates a set of guesses of $\numvisit$ as  
\(
\guessTSet = \{\delta i \mid i \in \left[\left\lceil \nicefrac{\numvisitest}{\delta} \right\rceil\right]\}.
\)
Let $\closestT$ be the smallest integer in $\guessTSet$ that is larger than or equal to $\numvisit$. Let $\presentSetGreedy^{1}, \cdots, \presentSetGreedy^{T}$ be the output of \textit{$\textsc{STORM}(\closestT)$}, and let $\presentSetParallel{1}, \cdots, \presentSetParallel{T}$ be the output of \Cref{alg:streaming_greedy_parallel}.
It holds that 
\begin{equation}
    f(\bigcup_{t=1}^{T} \presentSetParallel{t}) \geq \frac{1}{2}\scov(\bigcup_{t=1}^{T} \presentSetGreedy^{t}) .
    \label{ineq:paralle_half_best_guess}
\end{equation}
\end{lemma}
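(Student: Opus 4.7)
The plan is to exploit two ingredients: (i) the greedy selection rule of \Cref{alg:streaming_greedy_parallel} at Line~\ref{alg:linegreedyoutput}, which compares the output of $\textsc{STORM}(T^*)$ against the outputs of all other guesses $\tilde T \in \guessTSet$; and (ii) the submodularity of $f$ together with the fact that $B_{t-1} := \bigcup_{s=1}^{t-1}\presentSetParallel{s}$ grows monotonically in $t$. The core identity I will aim for is
\begin{equation*}
  f\!\left(\bigcup_{t=1}^T \presentSetParallel{t}\right) \;\geq\; f\!\left(\bigcup_{t=1}^T \presentSetGreedy{t} \;\Big|\; \bigcup_{t=1}^T \presentSetParallel{t}\right),
\end{equation*}
from which monotonicity yields $2 f(B_T) \geq f(B_T \cup G_T) \geq f(G_T)$ and hence the claim.

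First, I will invoke the definition of $T^*$: since $T^* \in \guessTSet$, at every visit $t$ the set $\presentSetGreedy{t}$ is one of the candidates considered on Line~\ref{alg:linegreedyoutput}. Hence the greedy rule that picks $\presentSetParallel{t}$ guarantees the per-step inequality $f(\presentSetParallel{t} \mid B_{t-1}) \geq f(\presentSetGreedy{t} \mid B_{t-1})$. Summing over $t$ and telescoping the left-hand side gives
\begin{equation*}
  f(B_T) \;=\; \sum_{t=1}^T f(\presentSetParallel{t} \mid B_{t-1}) \;\geq\; \sum_{t=1}^T f(\presentSetGreedy{t} \mid B_{t-1}).
\end{equation*}

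Next, I will push each conditioning set up to the full $B_T$ using submodularity: for every $t$, since $B_{t-1} \subseteq B_T$, we have $f(\presentSetGreedy{t} \mid B_{t-1}) \geq f(\presentSetGreedy{t} \mid B_T)$. Then I will apply the standard submodular ``union bound'' $f(\bigcup_t A_t \mid X) \leq \sum_t f(A_t \mid X)$, which is a telescoping consequence of submodularity, with $A_t = \presentSetGreedy{t}$ and $X = B_T$, to obtain $\sum_{t=1}^T f(\presentSetGreedy{t} \mid B_T) \geq f(G_T \mid B_T)$. Chaining these inequalities yields $f(B_T) \geq f(G_T \mid B_T) = f(G_T \cup B_T) - f(B_T)$, and monotonicity of $f$ closes the argument as $f(G_T \cup B_T) \geq f(G_T)$.

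I expect no serious obstacle, as the three tools---greedy optimality against the $T^*$ candidate, monotonicity of the $B_t$ chain, and submodular subadditivity---compose cleanly. The only subtlety worth being explicit about is the first step: one must verify that the specific output $\presentSetGreedy{t}$ of $\textsc{STORM}(T^*)$ at the $t$-th visit is literally among the quantities indexed by $\tilde T \in \guessTSet$ on Line~\ref{alg:linegreedyoutput}, which follows because \Cref{alg:streaming_greedy_parallel} maintains one independent copy of \stralg per guess and $T^* \in \guessTSet$ by construction. Note also that, although \Cref{lemma:induction} is stated just before, this direct path does not actually need it; the induction-based bound of \Cref{lemma:induction} provides an alternative derivation that would require additionally controlling the non-positive error terms $f(B_t \mid G_{t+1}) - f(B_t \mid G_t)$, which is strictly more work than the direct route above.
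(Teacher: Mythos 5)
Your proof is correct, but it takes a genuinely different route from the paper's. The paper first proves an auxiliary induction (\Cref{lemma:induction}), obtaining $f(\bigcup_t \presentSetParallel{t}) \geq f(\bigcup_t \presentSetGreedy{t}) + \sum_{t=1}^{T-1}\bigl[f(\bigcup_{s\le t}\presentSetParallel{s}\mid\bigcup_{s\le t+1}\presentSetGreedy{s}) - f(\bigcup_{s\le t}\presentSetParallel{s}\mid\bigcup_{s\le t}\presentSetGreedy{s})\bigr]$, and then spends a second chain of manipulations showing the correction sum is at least $-f(\bigcup_t \presentSetParallel{t})$, which yields the factor $\nicefrac{1}{2}$. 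You instead run the classical ``greedy is half-competitive against any fixed feasible alternative'' argument directly: telescope $f(B_T)=\sum_t f(\presentSetParallel{t}\mid B_{t-1})$, dominate each term by $f(\presentSetGreedy{t}\mid B_{t-1})$ via the selection rule on Line~\ref{alg:linegreedyoutput} (using that $\closestT\in\guessTSet$ and that each copy of \stralg inside \parallelalgo evolves exactly as a standalone run, so $\presentSetGreedy{t}$ really is among the compared candidates --- a point you rightly flag), push the conditioning up to $B_T$ by submodularity, apply subadditivity of marginals, and finish with monotonicity. Every step checks out (including $f(\emptyset)=0$ for the telescoping), and the argument is shorter and avoids the induction entirely; the paper's two-lemma route produces an intermediate inequality that is in principle tighter instance-by-instance, but for the stated $\nicefrac{1}{2}$ bound the two approaches deliver exactly the same constant, so your direct path loses nothing here.
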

\begin{proof}
    \Cref{lemma:paralle_half_best_guess} is a direct result of \Cref{lemma:induction}, as we can show that
\begin{align*}
        \sum_{t=1}^{\numvisit-1} \Big[ f\big(\bigcup_{s=1}^{t} \presentSetParallel{s} \mid \bigcup_{s=1}^{t+1} \presentSetGreedy^{s} \big) 
    - f\big(\bigcup_{s=1}^{t} \presentSetParallel{s} \mid \bigcup_{s=1}^{t} \presentSetGreedy^{s} \big)  \Big] \geq -f(\bigcup_{t=1}^{T} \presentSetParallel{t}).
\end{align*}

\black{To prove that the above inequality holds, we proceed to show that 
\[
\sum_{t=1}^{\numvisit-1} \Big[ -f\big(\bigcup_{s=1}^{t} \presentSetParallel{s} \mid \bigcup_{s=1}^{t+1} \presentSetGreedy^{s} \big) 
    + f\big(\bigcup_{s=1}^{t} \presentSetParallel{s} \mid \bigcup_{s=1}^{t} \presentSetGreedy^{s} \big)  \Big] \leq f(\bigcup_{t=1}^{T} \presentSetParallel{t}).
\]
}

Indeed, it is
\begin{align}
\sum_{t=1}^{\numvisit-1} & \Big[ -f\big(\bigcup_{s=1}^{t} \presentSetParallel{s} \mid \bigcup_{s=1}^{t+1} \presentSetGreedy^{s} \big) 
    + f\big(\bigcup_{s=1}^{t} \presentSetParallel{s} \mid \bigcup_{s=1}^{t} \presentSetGreedy^{s} \big)  \Big] \\
        &= \sum_{t=1}^{\numvisit-1} \Big[f\big(\bigcup_{s=1}^{t} \presentSetParallel{s} \mid \bigcup_{s=1}^{t} \presentSetGreedy^{s} \big) - f\big(\bigcup_{s=1}^{t} \presentSetParallel{s} \mid \bigcup_{s=1}^{t+1} \presentSetGreedy^{s} \big)  \Big] \nonumber\\
        & = f(\presentSetParallel{1} \mid \presentSetGreedy{1}) + \sum_{t=1}^{\numvisit-2} \left[ f(\bigcup_{s=1}^{t+1} \presentSetParallel{s} \mid \bigcup_{s=1}^{t+1} \presentSetGreedy{s})
        -f(\bigcup_{s=1}^{t} \presentSetParallel{s} \mid \bigcup_{s=1}^{t+1} \presentSetGreedy{s})
\right] - f(\bigcup_{s=1}^{\numvisit-1} \presentSetParallel{s} \mid \bigcup_{s=1}^{\numvisit} \presentSetGreedy{s}) \nonumber\\
&\leq f(\presentSetParallel{1}) + \sum_{t=1}^{\numvisit-2} \left[ f(\bigcup_{s=1}^{t+1} \presentSetParallel{s} \mid \bigcup_{s=1}^{t+1} \presentSetGreedy{s})
        -f(\bigcup_{s=1}^{t} \presentSetParallel{s} \mid \bigcup_{s=1}^{t+1} \presentSetGreedy{s})
\right] \nonumber\\
& = f(\presentSetParallel{1}) + \sum_{t=1}^{\numvisit-2} \left[ f(\bigcup_{s=1}^{t+1} \presentSetParallel{s} + \bigcup_{s=1}^{t+1} \presentSetGreedy{s}) - f( \bigcup_{s=1}^{t+1} \presentSetGreedy{s})
        -f(\bigcup_{s=1}^{t} \presentSetParallel{s} + \bigcup_{s=1}^{t+1} \presentSetGreedy{s}) + f(\bigcup_{s=1}^{t+1} \presentSetGreedy{s})
\right] \nonumber\\
& = f(\presentSetParallel{1})  + \sum_{t=1}^{\numvisit-2} f(\presentSetParallel{t+1} \mid \bigcup_{s=1}^{t} \presentSetParallel{s} + \bigcup_{s=1}^{t+1} \presentSetGreedy{s}) \nonumber \leq f(\presentSetParallel{1})  + \sum_{t=1}^{\numvisit-2} f(\presentSetParallel{t+1} \mid \bigcup_{s=1}^{t} \presentSetParallel{s}) \nonumber\\
& = f(\bigcup_{t=1}^{\numvisit-1}\presentSetParallel{t}) \leq f(\bigcup_{t=1}^{\numvisit}\presentSetParallel{t}) \label{eq:finalbound}
\end{align}
Combining \Cref{eq:finalbound} with \Cref{ineq:induction}, we can obtain \Cref{ineq:paralle_half_best_guess}.

\end{proof}

\TheoremParallel*
\begin{proof}

Let $\closestT$ be the smallest integer in \(
\guessTSet = \{\delta i \mid i \in \left[\left\lceil \nicefrac{\numvisitest}{\delta} \right\rceil\right]\}
\) that is larger than or equal to $\numvisit$. Let $\presentSetGreedy^{1}, \cdots, \presentSetGreedy^{T}$ be the output of $\stralg(\closestT)$ 

\black{Observe that $\closestT \leq \numvisit + \delta - 1$, otherwise we can set $\closestT$ as $\closestT - \delta$ and it is still an upper bound of $\numvisit$.}
\black{By \Cref{theorem:atMostT}}, it holds that 
\begin{equation}
    \scov(\bigcup_{t=1}^{T} \presentSetGreedy^{t}) 
    \black{\geq \frac{1}{4 (\closestT - \numvisit + 1)} f(\bigcup_{t=1}^{T}\presentSetOpt{t})}
    \geq \frac{1}{4 \delta} f(\bigcup_{t=1}^{T}\presentSetOpt{t}).
    \label{ineqq:best_guess_four_delta}
\end{equation}

Combining \Cref{ineqq:best_guess_four_delta} with \Cref{lemma:paralle_half_best_guess}, we conclude that
\begin{equation}
    f(\bigcup_{t=1}^{T} \presentSetParallel{t}) \geq \frac{1}{2}\scov(\bigcup_{t=1}^{T} \presentSetGreedy^{t}) \geq \frac{1}{8\delta} f(\bigcup_{t=1}^{T}\presentSetOpt{t}).
    \label{ineq:final_competitive_ratio}
\end{equation}

Since \parallelalgo maintains $\left\lceil \nicefrac{\numvisitest}{\delta} \right\rceil$ copies of \stralg, each copy of \stralg has time complexity in $\bigO(Nk\numvisitest)$ and space complexity in $\bigO(\numvisitest k)$, the space complexity for the \parallelalgo is $\mathcal{O}({\numvisitest}^2k / \delta)$, and the 
time complexity is $\mathcal{O}(N k {\numvisitest}^2 / \delta)$. \edit{When a user submits a visit request, each copy of \stralg selects one partition out of at most \numvisitest partitions as a candidate result set, then \parallelalgo selects the best set among $\nicefrac{\numvisitest}{\delta}$ candidates as output. Thus, the response time of \parallelalgo is $\mathcal{O}(\numvisitest)$.}
\end{proof}

\section{Additional content for Section~\ref{sec:experiments}}
\label{appendix:experiments}

\subsection{Dataset and experimental setting}
\label{appendix:exp_setting}

We use the following six datasets: 

\begin{itemize}

    \item \textbf{\KuaiRec} \cite{gao2022kuairec}: 
    Recommendation logs from a video-sharing mobile app, containing metadata such as each video's categories and watch ratio. We use the \emph{watch ratio}, computed as the user's viewing time divided by the video's duration, as the user rating score. The dataset includes 7\,043 videos across 100 categories.

    \item \textbf{\Anime}:\footnote{\url{https://www.kaggle.com/datasets/CooperUnion/anime-recommendations-database}}  
    A dataset of animation ratings in the range $[1,10]$, consisting of 7745 animation items across 43 genres.

    \item \textbf{\Beer}:\footnote{\url{https://cseweb.ucsd.edu/~jmcauley/datasets.html}}  
    This dataset contains BeerAdvocate reviews \cite{mcauley2013amateurs, mcauley2012learning}, along with categorical attributes for each beer. After filtering out beers and reviewers with fewer than 10 reviews, the final dataset includes 9\,000 beers across 70 categories.

    \item \textbf{\Yahoo}:\footnote{\url{https://webscope.sandbox.yahoo.com/catalog.php?datatype=i&did=67}}  
    A music rating dataset with ratings in the range $[1,5]$. After filtering out users with fewer than 20 ratings, the dataset includes 136\,736 songs across 58 genres.

    \item \textbf{\rcv} \cite{lewis2004rcv1}:
    A multi-label dataset, each item is a news story from Reuters, Ltd.~for research use. 
    We sample 462\,225 stories across 476 unique labels.

    \item \textbf{\amazon}~\cite{mcauley2013hidden}:
    A multi-label dataset consisting of 1\,117\,006 Amazon products over 3\,750 unique labels. 
    This dataset is a subset of the publicly available dataset.   
\end{itemize}

\paragraph{Experimental setting} To obtain the click probabilities for each item, we generate them uniformly at random in the range $[0, 0.2]$ for the \rcv and \amazon datasets. For the four item-rating datasets, we estimate click probabilities by computing a low-rank completion of the user-item rating matrix using matrix factorization \citep{koren2009matrix}. This yields latent feature vectors $w_u$ for each user $u$ and $v_m$ for each item $m$, where the inner product $w_u^\top v_m$ approximates user $u$'s rating for item $m$. We then apply standard min-max normalization to the predicted ratings and linearly scale the results to the range $[0, 0.5]$ to obtain click probabilities. 
We set the latent feature dimension to 15 for the \KuaiRec dataset, and to 20 for the remaining three rating datasets. While click probability ranges can vary in real-world scenarios, where larger ranges can lead to faster convergence toward the maximum expected coverage value, we intentionally set the probabilities to be small in our experiments. This allows us to observe performance changes over a wider range of parameter settings.

\paragraph{Implementation}
To enhance the computational efficiency of the \greedyalg~algorithm (described in \Cref{alg:localgreedy}), we implement lines 6--9 using the STOCHASTIC-GREEDY approach introduced by \citet{mirzasoleiman2015lazier}.
For the implementations of \stralg~and \parallelalgo, we incorporate the sampling technique from \citet{feldman2018less}. Specifically, when adding each item (with potential replacement) to each candidate set, we ignore it with probability 2/3.

\subsection{Omitted results}
\label{appendix:exp_results}

In this section, we present additional experimental results and analyses. Specifically, we show how the expected coverage changes as the budget $k$ varies in \Cref{fig:vary_budget_appendix}, and as the number of user visits $T$ varies in \Cref{fig:vary_T_appendix}. The standard deviation of the expected coverage, as $k$ and $T$ are varied, is reported in \Cref{fig:vary_budget_std} and \Cref{fig:vary_T_std}, respectively. Finally, we present the effect of varying $\delta$ and $\Delta T$ on the expected coverage in \Cref{fig:combined_figures_appendix}.

\begin{figure*}[tbp]
	\centering
	\includegraphics[width=0.98\textwidth]{fig/legend_only_whole.png}
	\begin{subfigure}[b]{0.31\textwidth}
		\centering
		\includegraphics[width=\textwidth]{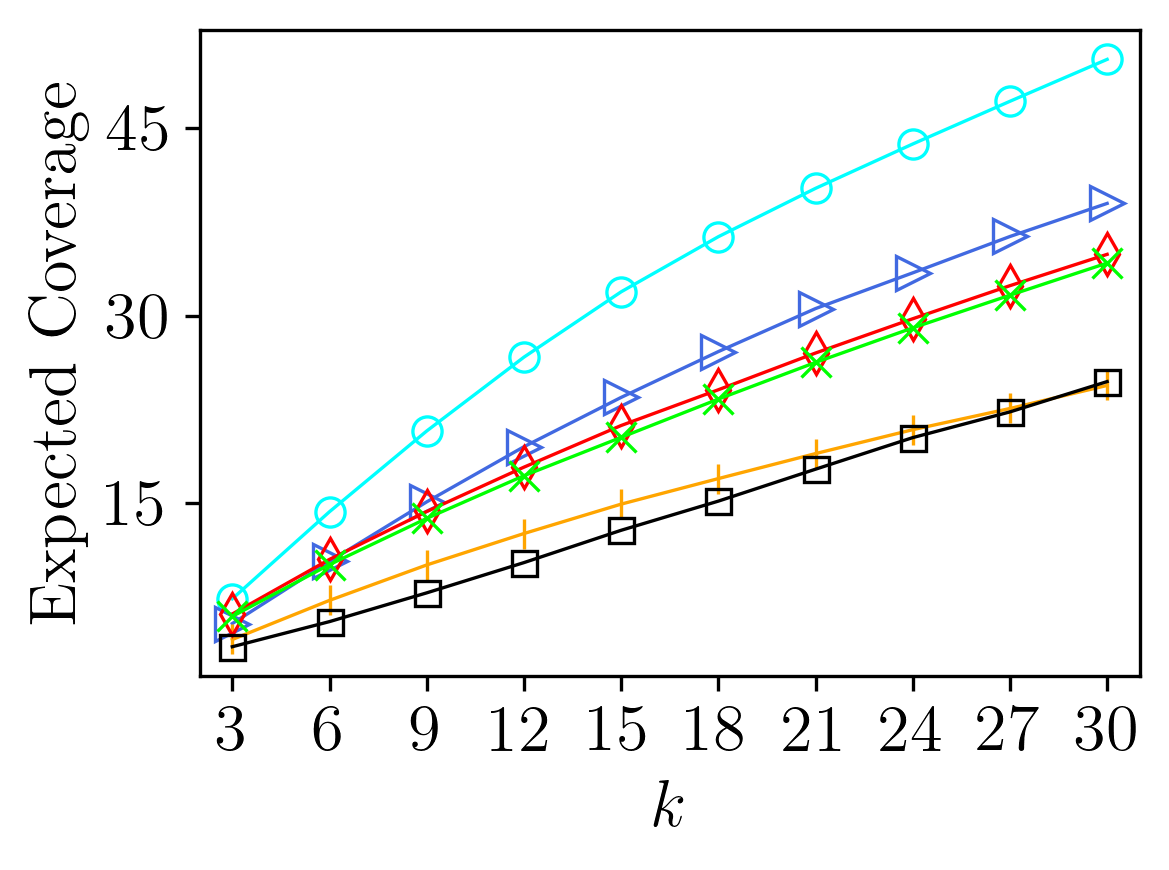}
		\caption{\KuaiRec}
		\label{fig:kuai_budget}
	\end{subfigure}
	\hfill
	\begin{subfigure}[b]{0.31\textwidth}
		\centering
		\includegraphics[width=\textwidth]{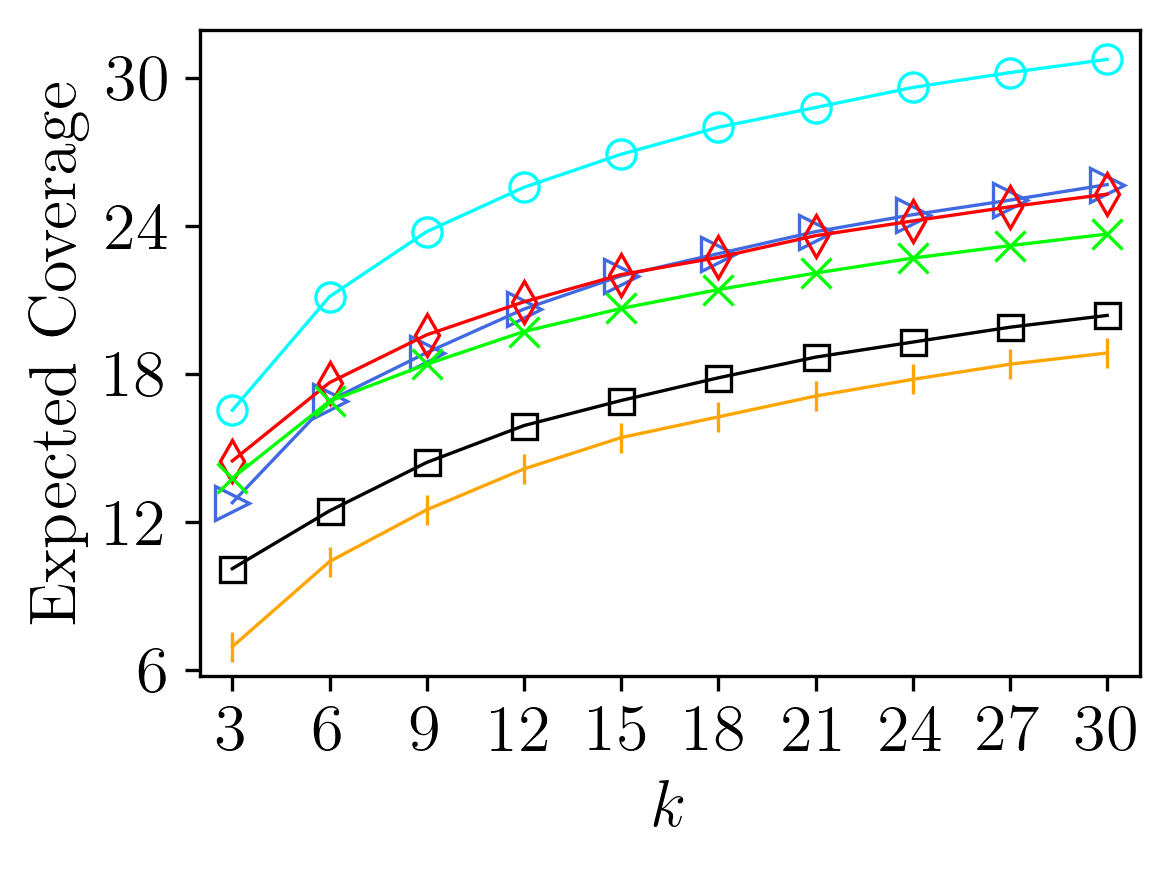}
		\caption{\Anime}
		\label{fig:anime_budget}
	\end{subfigure}
	\hfill
	\begin{subfigure}[b]{0.31\textwidth}
		\centering
		\includegraphics[width=\textwidth]{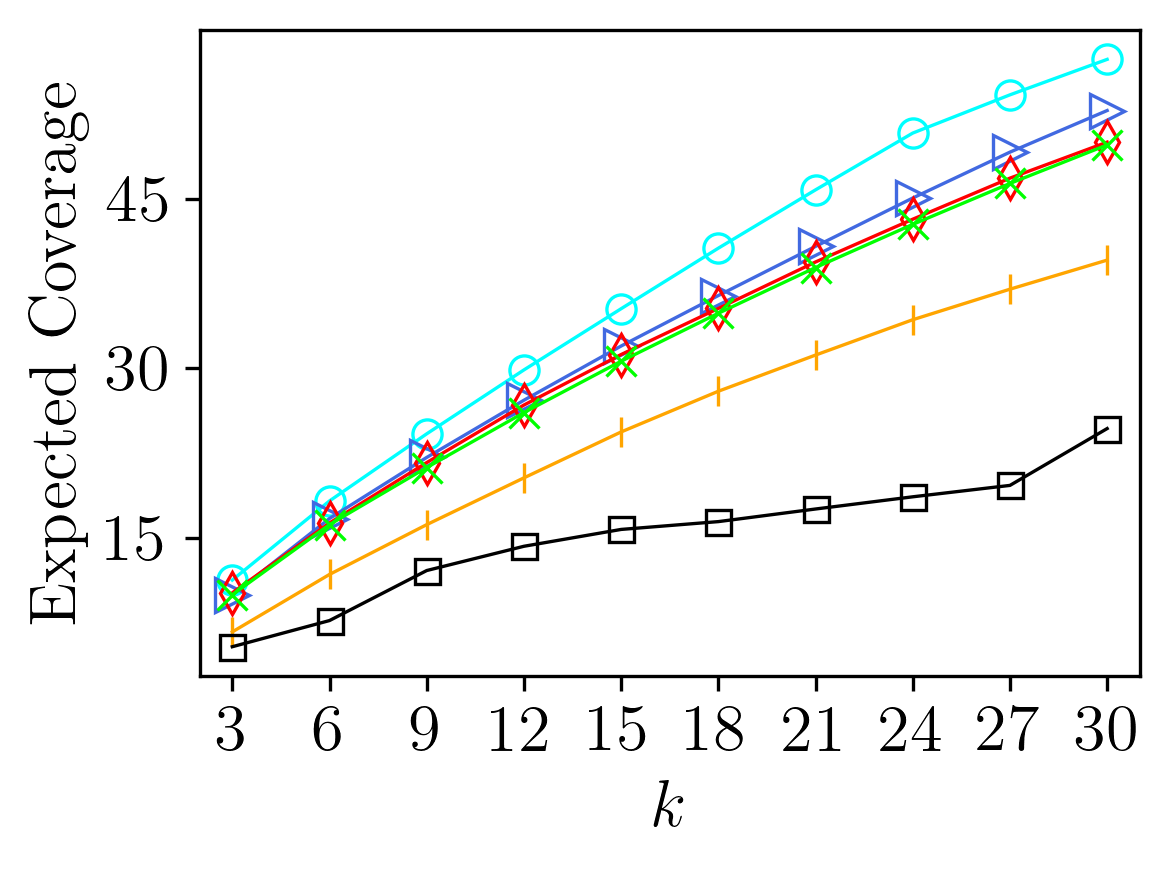}
		\caption{\Beer}
		\label{fig:beer_budget}
	\end{subfigure}
	\caption{Empirical variation in expected coverage as a function of budget $k$ on three smaller datasets. Parameter $T = 5$, $\delta = 25$ and $\Delta \numvisit = 45$ are fixed
    } %
	\label{fig:vary_budget_appendix}
	
\end{figure*}

\begin{figure*}[tbp]
	\centering
	\begin{subfigure}[b]{0.31\textwidth}
		\centering
		\includegraphics[width=\textwidth]{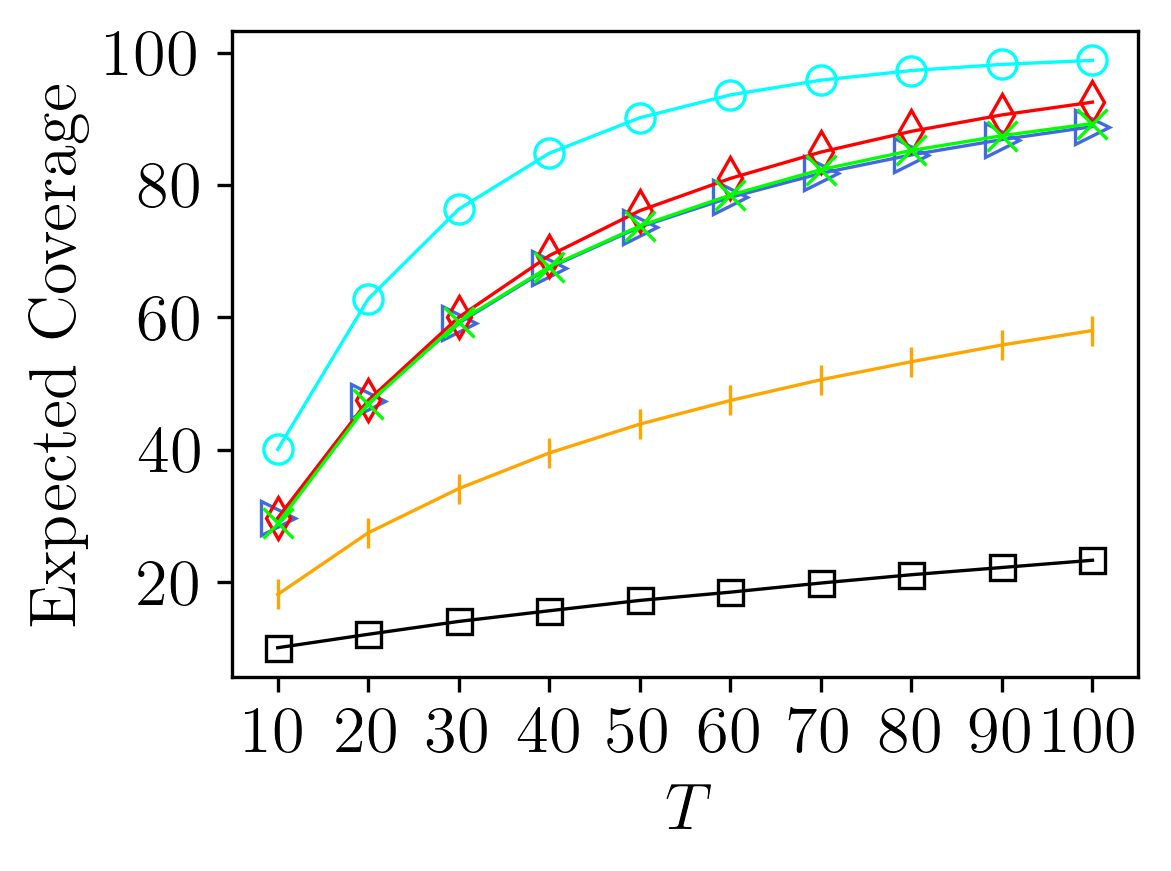}
		\caption{\KuaiRec}
		\label{fig:kuai_T}
	\end{subfigure}
	\hfill
	\begin{subfigure}[b]{0.31\textwidth}
		\centering
		\includegraphics[width=\textwidth]{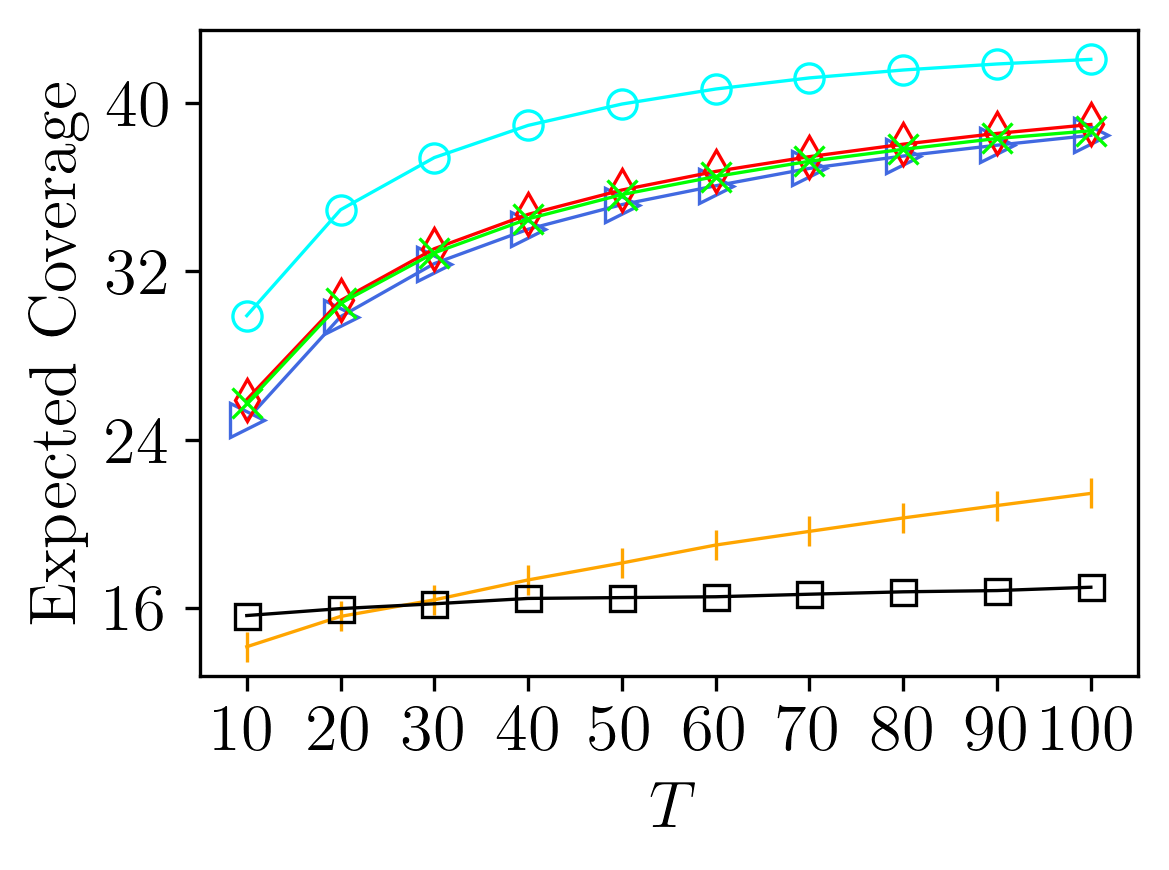}
		\caption{\Anime}
		\label{fig:anime_T}
	\end{subfigure}
	\hfill
	\begin{subfigure}[b]{0.31\textwidth}
		\centering
		\includegraphics[width=\textwidth]{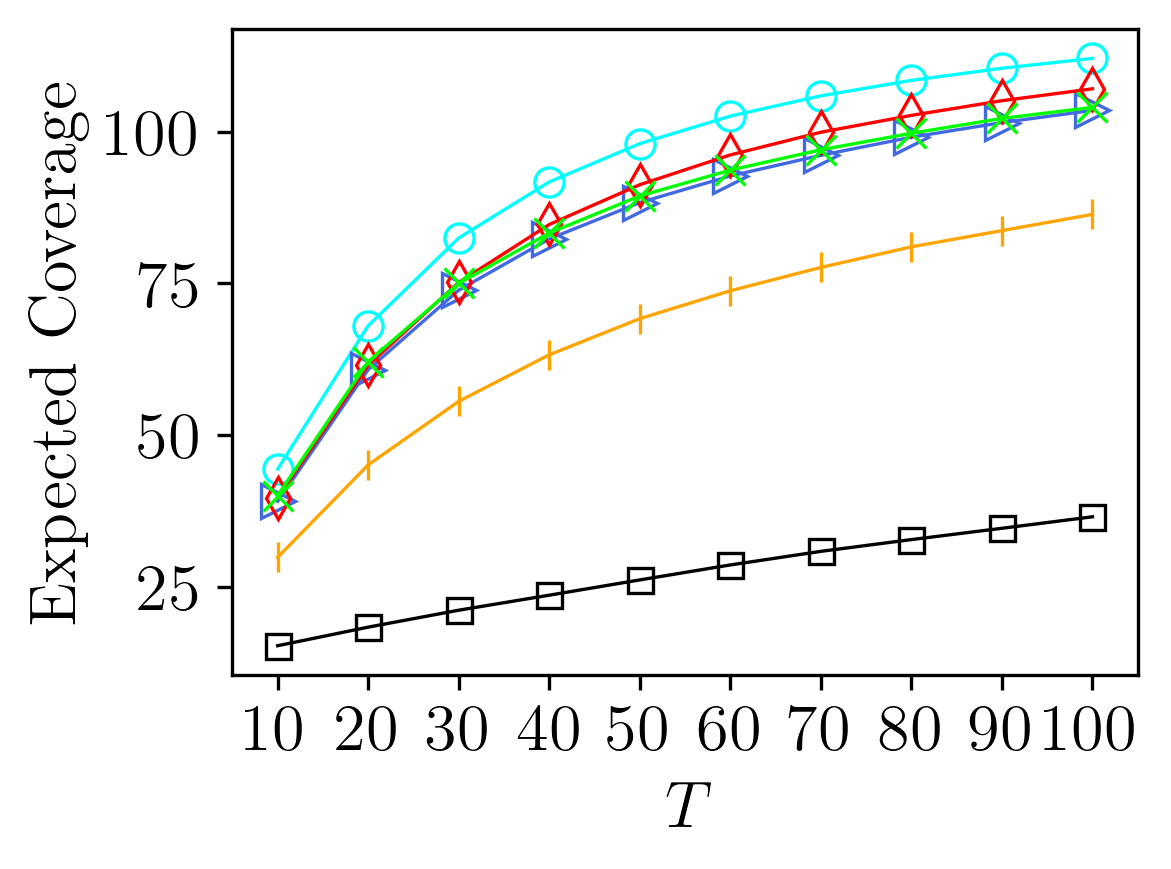}
		\caption{\Beer}
		\label{fig:beer_T}
	\end{subfigure}
	\caption{Empirical variation in expected coverage as a function of number of visits $\numvisit$ on three smaller datasets. Parameter $k = 10$, $\delta = 10$ and $\Delta \numvisit = 10$ are fixed. 
    }
	\label{fig:vary_T_appendix}
\end{figure*}

\paragraph{Why \parallelalgo and \chekuriUp outperform \chekuriExact?}

We present a simple example where \chekuriUp outperforms \chekuriExact. Consider the input stream $\stream = \left( (\{1\},1, 0),\allowbreak (\{2\}, 1, 0),\allowbreak (\{3,4\}, 0.9, 1),\allowbreak (\{5,6\}, 1, 1)\right)$ with budget $k = 1$ and actual number of visits $\numvisit= 2$. The algorithm $\stralg(2)$ selects either $\{1\}$ or $\{2\}$ for the first visit, and $\{5,6\}$ for the second, achieving expected coverage of 3.

If we overestimate the visits and set $\numvisitest = 3$, $\stralg(3)$ instead selects $\{3,4\}$ first and $\{5,6\}$ second, yielding an expected coverage of 3.8, which is higher than when the number of visits is known exactly. Likewise, \parallelalgo with $\numvisitest = 3$ and $\delta = 3$ matches $\stralg(3)$, and achieves the same coverage of 3.8.

This example shows that while \chekuriExact has the strongest theoretical guarantee, \chekuriUp and \parallelalgo can outperform it empirically, which explains the trends observed in our experiments.

\begin{figure}[b]
    \centering
    \includegraphics[width=1\textwidth]{fig/legend_only_whole.png}
    \begin{minipage}[c]{0.85\textwidth}
        \centering
        \begin{minipage}[t]{0.3\textwidth}
            \centering
            \includegraphics[width=\textwidth]{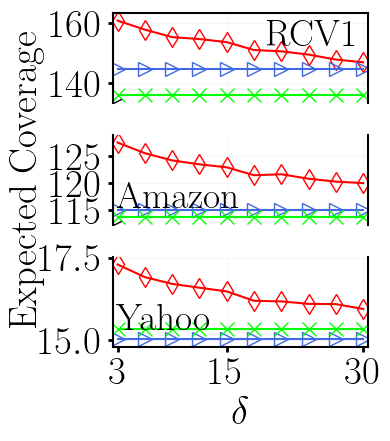}
            \subcaption{}
            \label{fig:delta_appendix}
        \end{minipage}
        \hspace{0.04\textwidth} %
        \begin{minipage}[t]{0.3\textwidth}
            \centering
            \includegraphics[width=\textwidth]{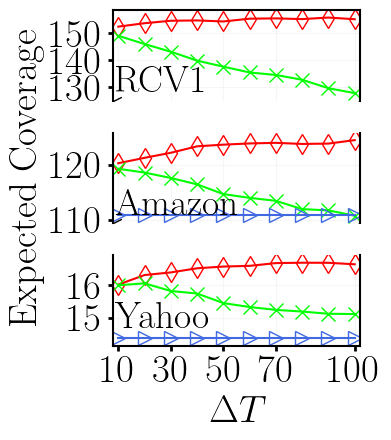}
            \subcaption{}
            \label{fig:DeltaT_appendix}
        \end{minipage}
    \end{minipage}
    \caption{Impact of $\Delta T$ and $\delta$ on expected coverage on three larger datasets. In (a) and (b), we fix $k = 5$ and $\numvisit = 10$. When varying $\delta$, we fix $\Delta T = 60$; when varying $\Delta T$, we fix $\delta = 10$.  }
    \label{fig:combined_figures_appendix}
\end{figure}

\begin{figure*}[tbp]
	\centering
	\includegraphics[width=0.98\textwidth]{fig/legend_only_whole.png}

    \begin{subfigure}[b]{0.31\textwidth}
		\centering
		\includegraphics[width=\textwidth]{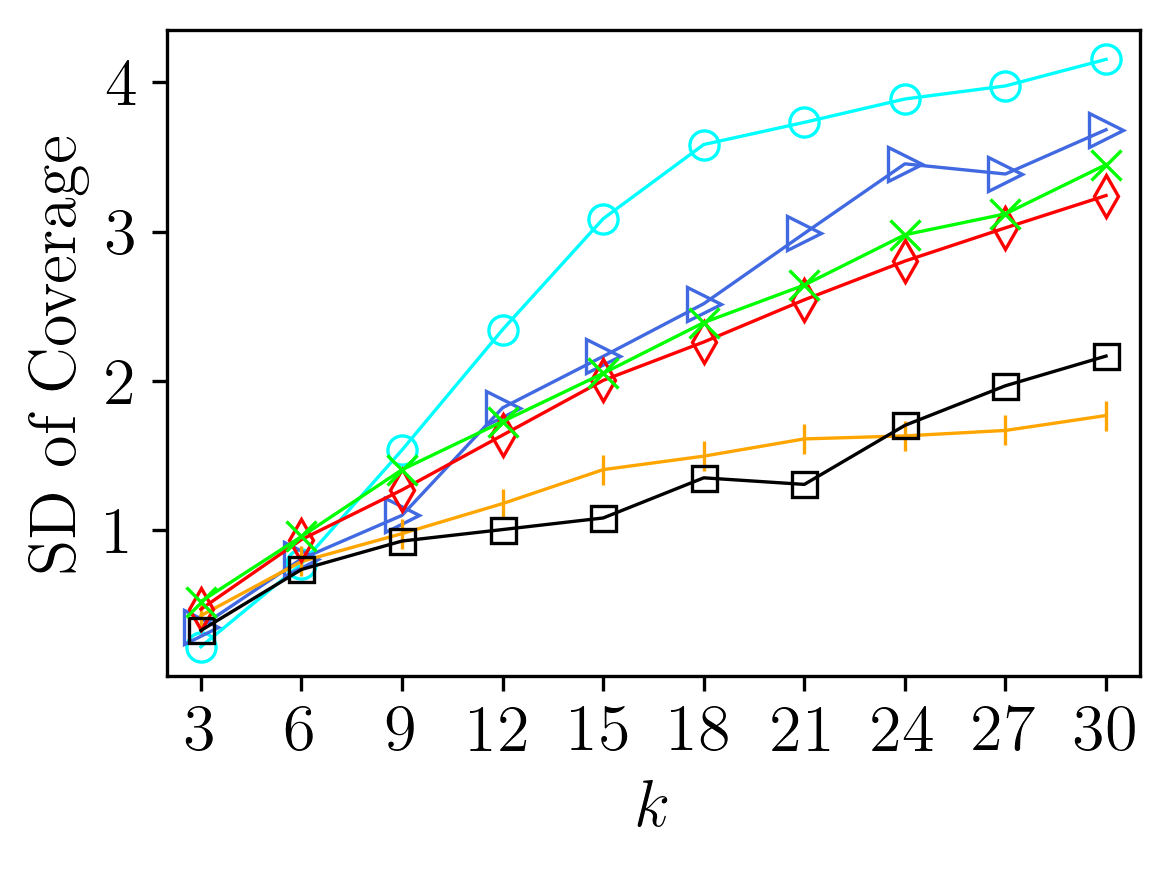}
		\caption{\KuaiRec}
		\label{fig:kuai_budget_std}
	\end{subfigure}
	\hfill
	\begin{subfigure}[b]{0.31\textwidth}
		\centering
		\includegraphics[width=\textwidth]{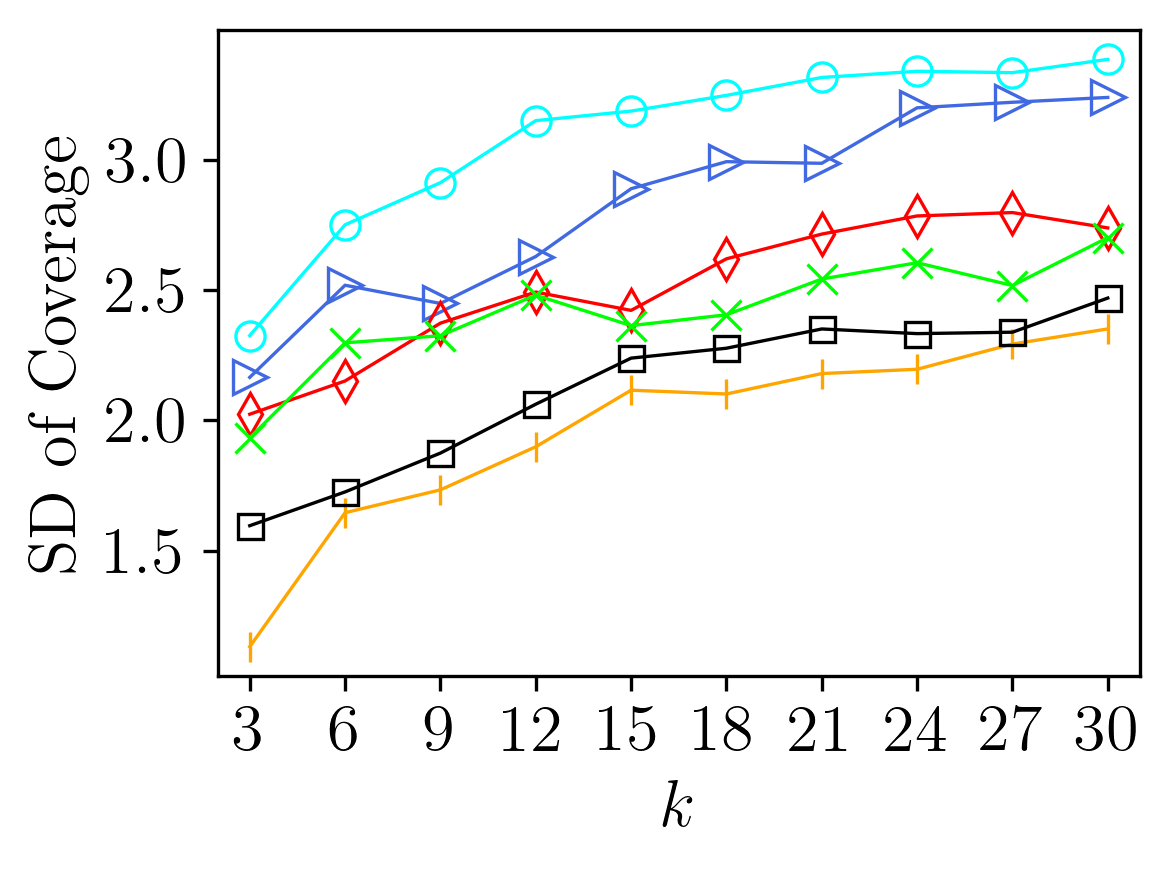}
		\caption{\Anime}
		\label{fig:anime_budget_std}
	\end{subfigure}
	\hfill
	\begin{subfigure}[b]{0.31\textwidth}
		\centering
		\includegraphics[width=\textwidth]{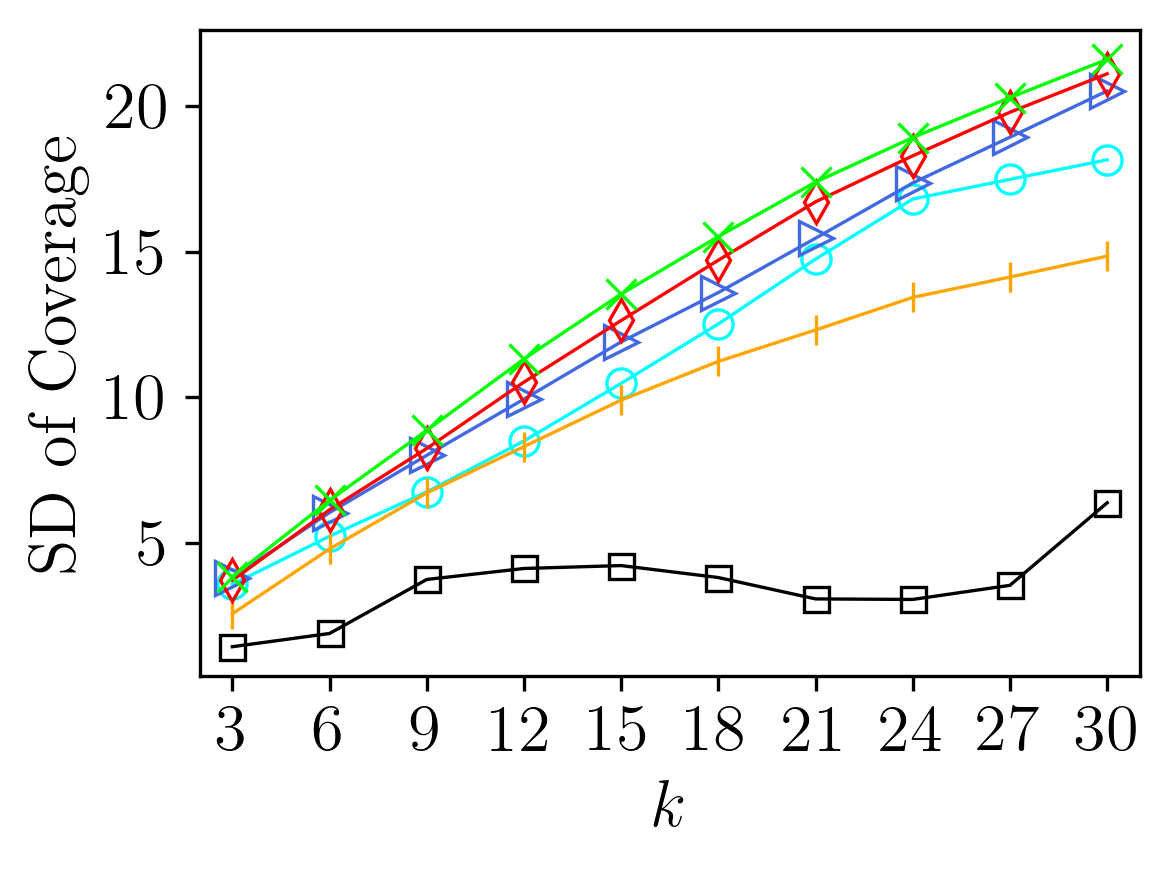}
		\caption{\Beer}
		\label{fig:beer_budget_std}
	\end{subfigure}
    
	\begin{subfigure}[b]{0.31\textwidth}
		\centering
		\includegraphics[width=\textwidth]{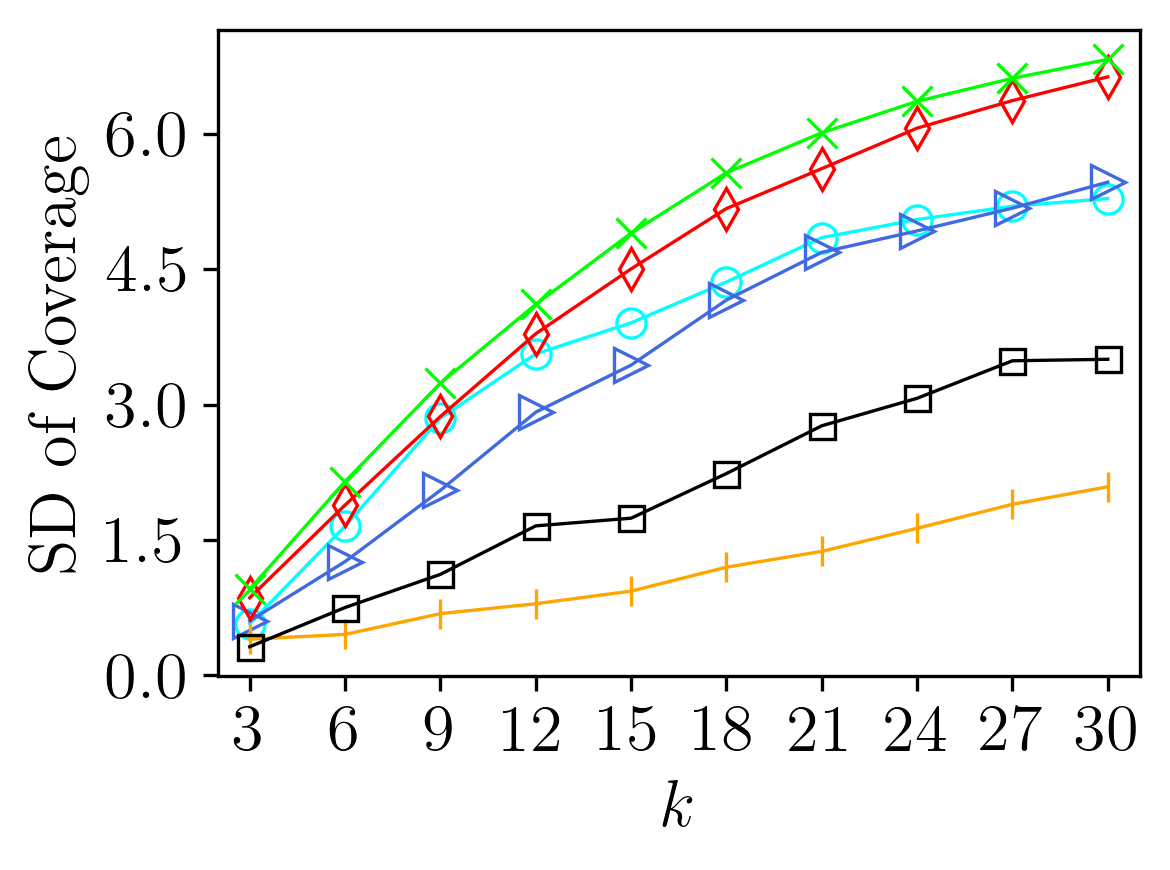}
		\caption{\Yahoo}
		\label{fig:yahoo_budget_std}
	\end{subfigure}
	\hfill
	\begin{subfigure}[b]{0.31\textwidth}
		\centering
		\includegraphics[width=\textwidth]{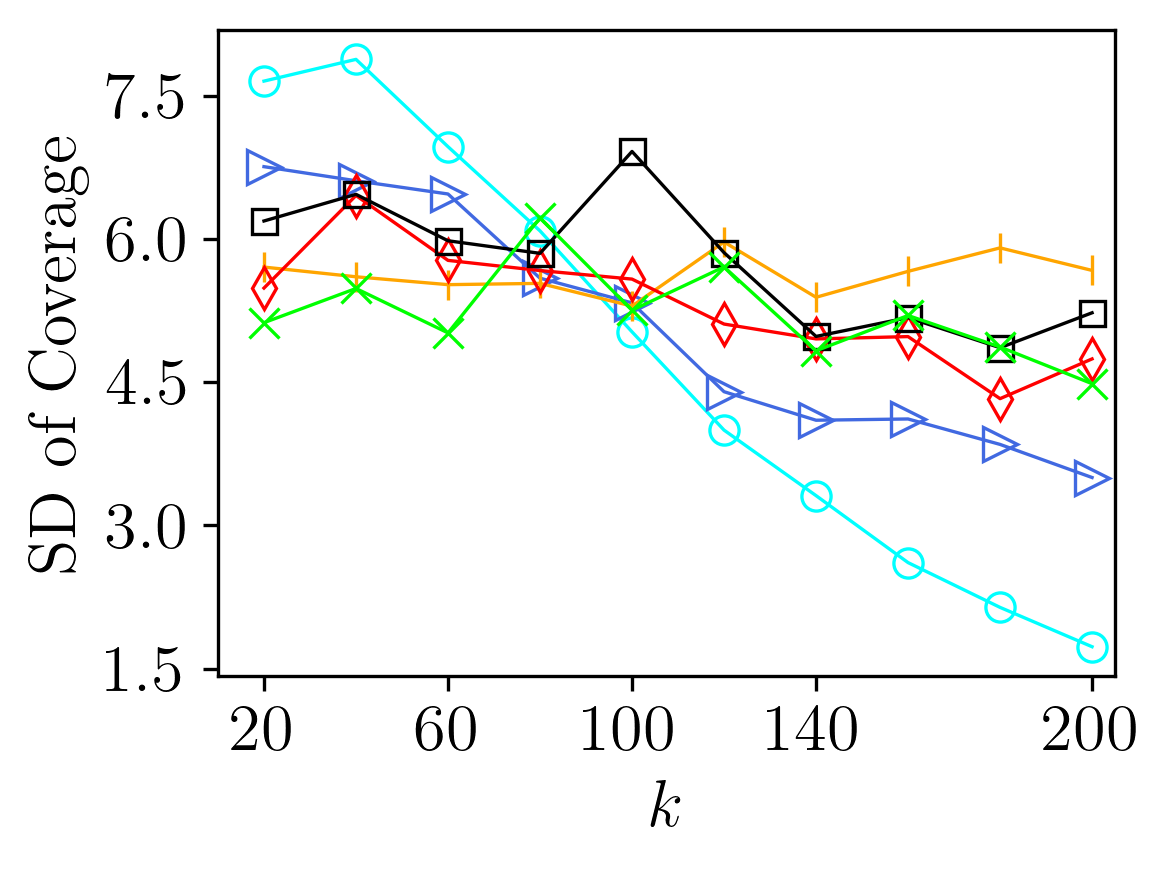}
		\caption{\rcv}
		\label{fig:rcv1_budget_std}
	\end{subfigure}
	\hfill
	\begin{subfigure}[b]{0.31\textwidth}
		\centering
		\includegraphics[width=\textwidth]{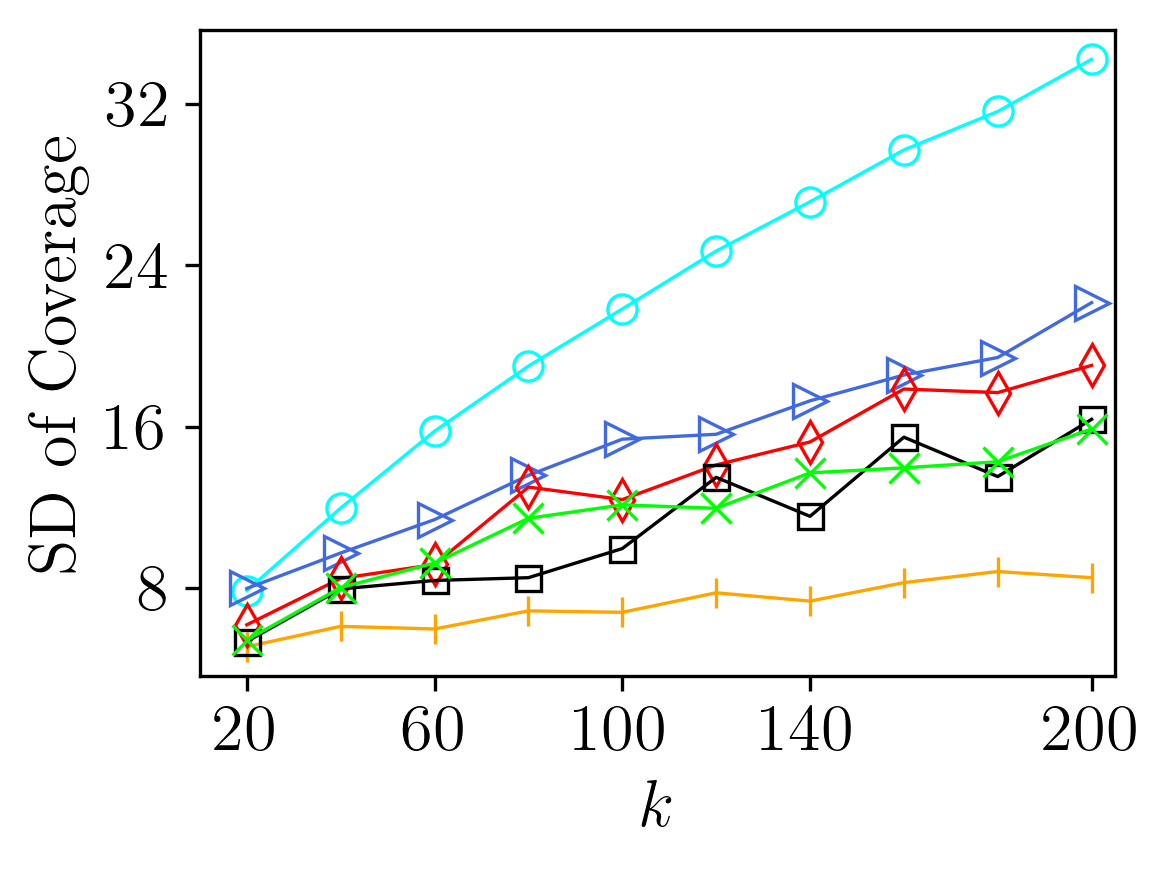}
		\caption{\amazon}
		\label{fig:amazon_budget_std}
	\end{subfigure}

	\caption{Empirical variation in standard deviation for the expected coverage as a function of budget $k$ on all datasets. Parameter $T = 5$, $\delta = 25$ and $\Delta \numvisit = 45$ are fixed.
    } %
	\label{fig:vary_budget_std}
	
\end{figure*}

\begin{figure*}[tbp]
	\centering
	\includegraphics[width=0.98\textwidth]{fig/legend_only_whole.png}

    	\begin{subfigure}[b]{0.31\textwidth}
		\centering
		\includegraphics[width=\textwidth]{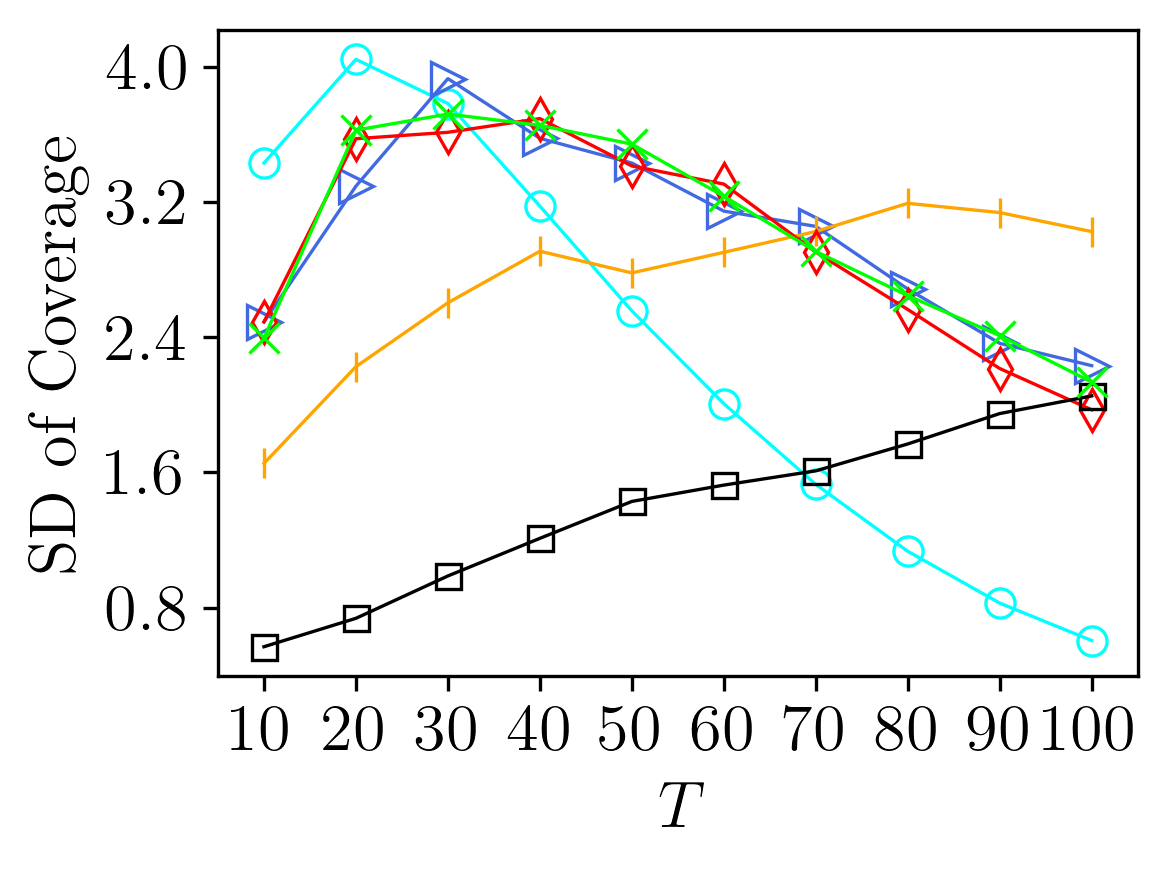}
		\caption{\KuaiRec}
		\label{fig:kuai_T_std}
	\end{subfigure}
	\hfill
	\begin{subfigure}[b]{0.31\textwidth}
		\centering
		\includegraphics[width=\textwidth]{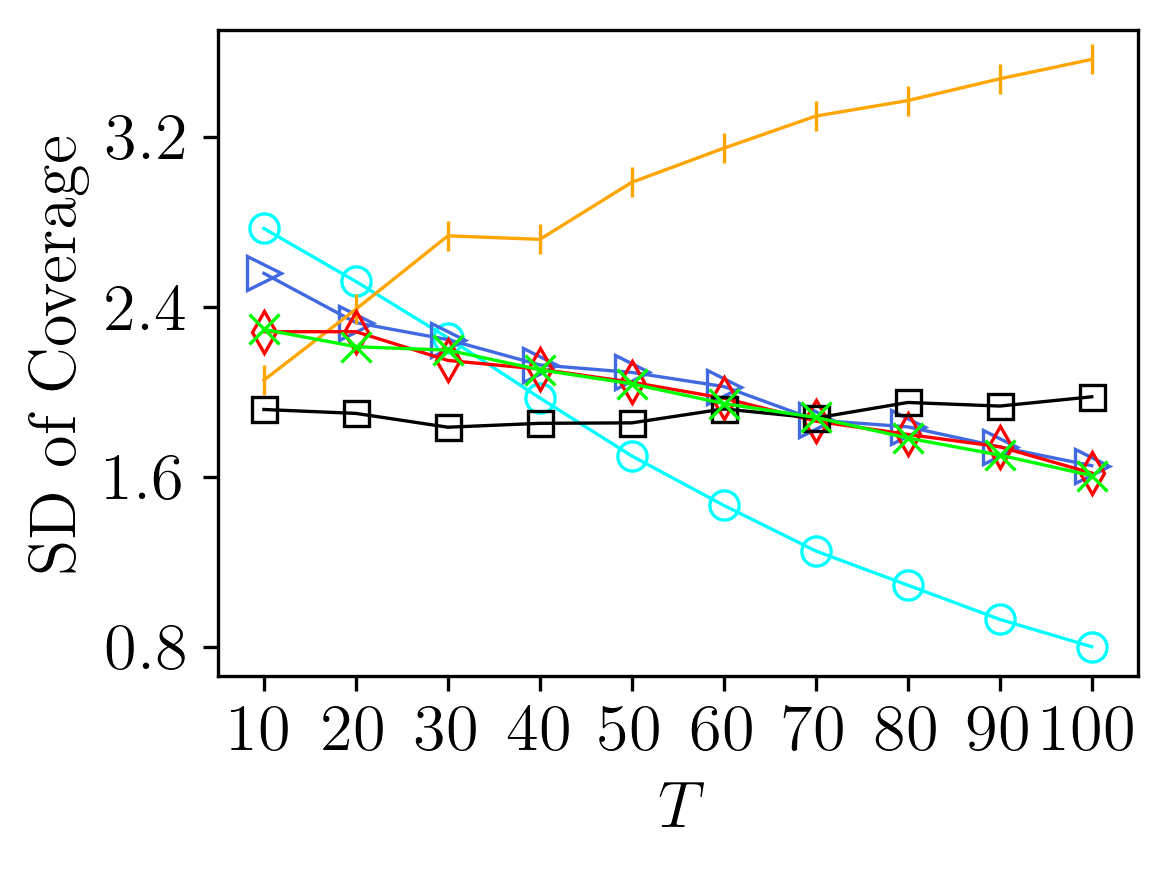}
		\caption{\Anime}
		\label{fig:anime_T_std}
	\end{subfigure}
	\hfill
	\begin{subfigure}[b]{0.31\textwidth}
		\centering
		\includegraphics[width=\textwidth]{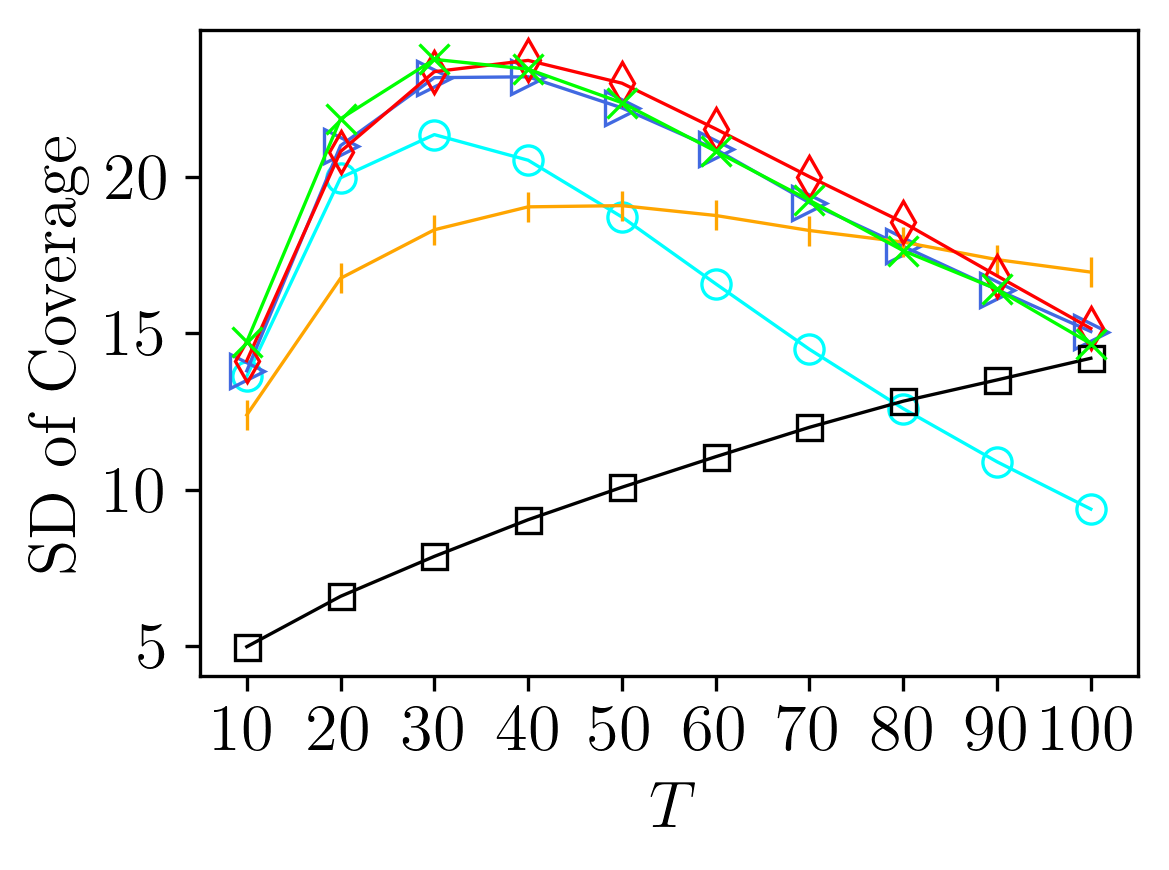}
		\caption{\Beer}
		\label{fig:beer_T_std}
	\end{subfigure}

    \begin{subfigure}[b]{0.31\textwidth}
		\centering
		\includegraphics[width=\textwidth]{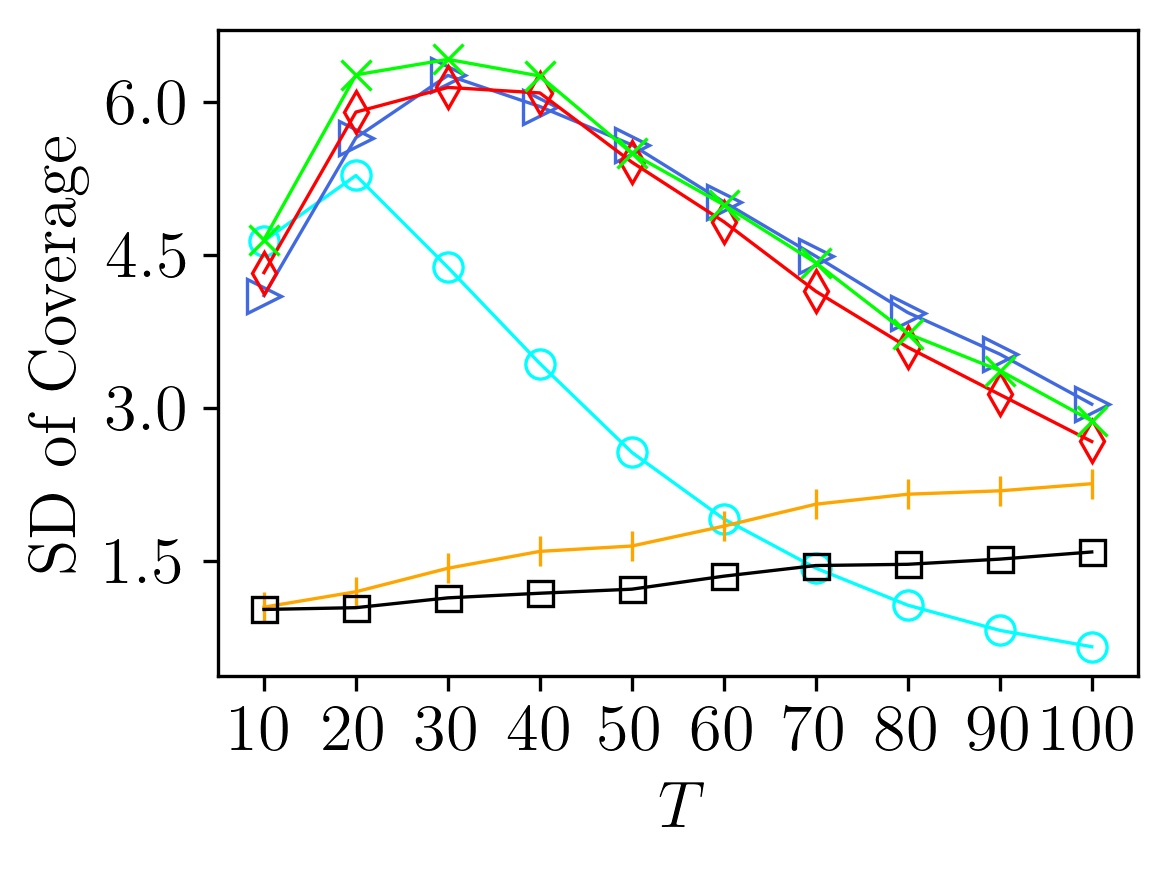}
		\caption{\Yahoo}
		\label{fig:yahoo_T_std}
	\end{subfigure}
	\hfill
	\begin{subfigure}[b]{0.31\textwidth}
		\centering
		\includegraphics[width=\textwidth]{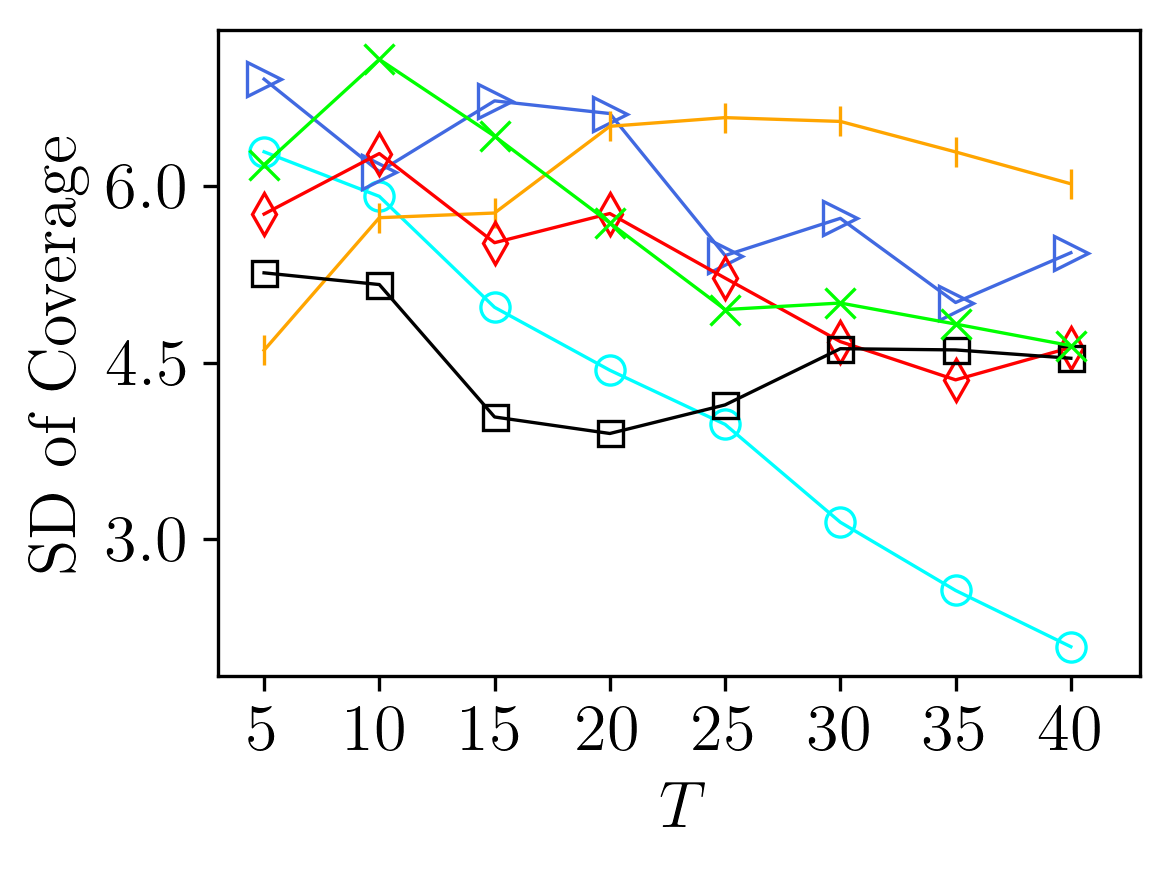}
		\caption{\rcv}
		\label{fig:rcv1_T_std}
	\end{subfigure}
	\hfill
	\begin{subfigure}[b]{0.31\textwidth}
		\centering
		\includegraphics[width=\textwidth]{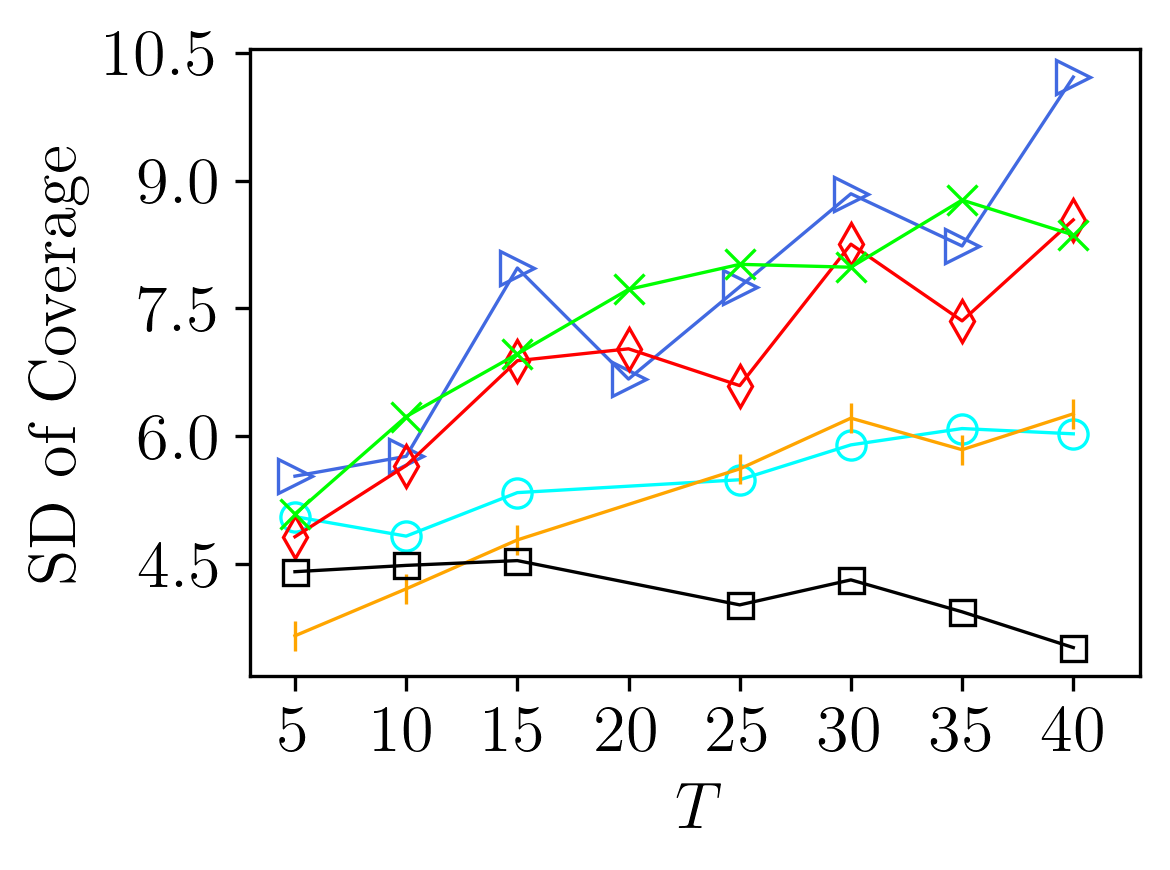}
		\caption{\amazon}
		\label{fig:amazon_T_std}
	\end{subfigure}

	\caption{Empirical variation in standard deviation for the expected coverage as a function of the number of user visits $T$ on all datasets. Parameter $k = 10$, $\delta = 10$ and $\Delta \numvisit = 10$ are fixed.
    } %
	\label{fig:vary_T_std}
	
\end{figure*}

\clearpage

\end{document}